\documentclass[11pt]{article}
\usepackage[letterpaper,margin=1in]{geometry}
\usepackage{notations}
\definecolor{niceRed}{RGB}{190,38,38}
\definecolor{Red2}{RGB}{219, 50, 54}
\definecolor{mgreen}{HTML}{9ECA8D}
\definecolor{blueGrotto}{HTML}{059DC0}
\definecolor{limeGreen}{HTML}{81B622}
\definecolor{myellow}{rgb}{0.88,0.61,0.14}
%\definecolor{navyBlueP}{HTML}{315794}
\definecolor{navyBlueP}{HTML}{03468F}
\definecolor{Sepia}{HTML}{7F462C}
\definecolor{red2}{HTML}{1F462C}
\definecolor{orange2}{HTML}{FF8000}
\definecolor{mgray}{HTML}{ABB3B8}
\definecolor{myPurple}{RGB}{175,0,124}
\definecolor{royalBlue}{HTML}{057DCD}
\definecolor{mpink}{HTML}{FC6C85}

\hypersetup{
	colorlinks = true,
	linkcolor = niceRed,
	citecolor = royalBlue,
	linktocpage = true,
	urlcolor = mpink
}

\title{Feature-Based Online Bilateral Trade}

\author{
    Solenne Gaucher$^\#$\quad
    Martino Bernasconi$^\dagger$ \quad
    Matteo Castiglioni$^\ddagger$ \\
    Andrea Celli$^\dagger$ \quad
    Vianney Perchet$^{\#,\ast}$  \vspace{6mm}\\
    $^\dagger$\ Bocconi university\\
    $^\ddagger$\ Politecnico di Milano\\
    $^\ast$\ Criteo AI Lab\\
    $^\#$\ ENSAE\vspace{2mm}\\
    {\textcolor{black}{\small\texttt{\{martino.bernasconi,andrea.celli2\}@unibocconi.it}, \quad \texttt{matteo.castiglioni@polimi.it,}}}\\
    {\textcolor{black}{\small\texttt{vianney.perchet@normalesup.org}\quad\small\texttt{solenne.gaucher@ensae.fr}}}
}
\date{}
\begin{document}

\maketitle
\thispagestyle{empty}

% The abstract
\begin{abstract}
    Bilateral trade models the problem of facilitating trades between a seller and a buyer having private valuations for the item being sold.
    In the online version of the problem, the learner faces a new seller and buyer at each time step, and has to post a price for each of the two parties without any knowledge of their valuations. 
    We consider a scenario where, at each time step, before posting prices the learner observes a context vector containing information about the features of the item for sale. The valuations of both the seller and the buyer follow an unknown linear function of the context. In this setting, the learner could leverage previous transactions in an attempt to estimate private valuations. 
    We characterize the regret regimes of different settings, taking as a baseline the best context-dependent prices in hindsight. First, in the setting in which the learner has two-bit feedback and strong budget balance constraints, we propose an algorithm with $O(\log T)$ regret. Then, we study the same set-up with noisy valuations, providing a tight $\widetilde O(T^{\nicefrac23})$ regret upper bound.
    Finally, we show that loosening budget balance constraints allows the learner to operate under more restrictive feedback. Specifically, we show how to address the one-bit, global budget balance setting through a reduction from the two-bit, strong budget balance setup.
    This established a fundamental trade-off between the quality of the feedback and the strictness of the budget constraints. 
\end{abstract}

%The title page has no number and only contains title and abstract
\clearpage

\section{Introduction}

Bilateral trade models scenarios in which a seller and a buyer, both having a private valuation for a good, are interested in trading it in an attempt to maximize their respective utilities \cite{Vickrey61,MyersonS83}.
%
%Solving this problem involves designing a mechanism that intermediates between the two agents to facilitate the trade. The mechanism should aim to maximize efficiency, meaning it should enable trade whenever the buyer's valuation exceeds the seller's. Additionally, it must ensure incentive compatibility, so agents act according to their true preferences, and individual rationality, guaranteeing that each agent's utility for participating is non-negative. Moreover, the mechanism should be budget-balanced, ensuring it is economically viable (\ie the mechanism should not subsidize the agents).
%    
We study the online bilateral trade problem introduced by  \citet{CesaBianchiCCF21}. At each time $t$, a new seller and buyer arrive, each with private valuations $s_t$ and $b_t$, respectively. The seller's valuation $s_t$ is the lowest price they are willing to accept for the item. Analogously, the buyer's valuation $b_t$ represents the highest price they are willing to pay for the item. 
The learner, without any knowledge about the private valuations at the current time $t$, posts two (possibly randomized) prices: $p_t$ to the seller and $q_t$ to the buyer. A trade happens when $s_t \le p_t$ and $ q_t \le b_t$, so both agents agree to trade. 
The \textit{gain from trade} for a pair of prices $(p,q)$ at time $t$ is 
\begin{equation}\label{eq:gft def}
    \gft_t(p,q) \defeq \ind{s_t \le p} \ind{q\le b_t} (b_t - s_t).
\end{equation}
This corresponds to the increase in social welfare generated by the trade. The goal of the learner is to maximize the overall gain from trade or, equivalently, minimize the regret with respect to the best policy in hindsight.
A key challenge in the online bilateral trade problem is the inherently limited nature of the feedback: under \textit{two-bit feedback}, the learner receives feedback $(\indicator{s_t\le p_t},\indicator{b_t\ge q_t})$, and under \textit{one-bit feedback} the learner only observes whether the trade happened or not, that is $\indicator{s_t\le p_t}\cdot \indicator{b_t\ge q_t}$. Both types of feedback are significantly less informative than the traditional bandit feedback, since the learner cannot even reconstruct the GFT received for the prices it posted.

% extention 
In the standard model of bilateral trade, the platform lacks information about the seller, buyer, or the item being sold. However, this scenario is unrealistic in practice, where some information is usually available to the learner. In contexts like online marketplaces, products are often highly differentiated, and the learner can observe some product features upfront. Then, the learner could base pricing decisions on the features of the current product. This would enable the learner to utilize past price and trade data to estimate the values associated with each feature, informing future pricing decisions.

We introduce the \textit{feature-based} online bilateral trade problem, in which the learner observes a feature vector $x_t\in\R^d$ before posting prices for round $t$. Following the classic feature-based dynamic pricing set-up (see, \eg \cite{Cohen_feature_DP,javanmard2017perishability,javanmard2019dynamic,keskin2014dynamic,xu2021logarithmic}), we study the setting in which private valuations are of the form $x_t^\top \theta + \xi_t$, where $\theta\in\R^d$ is an unknown vector denoting the importance of each feature, and $\xi_t$ is an i.i.d.~noise term. Feature vectors $x_t$ are chosen adversarially. This ensures that our solution is robust to scenarios where features are correlated and where the set of relevant features evolves over time.

\subsection{Our contributions}
In this paper, we introduce the feature-based online bilateral trade model, and characterize the regret for various scenarios with adversarially generated feature vectors. We provide the following results for the case in which the learner has two-bit feedback and strong budget balance constraints (the learner has to set the same price to the seller and to the buyer, \ie $p_t = q_t$ at each round $t$): 
\begin{OneLiners}
\item First, we consider the deterministic setup where, at each $t$, the seller's valuation is $s_t=x_t^\top \theta^s$ and the buyer's valuation is $b_t=x_t^\top\theta^b$. We show that in this case, it is possible to adapt techniques proposed by \citet{Cohen_feature_DP} in the context of feature-based dynamic pricing. The main difference here is that we need to maintain separate ellipsoidal uncertainty sets for the seller and the buyer, respectively. Our analysis yields a regret of order $O(\log T)$ (\Cref{thm:no_noise_2_bits}).

\item Second, we consider the case of noisy valuations in which $s_t=x_t^\top\theta^s +\xi_t^s$ and $b_t=x_t^\top\theta^b+\xi_t^b$, where $\xi_t^s$ and $\xi_t^b$ are i.i.d.~noise terms independent from $x_t$, with bounded support and densities. 
We start by decomposing the expected gain-from-trade into components that can be individually estimated using two-bit feedback (\Cref{lem:EGFT}). Then, we describe an explore-\emph{or}-commit (EOC) algorithm with regret $\widetilde O(T^{\nicefrac34})$ (\Cref{thm:etc}) (this algorithm will be employed in the subsequent reduction to the one-bit setting). 
Finally, we devise a \textsc{Scouting Bandits with Information Pooling} algorithm, which makes a more efficient use of the information collected during the learning phase. This algorithms achieves a regret $\widetilde O(T^{\nicefrac23})$(\Cref{thm:noise_2_bits}),
which is minimax optimal up to poly-logarithmic factors and dependence on $d$. Indeed, there exists a matching $\Omega(T^{\nicefrac23})$ lower bound for the stochastic bilateral trade problem without features in the case of independent valuations with bounded densities and support \cite{CesaBianchiCCF21}.
\end{OneLiners}

Finally, we provide a general reduction that demonstrates how the difficulty of maintaining a per-round budget balance can be traded for the ability to operate under more demanding feedback conditions.
In particular, given a two-bit explore-or-commit algorithm guaranteeing strong budget balance and sublinear regret, we show that it is possible to construct a no-regret algorithm that works under one-bit feedback, and is global budget balanced (\Cref{thm: one bit regret}). The one-bit regret guarantees are dependent on a natural measure of the social welfare generated by the market.

\subsection{Related works}

\xhdr{Bilateral trade.} In the offline setting, \citet{MyersonS83} showed the existence of instances where a fully efficient mechanism that satisfies incentive compatibility, individual rationality, and budget balance does not exist. Subsequent research focused on finding approximately efficient mechanisms in the Bayesian setting \cite{BlumrosenD14,kang22fixed,Mcafee08,BlumrosenM16,brustle2017approximating,DengMSW21,fei2022improved}.

\xhdr{Online bilateral trade.} Cesa-Bianchi et al. \cite{CesaBianchiCCF21,cesa2024bilateral} study the case in which valuations are drawn i.i.d.~from some fixed unknown distribution and the learner has to enforce strong budget balance. They provide sublinear regret guarantees in the full-feedback setting, and under partial feedback when valuations are i.i.d.~samples from a smooth distribution, independently for the seller and the buyer. 
If the learner is only required to enforce \textit{weak budget balance} (\ie $p_t \le q_t$ for each $t$), then \citet{AzarFF22} provide an algorithm achieving a tight sublinear $2$-regret when the sequence of valuation is generated by an oblivious adversary.
\citet{CesaBianchiCCFL23} show that sublinear regret can be achieved beyond the i.i.d.~stochastic setting under a $\sigma$-smooth adversary model. \citet{bernasconi2023no} show that sublinear regret can be achieved in the fully adversarial setting if the learner enforces global budget balance constraints (\ie the constraint has to hold over the entire time horizon).

\xhdr{Feature-based pricing.} The one-dimensional version of the problem was introduced by \citet{Kleinberg2003}, while \citet{amin2014repeated} introduced the problem in the contextual, multi-dimensional set-up under i.i.d.~contexts. \citet{Cohen_feature_DP} study a model with adversarial contexts for which they provide $\widetilde O(d^2\log T)$ regret guarantees in the noiseless setting, improving over the $\widetilde O(\sqrt{T})$ guarantees obtainable with general-purpose contextual bandits algorithms \cite{agarwal2014taming}. Moreover, they provide $\widetilde{O}(T^{\nicefrac{2}{3}})$ regret guarantees for scenarios with adversarial contexts and additive noise generated from a known sub-Gaussian distribution.
Further improvements in the achievable rates were later provided by \citet{lobel2018multidimensional,leme2018contextual,liu2021optimal}. Further instantiations of the feature-based pricing model can be found in the survey of \citet{den2015dynamic}.

\section{Preliminaries}

\xhdr{Notations.}
For any $k\ge 1$, we compactly denote the set $\{1,2,\ldots,k\}$ as $\range{k}$.
We denote by $\LJ(H)$ the L\"owner-John ellipsoid of a convex body $H$ (\ie the minimal volume ellipsoid that contains $H$). %, and by $\cB_d(R)$ is the $d$-dimensional ball of radius $R$.
For a positive definite matrix $\bM\in\R^{d\times d}$ and $x\in\R^d$, we denote $\sqrt{x^\top\bM x}$ by $\|x\|_{\bM}$.
In our proof sketches, we write $A_T\lesssim B_T$ if there exists a problem-dependent constant $c$ such that $A_T \leq c B_T$.

\xhdr{Learning protocol.}
At each round $t$, a new buyer and a seller arrive, characterized by private valuations $b_t$ and $s_t$, respectively. After observing a feature vector $x_t\in\R^d$, the learner posts two prices: $p_t$ to the seller and $q_t$ to the buyer. The trade happens if both agents accept the proposed prices, that is $s_t\le p_t$ and $q_t\le b_t$. When the trade happens the learner is awarded the gain from trade $\gft_t(p_t,q_t)$, defined as per \Cref{eq:gft def}. Feature vectors $x_t$ are generated by an oblivious adversary.

\xhdr{Valuations.} We consider two models to describe how features impact on the seller's and buyer's valuations. In the first setting, valuations are a \textbf{deterministic} function of the context. In particular, at each round $t\in\range{T}$, the valuations of the seller and of the buyer are given by 
\begin{equation}\label{eq:2bit_no_noise}
s_t = x_t^{\top}\theta^s\quad\textnormal{ and }\quad b_t=x_t^\top\theta^b,
\end{equation}
respectively, where $\theta^s,\theta^b\in\R^d$ are unknown to the learner. We assume that both the contexts $x_t$ and the parameters $\theta^b$ and $\theta^s$ are bounded.
\begin{assumption}\label{ass:bounded_context}
There exist $A,B\ge 0$ such that $\max(\Vert \theta^s\Vert, \Vert \theta^b\Vert) \leq A$, and $\Vert x_t\Vert \leq B$.
\end{assumption}
\vspace{-0.5mm}
In the second setting, we consider \textbf{noisy} valuations. In particular, we assume that at each round $t$ the valuations are given by
\begin{equation}
    b_t = x_t^{\top}\theta^b + \xi_t^b \quad \text{ and }\quad s_t = x_t^{\top}\theta^s + \xi_t^s,\label{eq:2bit_noise}
\end{equation}
where $\xi_t^b$ and $\xi_t^s$ are i.i.d.~centered noise terms independent from the context $x_t$, with densities $f^b$ and $f^s$, respectively. We denote by $F^s$ the c.d.f.~of $\xi_t^s$, and by $D^b$ the demand function of $\xi_t^b$: for all $x\in \mathbb{R}$, $D^b(x) \defeq \mathbb{P}\left(\xi_t^b \geq x\right)$. We make the following assumption on the densities of $\xi^s_t$ and $\xi^b_t$.
\begin{assumption}\label{ass:bounded_densities}
    Both $\xi_t^s$ and $\xi_t^b$ have bounded support in $[-C,C]$, and have densities bounded by $L$.
\end{assumption}
\vspace{-0.5mm}
We observe that under Assumption \ref{ass:bounded_context} and \ref{ass:bounded_densities}, the valuations of the seller and the buyer are bounded in $[-P, P]$, where $P \defeq C + AB$.\footnote{
For simplicity, we work within $[-P, P]$. If non-negative prices are required, a translation is sufficient.} The bounded densities assumption is standard in repeated bilateral trade problems with stochastic valuations, and there exists linear regret lower bounds for the case in which this assumption is lifted (see \citet[Theorem 6]{cesa2024bilateral}).

\xhdr{Feedback models.} After posting prices $(p_t,q_t)$, the learner does not observe directly the $\gft$ or the valuations, but receives some feedback $z_t$ about the transaction. We focus on two feedback models: (i) \textit{Two-bit feedback}, where both the buyer and the seller reveal their willingness to accept the prices offered by the learner (\ie $z_t$ is composed by the two bits $(\ind{s_t \le p_t}, \ind{q_t \le b_t})$); (ii) \textit{One-bit feedback}: the learner only observes whether the trade happened or not (\ie $z_t = \ind{s_t \le p_t} \cdot\ind{q_t \le b_t}$). Both feedback models have the desirable property of revealing minimal information about agent's private valuations. 

% \begin{remark}\ma{if we need space, this can go}
%     We do not consider the setting with full feedback (\ie $z_t = (s_t, b_t$) since $\theta^b$ and $\theta^s$ can be easily identified once $(x_i)_{i\leq t}$ spans $\R^n$. Even before that point, if \(x_t \in \textnormal{span}\{(x_i)_{i < t}\}\), then \(x_t^{\top}\theta^b\) and \(x_t^{\top}\theta^s\) can be recovered.
% \end{remark}

\xhdr{Regret.} Our objective is to develop dynamic policies that perform well in terms of minimizing regret. In the deterministic setting, the worst-case regret for learning algorithm $\algA$ is defined as 
\[
R_T(\algA)\defeq\max_{\substack{(\theta^s,\theta^b)\in [-A,A], \vx\in[-C,C]^T}} \sum_{t\in\range{T}}\mleft(\mleft[x_t^\top(\theta^b-\theta^s)\mright]^+-\gft_t(p_t,q_t)\mright).
\]
%In the noisy model, the regret definition above becomes intractable and no algorithm can achieve sublinear regret. 
In the noisy model, we compare against a benchmark that maximizes the expected gain from trade. Let $\egft(x, p, q)\defeq \mathbb{E}\left[\gft_t(p,q)\,\vert\, x_t = x\right]$. The pseudo-regret for the noisy setting is defined as
\[
R_T(\algA) \defeq \sum_{t\in\range{T}}\max_{\substack{(p,q)\in[-P,P]^2\\p\le q}} \egft(x_t, p,q) - \sum_{t\in\range{T}}\egft(x_t, p_t, q_t).
\]
A property that directly follows by definition is that, for any feature vector, there exists a pair of identical prices that maximize the expected gain from trade. To simplify the notation, we omit the argument $q$ of $\gft_t$ and $\egft$ if the same price is posted to both agents.
%%%%%%%%%%%%%%%%%%%%%%%%%%%%%%%%%%%%%%%%%%%%%%%%%%%%%%%%%%%%
\section{Warm-up: two-bit feedback, strong budget balance, no noise}\label{sec:deterministic}

In this section, we consider the simplest set-up in terms of the information available to the learner, by focusing on the setting with two-bit feedback and deterministic valuations following \Cref{eq:2bit_no_noise}. We show that it is possible to adapt the \textsc{EllipsoidPricing} algorithm, originally proposed by \citet{Cohen_feature_DP} for feature-based dynamic pricing, to our bilateral trade problem. The key distinction lies in managing separate uncertainty sets for the seller and the buyer, while carefully setting prices consistent with both estimations. In this set-up, we can update these uncertainty sets separately.

%\paragraph{Model} We consider the problem of bilateral trade with 2-bit feedback and deterministic valuations. More precisely, we assume that the valuations of the buyer and the seller are given by
%begin{align}\label{eq:2bit_no_noise}
%    b_t = x_t^{\top}\theta^b \qquad \text{ and }\qquad s_t = x_t^{\top}\theta^s.
%\end{align}

% We adapt the \textsc{EllipsoidPricing} algorithm developed in \cite{Cohen_feature_DP} for contextual dynamic pricing with linear valuations to the problem of bilateral price at hand.

\Cref{alg:EPBT} describes the main step of \textsc{EllipsoidPricing for Bilateral Trade (EP-BT)}.  
The algorithm maintains two ellipsoidal uncertainty sets $E^s_t$ and $E^b_t$ (see \Cref{app:ellipsoid} for some basic facts about ellipsoids). The key idea is that, at each update, the volume of an uncertainty shrinks ``fast enough'' to yield a good estimate of the true parameter in a small number of iterations.
The problems of computing the maximum and minimum possible valuations of the seller and buyer (Line \ref{algline:highest lowerst val}) admit a simple closed-form solution (see, \eg, \citet[Chapter 3]{grotschel2012geometric}).
%
% At each round $t$, given the current uncertainty sets we compute the maximum and minimum possible valuations of the seller for the current product as
% \begin{equation}\label{eq:max min val}
% \textstyle\underline{s_t} = \min_{\theta \in E^s_t}x_t^{\top}\theta\quad\textnormal{and}\quad\overline{s_t} = \max_{\theta \in E^s_t}x_t^{\top}\theta. 
% \end{equation}
% The maximum and minimum possible valuations of the buyer, denoted by $\underline{b_t}$ and $\overline{b_t}$, are computed analogously. All of these optimization problems admit a simple closed form solution (see, \eg, \citet[Chapter 3]{grotschel2012geometric}).
%
If the smallest possible valuation for the buyer $\underline{b_t}$ is above the highest possible valuation for the seller $\overline{s_t}$ (Line \ref{algline:exploit}), the algorithm can post any price between these two values, ensuring that the trade will occur. 
If $\overline{s_t}$ and $\underline{s_t}$ are far apart (Line \ref{algline:explore_s}), the algorithm performs a binary search step (\textit{a.k.a.} ``explore'' step), and posts a price $p_t$ which is halfway between the seller's minimum and the maximum possible valuations.
Then, we update the seller's uncertainty set as follows: if the seller accepts the sale, it implies that $x_t^{\top}\theta^s \leq p_t$, and we define the half-ellipsoid $K_{t+1}^s = E^s_t \cap \{\theta :  x_t^{\top}\theta^s \leq p_t\}$. Otherwise, we set $K_{t+1}^s = E^s_t \cap \{\theta :  x_t^{\top}\theta^s > p_t\}$. Finally, we round this half-ellipsoid by replacing it by its L\"owner-John ellipsoid, \ie the ellipsoid containing $K_{t+1}^s$ with the smallest volume. We proceed similarly for the buyer (Line \ref{algline:explore_b}).
If the highest and lowest possible valuations of both the buyer and seller are close to each other, and the buyer's minimum valuation is lower than the seller's maximum valuation, then the difference of valuations between the two parties is small, resulting in a negligible gain from trade. Therefore, in this case, we can safely post an arbitrary price (Line \ref{algline:random_price}).

%%%%%%%%%%%%%%%%%%%%%%%%%%%%%%%%%%%%%%
\begin{algorithm}[b]
\caption{\textsc{EllipsoidPricing for Bilateral Trade (EP-BT)}}\label{alg:EPBT}
\begin{algorithmic}[1]
\State \textbf{Input} parameter $\epsilon>0$, bound $A$.
\State \textbf{Initialize} $K^{s}_1, K^{b}_1 \gets$ $d$-dimensional ball of radius $A$, $E^s_1 = \LJ(K^{s}_1)$ and $E^{b}_1=\LJ(K^{b}_1)$.
\For{ $t\in\range{T}$}
%\State Set $(\underline{s_t}, \overline{s_t})$ and $(\underline{b_t},\overline{b_t})$ as per \Cref{eq:max min val}
\State Set $\textstyle\underline{s_t} = \min_{\theta \in E^s_t}x_t^{\top}\theta\,\,\,\textnormal{and}\,\,\,\overline{s_t} = \max_{\theta \in E^s_t}x_t^{\top}\theta$ and compute $(\underline{b_t},\overline{b_t})$ analogously\label{algline:highest lowerst val}
\If{$\overline{s_t} < \underline{b_t}$} post price $p_t = (\overline{s_t}+\underline{b_t})/2$\label{algline:exploit}
\ElsIf{$\overline{s_t} - \underline{s_t} \geq \epsilon$}\label{algline:explore_s}
\State Post price $p_t = (\overline{s_t}+\underline{s_t})/2$\label{algline:explore_s_beg}
\State Update $K^s_{t+1}$ according to the seller's feedback, set $E^s_{t+1} = \LJ(K^s_{t+1})$ \label{algline:explore_s_end}
\ElsIf 
{$\overline{b_t} - \underline{b_t} \geq \epsilon$}\label{algline:explore_b}
\State Post price $p_t = (\overline{b_t}+\underline{b_t})/2$
\State Update $K^b_{t+1}$ according to the seller's feedback, set $E^b_{t+1} = \LJ(K^b_{t+1})$
\Else \label{algline:random_price} post price $p_t = (\overline{s_t}+\underline{b_t})/2$
\EndIf
\EndFor
\end{algorithmic}
\end{algorithm}
%%%%%%%%%%%%%%%%%%%%%%%%%%%%%%%%%%%
\begin{restatable}{theorem}{noNoiseEllipsoid}\label{thm:no_noise_2_bits}
Assume that the seller's and buyer's valuations follow \Cref{eq:2bit_no_noise}. Under \Cref{ass:bounded_context}, \Cref{alg:EPBT} with $\epsilon = ABd^2/T$ has regret bounded by
$
    R_T \leq 10ABd^2\log\left(20(d+1)Td^{-2}\right).
$
\end{restatable}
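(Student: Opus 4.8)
\emph{Proof sketch.} The plan is to keep the invariant that the true parameters always lie in the current uncertainty sets, and then bound the regret by partitioning the horizon according to which branch of \Cref{alg:EPBT} fires. The invariant $\theta^s\in E^s_t$ and $\theta^b\in E^b_t$ holds at $t=1$ because $\|\theta^s\|,\|\theta^b\|\le A$, and is preserved at every update: the two-bit feedback always reveals a halfspace containing the true parameter -- for instance, if the seller accepts $p_t$ then $x_t^\top\theta^s\le p_t$, so $\theta^s\in K^s_{t+1}\subseteq\LJ(K^s_{t+1})=E^s_{t+1}$, and symmetrically for a refusal or for the buyer. Hence $s_t\in[\underline{s_t},\overline{s_t}]$ and $b_t\in[\underline{b_t},\overline{b_t}]$ at every round; note also that EP-BT always posts a single price and is thus strongly budget balanced.

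Given the invariant, I would split the rounds into three types. (i) On \emph{guaranteed-trade} rounds (Line~\ref{algline:exploit}), $\overline{s_t}<\underline{b_t}$ and the posted price obeys $s_t\le\overline{s_t}<p_t<\underline{b_t}\le b_t$, so the trade occurs and the learner earns the full optimal gain from trade $[x_t^\top(\theta^b-\theta^s)]^+=b_t-s_t$; these rounds have zero regret. (ii) On \emph{negligible-gain} rounds (Line~\ref{algline:random_price}) none of the earlier tests passed, so $\overline{s_t}-\underline{s_t}<\epsilon$, $\overline{b_t}-\underline{b_t}<\epsilon$ and $\overline{s_t}\ge\underline{b_t}$, whence
\[
b_t-s_t\ \le\ \overline{b_t}-\underline{s_t}\ =\ (\overline{b_t}-\underline{b_t})+(\underline{b_t}-\overline{s_t})+(\overline{s_t}-\underline{s_t})\ <\ 2\epsilon,
\]
so the optimal gain from trade, hence the regret, is at most $2\epsilon$ on each such round, contributing at most $2\epsilon T=2ABd^2$ in total. (iii) On \emph{exploration} rounds the regret is at most the optimal gain from trade, which is at most $\|x_t\|(\|\theta^s\|+\|\theta^b\|)\le 2AB$.

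The crux is bounding the number of exploration rounds. A seller-exploration step posts $p_t=(\overline{s_t}+\underline{s_t})/2$, the midpoint of the range of $x_t^\top\theta$ over $\theta\in E^s_t$, so the cutting hyperplane $\{x_t^\top\theta=p_t\}$ passes through the center of $E^s_t$: $K^s_{t+1}$ is a \emph{central} half-ellipsoid and the standard volume-contraction estimate gives $\mathrm{vol}(E^s_{t+1})\le e^{-1/(2(d+1))}\mathrm{vol}(E^s_t)$ -- likewise for buyer-exploration steps and $E^b_t$. Since $\mathrm{vol}(E^s_1)=\mathrm{vol}(E^b_1)$ is the volume of the Euclidean ball of radius $A$, this geometric decay, together with the fact that exploration along a direction is triggered only while the corresponding ellipsoid still has width at least $\epsilon$ there (so exploration must cease once the uncertainty ellipsoids have shrunk to scale $\sim\epsilon/B$), yields -- by exactly the complexity analysis of the ellipsoid method used by \citet{Cohen_feature_DP}, applied separately to $E^s_t$ and $E^b_t$ -- a total of $\lesssim d^2\log\!\big(d\,AB/\epsilon\big)$ exploration rounds. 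Multiplying by the $2AB$ per-round cost, adding the $2ABd^2$ from type (ii), and substituting $\epsilon=ABd^2/T$ (so $AB/\epsilon=T/d^2$) gives regret $O\!\big(ABd^2\log(T/d)\big)$; bookkeeping the constants produces the stated $10ABd^2\log(20(d+1)Td^{-2})$.

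The main obstacle is step (iii): the volume estimate only says that each exploration step contracts one uncertainty set by a constant factor, and turning this into an $O(d^2\log T)$ bound on the \emph{number} of exploration rounds is precisely the delicate ellipsoid-method counting argument of \citet{Cohen_feature_DP}, which here must be run in parallel for the seller's and the buyer's ellipsoids. Everything else -- the invariant, the three-way case split, and the per-round regret bounds -- is elementary.
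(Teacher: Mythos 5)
Your proposal is correct and follows essentially the same route as the paper's proof: establish that $\theta^s\in E^s_t$ and $\theta^b\in E^b_t$ by induction, split rounds into guaranteed-trade / exploration / negligible-gap, bound the per-round loss in each case by $0$, $2AB$, and $2\epsilon$ respectively, and invoke Lemma 1 of \citet{Cohen_feature_DP} (applied separately to the normalized seller and buyer ellipsoids) to cap the number of exploration rounds at $O(d^2\log(AB(d{+}1)/\epsilon))$. The paper reaches the same bound by citing that lemma directly rather than reproving the volume-contraction counting argument, but the decomposition and per-case estimates are identical.
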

We observe that the algorithm explores only whenever the estimations are not precise enough. In the proof (\Cref{app:2bitnonoise}), we show that this happens in at most $O(\log T)$ rounds. %This will be useful in \Cref{sec:one bit} to handle one-bit feedback.
%%%%%%%%%%%%%%%%%%%%%%%%%%%%%%%%%%%%%%%%%%%%%%%%%%%%%%%%%%%%
%%%%%%%%%%%%%%%%%%%%%%%%%%%%%%%%%%%%%%%%%%%%%%%%%%%
\section{Noisy valuations with two-bit feedback}
\label{sec:noisy}

In this section, we consider bilateral trade with two-bit feedback and noisy valuations. We introduce key components that we exploit in the analysis, before integrating them into an explore-\emph{or}-commit framework which that ensures a regret $\widetilde{O}(T^{\nicefrac34})$. Finally, we show a careful re-design of the last phase of the algorithm yields a tight regret upper bound of $\widetilde{O}(T^{\nicefrac23})$. 
% In particular, we assume that the valuations of the buyer and the seller are given by
% \begin{align}
%     b_t = x_t^{\top}\thetab + \xi_t^b \qquad \text{ and }\qquad s_t = x_t^{\top}\thetas + \xi_t^s,\label{eq:2bit_noise}
% \end{align}
% where $\xi_t^b$ and $\xi_t^s$ are i.i.d. centered noise terms independent from the context $x_t$, with densities $f^b$ and $f^s$, respectively. We denote by $F^s$ the c.d.f. of $\xi_t^s$, and by $D^b$ the demand function of $\xi_t^b$: for all $x\in \mathbb{R}$, $D^b(x) = \mathbb{P}\left(\xi_t^b \geq x\right)$. In the following, we assume that the contexts and the parameters are bounded as per \Cref{ass:bounded_context}. Moreover, we assume that the densities of $\xi^s_t$ and $\xi^b_t$ are bounded, and have bounded support.
% \begin{assumption}\label{ass:bounded_densities}
%     The noise terms $\xi_t^s$ and $\xi_t^b$ have bounded support in $[-C,C]$, and have densities bounded by $L$.
% \end{assumption}
% We observe that, under Assumption \ref{ass:bounded_context} and \ref{ass:bounded_densities}, the valuations of the seller and the buyer are bounded in $[-P, P]$, where $P \defeq C + AB$. The bounded densities assumption is standard in repeated bilateral trade problems with stochastic valuations, and there exist linear regret lower bounds for the case in which this assumption is lifted (see \citet[Theorem 6]{cesa2024bilateral}). \ac{add comment. we consider [-P,P] for simplicity, we can always translate by P if we want to avoid negative prices.}

We rely on the following decomposition lemma, which extends Lemma 1 of \citet{cesa2024bilateral} to contextual settings. The proof can be found in \Cref{aux:egft dec}.

%\vspace{-2mm}
\begin{restatable}{lemma}{egftdec}\label{lem:EGFT} Under Assumptions \ref{ass:bounded_context} and \ref{ass:bounded_densities}, the expected gain from trade for $x$ at price $p$ is given by 
\begin{align}\label{eq:lemma_EGFT}
    \egft(x, p) = I(\delta^b) F^s(\delta^s) + J(\delta^s)D^b(\delta^b),
\end{align}
where $\delta^s \defeq p-x_t^{\top}\theta^s$, $\delta^b \defeq p - x_t^{\top}\theta^b$, $I(\delta) \defeq\int_{\delta}^C D^b(u)du$, and $J(\delta) \defeq \int_{-C}^{\delta}F^s(u)du$.
\end{restatable}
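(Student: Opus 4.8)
The plan is to split the surplus as $b_t - s_t = (p - s_t) + (b_t - p)$ and to exploit the independence of $\xi^s_t$ and $\xi^b_t$ (and their independence from $x_t$) to factor the two resulting expectations into a product of a ``seller'' quantity and a ``buyer'' quantity. Concretely, fix $x_t = x$ and write $\delta^s = p - x^\top\theta^s$ and $\delta^b = p - x^\top\theta^b$, so that $\ind{s_t \le p} = \ind{\xi^s_t \le \delta^s}$, $\ind{p \le b_t} = \ind{\xi^b_t \ge \delta^b}$, $p - s_t = \delta^s - \xi^s_t$, and $b_t - p = \xi^b_t - \delta^b$. Since $\ind{\xi^s_t \le \delta^s}(\delta^s - \xi^s_t) = (\delta^s - \xi^s_t)^+$ and likewise $\ind{\xi^b_t \ge \delta^b}(\xi^b_t - \delta^b) = (\xi^b_t - \delta^b)^+$, independence yields
\begin{align*}
    \egft(x,p)
    &= \mathbb{E}\big[(\delta^s - \xi^s_t)^+\,\ind{\xi^b_t \ge \delta^b}\big] + \mathbb{E}\big[\ind{\xi^s_t \le \delta^s}\,(\xi^b_t - \delta^b)^+\big] \\
    &= \mathbb{E}\big[(\delta^s - \xi^s_t)^+\big]\, D^b(\delta^b) + F^s(\delta^s)\, \mathbb{E}\big[(\xi^b_t - \delta^b)^+\big].
\end{align*}

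It then remains to identify the two truncated expectations with $J(\delta^s)$ and $I(\delta^b)$. For this I would use the layer-cake identity $\mathbb{E}[Y^+] = \int_0^\infty \mathbb{P}(Y > u)\,du$: with $Y = \delta^s - \xi^s_t$ this gives $\int_0^\infty F^s(\delta^s - u)\,du$ (the density of $\xi^s_t$ makes $\mathbb{P}(\xi^s_t < \cdot)$ and $\mathbb{P}(\xi^s_t \le \cdot)$ coincide), and the change of variable $v = \delta^s - u$ together with $F^s \equiv 0$ on $(-\infty,-C)$ (Assumption~\ref{ass:bounded_densities}) turns this into $\int_{-C}^{\delta^s} F^s(v)\,dv = J(\delta^s)$. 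Symmetrically, $\mathbb{E}\big[(\xi^b_t - \delta^b)^+\big] = \int_0^\infty D^b(\delta^b + u)\,du = \int_{\delta^b}^{C} D^b(w)\,dw = I(\delta^b)$, using that $D^b \equiv 0$ on $(C,\infty)$. Substituting these four identities into the display above produces exactly \Cref{eq:lemma_EGFT}.

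I do not expect a genuine obstacle here; the proof is short and the only two points that need care are (i) the legitimacy of the factorization, which is precisely where the independence of $\xi^s_t$, $\xi^b_t$ and their independence from $x_t$ in the noisy model are invoked, and (ii) the truncation of the two integrals at $\pm C$, which uses the bounded-support part of Assumption~\ref{ass:bounded_densities} (the bounded-density part is not needed for the identity itself, only to guarantee that $F^s$ and $D^b$ are continuous). Finally, one should note that the derivation goes through verbatim even when $\delta^s$ or $\delta^b$ lies outside $[-C,C]$, in which case the relevant factor is simply clipped to $0$ or $1$ and the corresponding integral formula still holds.
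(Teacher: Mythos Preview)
Your proof is correct. The split $b_t-s_t=(p-s_t)+(b_t-p)$ together with independence of $\xi^s_t$ and $\xi^b_t$ immediately factors $\egft$ into the two products, and the layer-cake identification of $\mathbb{E}[(\delta^s-\xi^s_t)^+]$ with $J(\delta^s)$ and of $\mathbb{E}[(\xi^b_t-\delta^b)^+]$ with $I(\delta^b)$ is clean and accurate; your handling of the boundary cases (continuity from densities, truncation at $\pm C$ from bounded support) is also correct.

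The paper takes a different, more computational route: it writes $\egft$ explicitly as a double integral against the joint density, represents the surplus as $b-s=\int_{-P}^b du-\int_{-P}^s du$, and then applies Fubini and several changes of variables to extract the two products. Your decomposition $(p-s_t)+(b_t-p)$ is the key simplification: because each summand depends on only one noise term, independence factors the expectation in one line, bypassing the Fubini manipulations entirely. The paper's approach has the minor advantage that it never names the positive-part or layer-cake identities explicitly (everything stays at the level of integrals), but your argument is shorter, more transparent about where independence enters, and makes the probabilistic content of the identity immediately visible.
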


\Cref{lem:EGFT} highlights the principal difficulties of our setting. 
It emphasizes that the expected gain from trade for a price $p$ depends on the pair of price increments $(\delta^s, \delta^b)$, or equivalently, on the pair $(\delta^s, \Delta_t)$, where the difference in average valuations $\Delta_t$ is defined as $\Delta_t \defeq x_t^{\top}(\theta^b - \theta^s)$.
This shows that items with the same $\Delta_t$ have the same optimal seller's price increment $\delta^\ast_t$: if $p = x^{\top}_t\theta^s + \delta^*_t$ maximizes $\egft(x_t,\cdot)$, and if $x_{t'}^{\top}(\theta^b - \theta^s) = x_t^{\top}(\theta^b - \theta^s)$, then $p' = x_{t'}^{\top}\theta^s + \delta^*_t$ maximizes $\egft(x_{t'},\cdot)$.
However, knowing the optimal increment $\delta^\ast_t$ for a specific $\Delta_t$ is not sufficient to determine the optimal increment for a different $\Delta_t$ (see \Cref{sec:example}). Thus, finding the optimal price increment across various $\Delta_t$ values might require us to precise estimate the reward function over a broad range of increments.
This is a notable departure from the non-contextual stochastic bilateral trade problem discussed in \citet{cesa2024bilateral}, where precise estimation of the reward function around the optimal price suffices.
Since accurately estimating functions such as $F^s$, $D^b$, $I$, and $J$ across a wide range of arguments complicates the problem, the regret might be higher than the $\widetilde{O}(T^{\nicefrac23})$ rate achieved in \citet{cesa2024bilateral}. Surprisingly, we show that with careful coordination of the various estimation procedures, the optimal rate for the non-contextual case can still be attained.

\subsection{Building blocks: subroutines for learning parameters}

We begin by presenting the sub-routines to estimate the parameters $\theta^s$ and $\theta^b$, as well as the functions $F^s$, $D^b$, $I$ and $J$ used in the $\egft$ decomposition of  \cref{lem:EGFT}. These sub-routines serve as the base components for the subsequent algorithms.
\begin{algorithm}
\caption{\textsc{Estimation Subroutines}}\label{alg:est par}
\begin{algorithmic}[1]
\Procedure{\textsc{Est-Par}}{$\cTpar$}\Comment{Update estimate of parameters $\theta^b,\theta^s$}
\State Draw price $p_t \sim \cU([-P,P])$ and post $(p_t, p_t)$
\State $\bV \gets \sum_{l\in \cTpar} x_l x_l^{\top}+  \bI_d$ 
\State $\widehat{\theta}^s \gets 2P\bV^{-1}\sum_{l\in \cTpar}\left({\indicator{p_l\le s_l}} - \frac{1}{2}\right)x_l$\label{algline:update theta s}
\State $\widehat{\theta}^b \gets 2P\bV^{-1}{\sum}_{l\in \cTpar} \left({\indicator{p_l \leq b_l}}-\frac{1}{2}\right)x_l$
\EndProcedure
\hrulealg
\Procedure{\textsc{Est-Int}}{$\cTint$}\Comment{Update estimate of integrals $I$, $J$}
\State Draw price $p_t \sim \cU([-P,P])$ and post $(p_t, p_t)$
\For{$k \in \cK$}
\State $\widehat{J}^k \gets \frac{2P}{\vert \cTint\vert}\sum_{l \in \cTint}{\mathbb{I}\left\{s_l \leq p_l \leq k\epsilon + x_l^{\top}\widehat{\theta}^s \right\}}.$
\State $\widehat{I}^k \gets \frac{2P}{\vert \cTint\vert}\sum_{l \in \cTint}{\mathbb{I}\left\{k\epsilon + x_l^{\top}\widehat{\theta}^b \leq p_l \leq b_l  \right\}}$
\EndFor
\EndProcedure
\hrulealg
\Procedure{\textsc{Est-F}}{$\cTF_k$,  grid point $k\in \cK$}\Comment{Update estimate of $F^s$}
\State Post price $p_t = x_t^{\top}\widehat{\theta}^s + k\epsilon$. Then update $\widehat{F}^k \gets \sum_{l\in \cTF_k} {\mathbb{I}\left\{s_l \leq p_l\right\}} / \vert \cTF_k \vert$
\EndProcedure
\hrulealg
\Procedure{\textsc{Est-D}}{$\cTD_k$,  grid point $k\in \cK$}\Comment{Update estimate of $D^b$}
\State Post price $p_t = x_t^{\top}\widehat{\theta}^b + k\epsilon$. Then update $\widehat{D}^k \gets \sum_{l\in \cTD_k} {\mathbb{I}\left\{b_l \geq p_l\right\}}/\vert \cTD_k \vert$
\EndProcedure
\end{algorithmic}
\end{algorithm}
The first sub-routine, \textsc{Est-Par}, estimates the parameters $\theta^s$ and $\theta^b$. It relies on the fact that, if $p_t \sim \cU([-P,P])$, then $2P\left(\mathbb{I}\{p_t\geq s_t\} - 1/2\right)$ is an unbiased estimate of $x_t^{\top}\theta^s$ (the case of $\theta^b$ is analogous). Then, we can rely on classical results to estimate the parameters (see \citet[Chapter 19]{lattimore2020bandit}). % Notice that the procedure takes in input a set $\Phi$ specifying which feedback is employed by the learner in the update. For instance, if $\Phi=\{\seller\}$, the learner posts prices $(p_t,-P)$, and subsequently uses only the seller's bit of information $\indicator{p_t\le s_t}$ in the update. 
The second sub-routine, \textsc{Est-Int}, estimates the integrals $I(\delta)$ and $J(\delta)$ over a grid of price increments $\{k\epsilon : k \in \cK\}$. The third and fourth sub-routines, \textsc{Est-F} and \textsc{Est-D}, provide estimates of the c.d.f.~$F^s$ and the demand function $D^b$ at $k\epsilon$ for a given increment level $k\in\cK$, respectively.

\subsection{Explore-Or-Commit framework}

First, we consider a natural Explore-\emph{Or}-Commit (EOC) algorithm which uses each round \emph{either} to compute estimates for the terms in \Cref{eq:lemma_EGFT}, \emph{or} to greedily play the empirical best action. This strategy is described in \Cref{alg:explore-commit}.
Our algorithm is not a standard Explore-Then-Commit algorithm. Indeed, it resorts to the estimation subroutine $\textsc{Est-Par}$ (Line \ref{line:estpar etc}) whenever the estimation is not ``good enough'' and not only in the first rounds. 
The other estimation subroutines are executed until a certain number of updates is reached.
We show that the number of necessary exploration rounds is upper bounded by $\widetilde O(T^{\nicefrac34})$. This will be useful in section \Cref{sec:one bit} to handle one-bit feedback.
Finally, in Lines \ref{line:etc exploit  1} and \ref{line:etc exploit  2} the algorithm selects the best price increments according to the current estimates (\ie ``commit'' rounds), and posts price $p_t$ build accordingly to both agents.

\begin{algorithm}
\caption{\textsc{Explore-Or-Commit for Bilateral Trade (EOC-BT)}}\label{alg:explore-commit}
\begin{algorithmic}[1]
\State \textbf{Input}: parameter $\mu$, length of estimation phases $\Tint$, $\TFD$, discretization error $\epsilon$, confidence $\delta$.
\State \textbf{Initialize}: $K = \lceil 2P/\epsilon\rceil + 3$, $ \cK\gets \llbracket -K,K\rrbracket$,  $\cTpar\hspace{-.1cm} =\hspace{-.1cm} \cTint \hspace{-.1cm}=\hspace{-.1cm}\cTF_k\hspace{-.1cm} =\hspace{-.1cm} \cTD_k\hspace{-.1cm}=\hspace{-.1cm} \varnothing$ for all $k\in\cK$, $\bV = \mathbf{I}_{d}$.
\While{$t \leq T$}
\If{$\Vert x_t\Vert_{\bV^{-1}} > \mu$}
$\cTpar\gets \cTpar\cup\{t\}$, $\textsc{Est-Par}(\cTpar)$\label{line:estpar etc}
\ElsIf{$\vert \cTint \vert < \Tint$} %\Comment{Estimate the integrals $I$ and $J$}
$\cTint\gets \cTint\cup\{t\}$, $\textsc{Est-Int}(\cTint)$
\ElsIf{ for some $k \in \cK$, $\vert \cTF_k \vert < \TFD$ or $\vert \cTD_k \vert < \TFD$}
\If{$\vert \cTF_k \vert < \TFD$} $\cTF_k\gets \cTF_k\cup\{t\}$, $\textsc{Est-F}(\cTF_k, k)$
\ElsIf{$\vert \cTD_k \vert < \TFD$} $\cTD_k\gets \cTD_k\cup\{t\}$, $\textsc{Est-D}(\cTD_k, k)$
\EndIf
\Else \Comment{Exploitation phase}
\State $\cA_t \gets \mleft\{(k, k') \in \cK^2 : k' = \left\lfloor \mleft(k\epsilon 
- x_t^{\top}\mleft(\widehat{\theta}_t^b - \widehat{\theta}_t^s\mright)\mright)\epsilon^{-1}\right\rfloor\mright\}$\label{line:etc exploit 1}
\State Choose $(k_t, k'_t) \in \argmax_{(k, k') \in \cA_t}  \widehat{F}^k\widehat{I}^{k'}  + \widehat{D}^{k'}\widehat{J}^k$ and post price $p_t = x_t^{\top}\widehat{\theta}^s + k_t\epsilon$\label{line:etc exploit  2}
\EndIf
\EndWhile
\end{algorithmic}
\end{algorithm}

% Instead, our Algorithm 3(Scouting Bandits) exploits better the decomposition in Equation (4) by learning separately the global and local parts of the gain from trade. First, a global exploration phase is run (scout-ing phase), in which prices uniformly sampled in [0, 1]are posted and used to simultaneously estimate the integral terms on a suitable grid of K points.

% Notice that as t grows, the matrix Vt has increasing eigenvalues, which means the volume of the ellipse is also shrinking (at least, provided βt does not grow too fast).

% there's a gap between what ETC-like alg can achieve in our setting vs Theorem 3 CESA
The following theorem bounds the regret of \Cref{alg:explore-commit}. Its proof can be found in Appendix \ref{app:explore_commit}.
% \begin{restatable}{theorem}{etc}\label{thm:etc}
% Set $\epsilon = (\log(T)/T)^{1/4}$,  $\delta = \left(T(74+32P\epsilon^{-1})\right)^{-1}$, $\mu = \epsilon\big(P\sqrt{d\log\left(\frac{1 + B^2T}{\delta}\right)} + A\big)^{-1}$, $\Tint = 8P^2\log(1/\delta)\epsilon^{-2}$, and $\TFD = 2\log(1/\delta)\epsilon^{-2}$. Then, under Assumptions \ref{ass:bounded_context} and \ref{ass:bounded_densities}, there exists a constant $C_{P,L}$ depending only on $P$ and $L$, and a constant $T_{A,B,C,P,L,d}$ depending only on $P$, $A$, $B$, $C$, $P$, $L$, and $d$, such that if $T\geq T_{A,B,C,L,d}$,
% \begin{align*}
%     R_T &\leq C_{P,L}T^{\frac{3}{4}}\log(T)^{\frac{1}{4}}
% \end{align*}
% with probability at least $1-T^{-1}$.
% \end{restatable}
\begin{restatable}{theorem}{etc}\label{thm:etc}
Set $\epsilon = (\log(T)/T)^{1/4}$,  $\delta = \left(T(74+32P\epsilon^{-1})\right)^{-1}$, $\mu = \epsilon\big(P\sqrt{d\log\left(\frac{1 + B^2T}{\delta}\right)} + A\big)^{-1}$, $\Tint = \lceil 8P^2\log(1/\delta)\epsilon^{-2}\rceil$, and $\TFD = \lceil 2\log(1/\delta)\epsilon^{-2}\rceil$. Then, under Assumptions \ref{ass:bounded_context} and \ref{ass:bounded_densities}, there exist a constant $\widetilde T$ such that, if $T\geq \widetilde T$,
$
    R_T \leq  \widetilde O( T^{\nicefrac{3}{4}})
$
with probability at least $1-T^{-1}$.
\end{restatable}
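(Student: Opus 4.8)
The plan is to condition on a high-probability clean event on which every quantity produced by the subroutines is accurate up to an additive $O(\epsilon)$, then to split the horizon into exploration rounds (any call to \textsc{Est-Par}, \textsc{Est-Int}, \textsc{Est-F}, or \textsc{Est-D}) and exploitation rounds, and to bound each contribution separately. First I would set up the clean event. Since $p_l\sim\cU([-P,P])$ makes $2P(\ind{p_l\le s_l}-\frac12)$ an unbiased, $O(P)$-subGaussian proxy for $x_l^\top\theta^s$, the self-normalized confidence bound for ridge regression (see \citet[Chapter~19]{lattimore2020bandit}) gives, simultaneously for all $t$ and with probability at least $1-\delta$, that $\Vert\widehat\theta^s-\theta^s\Vert_{\bV}\le P\sqrt{d\log((1+B^2T)/\delta)}+A$, and likewise for $\theta^b$. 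By the choice of $\mu$, whenever \textsc{Est-Par} is \emph{not} triggered we have $\Vert x_t\Vert_{\bV^{-1}}\le\mu$, hence $|x_t^\top(\widehat\theta^s-\theta^s)|\le\mu\big(P\sqrt{d\log((1+B^2T)/\delta)}+A\big)=\epsilon$, and the same bound holds throughout the Est-Int/Est-F/Est-D and exploitation rounds. For the remaining subroutines, a Hoeffding/Azuma bound together with a union bound over the $|\cK|=O(P/\epsilon)$ grid points — the source of the factor $74+32P\epsilon^{-1}$ multiplying $\delta$ — shows that, with the prescribed values of $\Tint=\widetilde\Theta(\epsilon^{-2})$ and $\TFD=\widetilde\Theta(\epsilon^{-2})$, all of $\widehat F^k,\widehat D^k,\widehat I^k,\widehat J^k$ lie within $O(\epsilon)$ of $F^s(k\epsilon),D^b(k\epsilon),I(k\epsilon),J(k\epsilon)$, the $O(\epsilon)$ absorbing both the sampling error and the bias from plugging the $\epsilon$-accurate parameter estimates into the indicators (using $f^s,f^b\le L$). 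A union bound makes the clean event fail with probability at most $\delta(74+32P\epsilon^{-1})=T^{-1}$.

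Next I would bound the exploration regret by counting the exploration rounds, since each costs at most $2P$. The elliptical potential lemma bounds the number of rounds with $\Vert x_t\Vert_{\bV^{-1}}>\mu$ by $O(\mu^{-2}d\log(1+TB^2/d))=\widetilde O(\epsilon^{-2})=\widetilde O(\sqrt T)$; the \textsc{Est-Int} phase has length $\Tint=\widetilde O(\epsilon^{-2})=\widetilde O(\sqrt T)$; and the \textsc{Est-F}/\textsc{Est-D} phases have total length $2|\cK|\,\TFD=O(\epsilon^{-1})\cdot\widetilde O(\epsilon^{-2})=\widetilde O(\epsilon^{-3})$. With $\epsilon=(\log T/T)^{1/4}$ the last term dominates and equals $\widetilde O(T^{3/4})$, so the exploration regret is $\widetilde O(T^{3/4})$ — this is exactly what pins the rate at $T^{3/4}$.

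For an exploitation round $t$, let $p_t^\ast$ with increments $(\delta^{s\ast},\delta^{b\ast})$, $\delta^{b\ast}=\delta^{s\ast}-\Delta_t$, maximize $\egft(x_t,\cdot)$; by \Cref{lem:EGFT} we may take $\delta^{s\ast}\in[-C,2P]$, since outside this range one of $F^s,D^b$ vanishes and $\egft(x_t,\cdot)=0$. Differentiating the decomposition of \Cref{lem:EGFT} and using $f^s,f^b\le L$ and the boundedness of $I,J$ on this range shows $p\mapsto\egft(x_t,p)$ is $O(1)$-Lipschitz, so rounding $\delta^{s\ast}$ to the nearest grid point $\bar k\epsilon$ and correcting for the $\epsilon$-error of $\widehat\theta^s$ costs only $O(\epsilon)$; hence the candidate $(\bar k,\bar k')\in\cA_t$, with $\bar k'$ the floor prescribed by the algorithm, has true $\egft$ within $O(\epsilon)$ of the optimum. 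On the clean event, for every $(k,k')\in\cA_t$ the increments $(\delta^s,\delta^b)$ of the price $x_t^\top\widehat\theta^s+k\epsilon$ satisfy $|\delta^s-k\epsilon|\le\epsilon$ and $|\delta^b-k'\epsilon|\le2\epsilon$ (using $\epsilon$-accuracy of $\widehat\theta^s,\widehat\theta^b$ and the floor), so by \Cref{lem:EGFT}, Lipschitzness of $F^s,D^b,I,J$, and boundedness of the four factors, the estimated objective $\widehat F^k\widehat I^{k'}+\widehat D^{k'}\widehat J^k$ is within $O(\epsilon)$ of $\egft(x_t,x_t^\top\widehat\theta^s+k\epsilon)$. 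Chaining these facts, the pair $(k_t,k'_t)$ chosen by the algorithm satisfies $\egft(x_t,p_t)\ge(\widehat F^{k_t}\widehat I^{k'_t}+\widehat D^{k'_t}\widehat J^{k_t})-O(\epsilon)\ge(\widehat F^{\bar k}\widehat I^{\bar k'}+\widehat D^{\bar k'}\widehat J^{\bar k})-O(\epsilon)\ge\egft(x_t,p_t^\ast)-O(\epsilon)$. Summing the $O(\epsilon)$ per-round loss over at most $T$ rounds gives $O(T\epsilon)=\widetilde O(T^{3/4})$, which combined with the exploration bound yields $R_T\le\widetilde O(T^{3/4})$ with probability at least $1-T^{-1}$; the threshold $\widetilde T$ only serves to make $\epsilon$ small enough for $\cK$ to cover $[-C,2P]$ and for lower-order constants to be absorbed.

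The main obstacle is the exploitation analysis: keeping every $O(\epsilon)$ error term genuinely $O(\epsilon)$ while handling (i) the possibly stale parameter estimates used inside the \textsc{Est-Int}/\textsc{Est-F}/\textsc{Est-D} indicators, (ii) the floor operation relating $k$ and $k'$, and (iii) grid-boundary cases where $k\epsilon$ or $k'\epsilon$ exits $[-C,C]$. The concentration and round-counting arguments are otherwise standard.
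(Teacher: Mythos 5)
Your proposal is correct and follows essentially the same route as the paper's proof: you define a clean event capturing $\epsilon$-accuracy of $\widehat\theta^s,\widehat\theta^b$ (via self-normalized bounds and the elliptical potential lemma) and of $\widehat F,\widehat D,\widehat I,\widehat J$ (via Azuma plus a union bound over the grid, which is exactly the source of the $74+32P\epsilon^{-1}$ factor in $\delta$, matching the paper's Lemma~\ref{lem:bound_epsilon_etc}); you bound the exploration cost as $\widetilde O(\epsilon^{-2})+\widetilde O(\epsilon^{-2})+\widetilde O(\epsilon^{-3})$ rounds at $2P$ each, mirroring the paper's decomposition and Lemma~\ref{lem:explore_theta}; and you bound the per-round exploitation loss by $O(\epsilon)$ by combining Lipschitzness of $p\mapsto\egft(x,p)$ (the paper's Lemma~\ref{lem:discretisation_error}) with the $O(\epsilon)$ uniform estimation error of $\widehat F^k\widehat I^{k'}+\widehat D^{k'}\widehat J^k$ (the paper's Lemmas~\ref{lem:conf_bound_etc} and~\ref{lem:decompo_error}), including the correct handling of the floor in $\cA_t$ and the argument that the optimal increment lies in a bounded window so the grid covers it. Balancing $\widetilde O(\epsilon^{-3})$ against $O(T\epsilon)$ at $\epsilon=(\log T/T)^{1/4}$ gives $\widetilde O(T^{3/4})$, as in the paper; the only cosmetic oddity is the superfluous $-O(\epsilon)$ in the middle inequality of your chaining (the algorithm's greedy choice gives a clean $\ge$ there), which does not affect the result.
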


\begin{sproof}
    For some $\epsilon >0$ to be chosen later, let $K \approx P/\epsilon$. Using standard arguments \cite{NIPS2011_e1d5be1c,carpentier2020elliptical}, we can show that $\widetilde{O}(\epsilon^{-2})$ samples are sufficient for \textsc{Est-Par} to ensure that, with high probability, for all $t\notin \cTpar$ the errors $\vert x_t^{\top}(\theta^s - \widehat{\theta}^s_t)\vert$ and $\vert x_t^{\top}(\theta^b - \widehat{\theta}^b_t)\vert$ are smaller than $\epsilon$. When this happens, classical concentration arguments ensure that $\widetilde{O}(\epsilon^{-2})$ samples are sufficient for \textsc{Est-Int} to estimate $I(k\epsilon)$ and $J(k\epsilon)$ with precision $\epsilon$ uniformly for $k \in \cK$. By contrast, $\widetilde{O}(\vert \cK\vert \epsilon^{-2})$ samples are necessary for \textsc{Est-F} and \textsc{Est-F} to estimate $F^s(k\epsilon)$ and $D^b(k\epsilon)$ with precision $\epsilon$ uniformly for $k \in \cK$.
    Therefore, with high probability, after approximately $\widetilde{O}(|\mathcal{K}| \epsilon^{-2})$ rounds of exploration, the expected gain from trade for prices $x_t^{\top}\widehat{\theta}^s_t + k\epsilon$ is estimated with precision $O(\epsilon)$ uniformly for $k\in \mathcal{K}$. %Additionally, under Assumptions \ref{ass:bounded_context} and \ref{ass:bounded_densities}, $\egft(x_t, x_t^{\top}\theta^s + k\epsilon) = 0$ for all $k \notin \mathcal{K}$. 
    Furthermore, Assumption \ref{ass:bounded_densities} ensures that the reward function is Lipschitz-continuous, thus the discretization error is also of order $\widetilde{O}(\epsilon)$. By choosing $\epsilon = T^{\nicefrac{-1}{4}}$, we balance the estimation and discretization errors with the regret of the exploration phase, thereby obtaining a regret $\widetilde{O}(T^{\nicefrac34})$.
\end{sproof}

The regret of Algorithm \ref{alg:explore-commit} is primarily driven by the estimation of the c.d.f. $F^s$ and demand function $D^b$ uniformly over a grid of increments. %We now present an alternative approach, which however fails to get better rates. On the other hand, in presenting it, we present key concept which will be useful in the following section.
We now present an alternative sub-optimal approach, which allows us to introduce key concepts useful in the following section.
Specifically, to leverage the fact that the optimal increment $\delta^s$  only depends on the difference in average valuations $\Delta_t$, we could first execute the sub-routines $\textsc{Est-Par}$ and $\textsc{Est-Int}$, yielding estimates of $x_t^{\top}\theta^s$, $x_t^{\top}\theta^b$, $I$ and $J$ up to precision $\epsilon$ using $\widetilde{O}(\epsilon^{-2})$ samples. Then, we could round the value of $\Delta_t$ on a grid of size $\epsilon^{-1}$, and run independent Scouting Bandit algorithms  (as described in \cite{cesa2024bilateral}) for each of the $\epsilon^{-1}$ rounded values.
%
%The grid size implies that the additional discretization error is of order $O(T\epsilon)$. On the other hand, a each of the $\epsilon^{-1}$ independent Scouting Bandit algorithms has a regret $\widetilde{O}(T_{\epsilon}^{\nicefrac23}/\epsilon^{-1})$, where $T_{\epsilon} = T\epsilon$ is the number of rounds per independent algorithm. Thus, their combined regret is $\widetilde{O}(\epsilon^{\nicefrac{-1}{3}}T^{\nicefrac23})$. By selecting $\epsilon = T^{\nicefrac{-1}{4}}$, we observe that this strategy also results in a regret of order $\widetilde{O}(T^{\nicefrac34})$.
The grid size implies a discretization error of order $O(\epsilon)$. The highest regret occurs when each of the $\epsilon^{-1}$ independent Scouting Bandit algorithms runs for the same number of rounds $T_{\epsilon} = T\epsilon$. Each algorithm incurs a regret $\widetilde{O}(T_{\epsilon}^{\nicefrac23})$, so their combined regret is 
$\widetilde{O}(T_{\epsilon}^{\nicefrac23} \cdot \epsilon^{-1}) = \widetilde{O}(T^{\nicefrac23}\epsilon^{\nicefrac{-1}{3}})$. By selecting $\epsilon = T^{\nicefrac{-1}{4}}$, this strategy also results in a regret of order $\widetilde{O}(T^{\nicefrac34})$.

\subsection{Closing the gap: scouting bandits with information pooling}

Both the EOC strategy---where the reward function for all potential values of $(\delta^s, \Delta_t)$ is learned before committing to posting greedy prices---and strategies that use independent scouting bandit algorithms for each (rounded) value of $\Delta_t$ result in the same regret. 
Surprisingly, we show that by combining the strengths of both strategies, it is possible to design an algorithm that achieves the optimal regret rate of $\widetilde{O}(T^{\nicefrac23})$. To do this, we design a scouting bandit strategy with \textit{information pooling} across different values of $\Delta_t$. This idea is related to \textit{cross-learning}, developed for bandits with graph feedback \cite{balseiro2019contextual}. In particular, after having estimated $x_t^{\top}\theta^s$, $x_t^{\top}\theta^b$, $I$ and $J$, we run a successive elimination algorithm for each value of $\Delta_t$. Then, to estimate the reward of a price corresponding to increments $\delta^s$ and $\delta^b$, we use the feedback from rounds where these increments have been selected \textit{across all values of $\Delta_t$}. This strategy, called \textsc{Scouting Bandit with Information Pooling (SBIP)}, is described in Algorithm \ref{alg:SBLBT} (a detailed description of the algorithm and proofs are deferred to \Cref{app:2bit noise scouting}).
%%%%%%%%%%%%%%%%%%%%%%%%%%%%%%%%%%%%%%

\begin{algorithm}
\caption{\textsc{Scouting Bandit with Information Pooling (SBIP)}}\label{alg:SBLBT}
\begin{algorithmic}[1]
\State \textbf{Input}: parameter $\mu >0$, length of scouting phase $\Tint$, discretization size $\epsilon$, confidence $\delta>0$.
\State \textbf{Initialize}: $\cTpar = \cTint = \varnothing$, $\bV = \mathbf{I}_{d}$, $K = \lceil 2P/\epsilon\rceil + 3$, $\cK = \range{-K,K}$, $\widetilde\epsilon=(12PL + 7)\epsilon$, \\ $N^{k,s} = N^{k,b} = \widehat{F}^{k} = \widehat{D}^{k} = 0 $ for all $k \in \cK$. Let $\beta:\mathbb{N}\ni n\mapsto \sqrt{2\log(\delta^{-1})/n}$
\While{$t \leq T$}
\If{$\Vert x_t\Vert_{\bV^{-1}} > \mu$}
% \Comment{Estimate the parameters $\theta^b$ and $\theta^s$}
% \State post price $p_t \sim \cU([-P,P])$,  set $\cT^{par} = \cT^{par} \cup \{t\}$
% \State set $V = \underset{l\in \cT^{par}}{\sum} x_l x_l^{\top}+  \bI_d$, $\widehat{\theta}^s = 2PV^{-1}\underset{l\in \cT^{par}}{\sum} \left(\mathbb{I}\{p_l\leq s_l\} - \frac{1}{2}\right)x_l$,  \State and 
%  $\widehat{\theta}^s = 2PV^{-1}\underset{l\in \cT^{par}}{\sum} \left(\mathbb{I}\{p_l \leq b_l\}-\frac{1}{2}\right)x_l$

\State $\cTpar\gets \cTpar\cup\{t\}$, $\textsc{Est-Par}(\cTpar,\{s,b\})$

\ElsIf{$\vert \cTint \vert < \Tint$} %\Comment{Estimate the integrals $I$ and $J$}
%\State post price $p_t \sim \cU([-P,P])$, set $\cTint = \cTint \cup \{ t \}$
%\If{$\vert \cTint \vert = \Tint$}
%\State for $k\in \cK$,  set $\widehat{I}^k=\frac{2P}{\Tint}\underset{l \in \cTint}{\sum}\mathbb{I}\{ x_l^{\top}\widehat{\theta}^b + k\epsilon\leq p_l \leq b_l\}$
%\State for $k\in \cK$,  $\widehat{J}^k = \frac{2P}{\Tint}\underset{l \in \cTint}{\sum}\mathbb{I}\{s_l \leq p_l \leq x_l^{\top}\widehat{\theta}^s +  k\epsilon \}$
\State $\cTint\gets \cTint\cup\{t\}$, $\textsc{Est-Int}(\cTint,\{s,b\})$
\Else{} \Comment{Run Successive Elimination}

 \For{$(k, k') \in \cA_t \defeq \mleft\{(k, k') \in \cK^2: k' = \mleft\lfloor \left(x_t^{\top} \left(\widehat{\theta}^s_t - \widehat{\theta}^b_t\right) + k\epsilon\right)\epsilon^{-1} \mright\rfloor\mright\}$}
% \State $\UCB_t(k,k') \gets \mleft(\widehat{I}^{k'} + 2\epsilon\mright)\mleft(\widehat{F}^k + \sqrt{\frac{2\log(1/\delta)}{N^{s,k}}} + \epsilon L\mright) + (\widehat{J}^{k} + 2\epsilon) (\widehat{D}^{k'} + \sqrt{\frac{2\log(1/\delta)}{N^{b,k'}}} + 2\epsilon L)$
\State $\UCB_t(k,k') \gets \widehat{I}^{k'}\widehat{F}_t^k+\widehat{J}^{k} \widehat{D}_t^{k'} + \widetilde\epsilon + 2P\left(\beta(N^{k,s}) + \beta(N^{k',b})\right)$

%\State $\LCB_t(k,k') \gets  (\widehat{I}^{k'} - 2\epsilon)(\widehat{F}^k - \sqrt{\frac{2\log(1/\delta)}{N^{s,k}}} - \epsilon L) + (\widehat{J}^{k} - 2\epsilon) (\widehat{D}^{k'} - \sqrt{\frac{2\log(1/\delta)}{N^{b,k'}}} - 2\epsilon L)$

\State $\LCB_t(k,k') \gets \widehat{I}^{k'}\widehat{F}_t^k+\widehat{J}^{k} \widehat{D}_t^{k'} - \widetilde\epsilon - 2P\left(\beta(N^{k,s}) + \beta(N^{k',b})\right)$
\EndFor
\State $\cK_t  \defeq  \mleft\{(k, k')\in \cA_t : \UCB_t(k,k') \geq \max_{(l, l') \in \cA_t}\LCB_t(l,l')\mright\}$\label{algline:k successive elimination}
\State Choose $(k_t, k_t') \in \argmin_{(k, k') \in \cK_t}\,\min\{N^{k,s}, N^{k',b}\}$ and post price $p_t = x_t^{\top}\widehat{\theta}^s_t + k_t\epsilon$ \label{algline:k min}%\ac{is $\cK_t=\cA_t$?}
\State Observe feedback and update: 
\vspace{-2mm}\[\widehat{F}^{k_t} \gets  \frac{N^{k_t, s}\widehat{F}^{k_t}+\mathbb{I}\left\{s_t \leq p_t\right\}}{N^{k_t, s} + 1} \quad \textnormal{ and }\quad \widehat{D}^{k'_t} \gets  \frac{N^{k'_t, b}\widehat{D}^{k'_t}+\mathbb{I}\left\{b_t \geq p_t\right\}}{N^{k'_t, b} + 1},\]\vspace{-2mm}
\State $ \ \quad N^{k_t, s} \gets  N^{k_t, s} + 1  \quad \textnormal{ and }\quad N^{k'_t, b} \gets  N^{k'_t, b} + 1$
\EndIf
\EndWhile
\end{algorithmic}
\end{algorithm}

\begin{theorem}\label{thm:noise_2_bits}
Assume that the valuations of the buyer and the seller follow the noisy linear model in Equation \eqref{eq:2bit_noise}. Let $\epsilon = (d^2\log(T)^2/T)^{\nicefrac{1}{3}}$, %$\delta = \left(T(38 +16P\epsilon^{-1}+ 16P\epsilon^{-1}T + 36T)\right)^{-1}$, 
$\delta = \left((38 + 16P\epsilon^{-1})(T+1)^2\right)^{-1}$, $\mu = \epsilon \,\big(P\sqrt{d\log\left(\frac{1 + B^2T}{\delta}\right)} + A\big)^{-1}$, and $\Tint = \lceil 8P^2\log(1/\delta)/\epsilon^2\rceil$. Then, under Assumptions \ref{ass:bounded_context} and \ref{ass:bounded_densities}, with probability at least $1 - 1/T$, \Cref{alg:SBLBT} has a regret bounded by
$
 %   R_T \leq \widetilde C'(dT)^{\nicefrac{2}{3}}\log(T),
 R_T \leq \widetilde O(T^{\nicefrac{2}{3}}).
$
%where $\widetilde C'$ is a constant that depends polynomially on $A$, $B$, $C$, and $L$.
\end{theorem}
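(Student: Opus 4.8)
The plan is to split the regret into the cost of the two exploration phases (\textsc{Est-Par} and \textsc{Est-Int}) and the cost of the exploitation phase (successive elimination with pooling), and to show that under the stated parameters each is $\widetilde O(T^{\nicefrac23})$; throughout one uses that for every context the benchmark $\max_{p\le q}\egft$ is attained at a single price, so it suffices to analyse the single prices $p_t = x_t^{\top}\widehat{\theta}^s_t + k_t\epsilon$ that the algorithm posts. I would first introduce an event $\mathcal{E}$ on which all the concentration statements below hold simultaneously, and bound $\mathbb{P}(\mathcal{E}^{c})\le 1/T$ by a union bound over the relevant confidence statements (this is what fixes $\delta$). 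On $\mathcal{E}$: (i) a self-normalized tail bound for the \textsc{Est-Par} estimator gives $|x_t^{\top}(\widehat{\theta}^s_t-\theta^s)|\le\|x_t\|_{\bV^{-1}}\big(P\sqrt{d\log((1+B^2T)/\delta)}+A\big)$, and similarly for $\theta^b$, so that as soon as $\|x_t\|_{\bV^{-1}}\le\mu$ both parameter errors are at most $\epsilon$, by the choice of $\mu$; (ii) by the elliptical potential lemma, the number of rounds with $\|x_t\|_{\bV^{-1}}>\mu$ --- i.e.\ the length of the \textsc{Est-Par} phase --- is $\widetilde O(d^2/\epsilon^2)$; (iii) after the $\Tint=\widetilde O(\epsilon^{-2})$ rounds of \textsc{Est-Int}, Hoeffding's inequality together with (i) and the $1$-Lipschitzness of $I$ and $J$ (since $F^s,D^b\in[0,1]$) yields $|\widehat{I}^{k'}-I(k'\epsilon)|,|\widehat{J}^{k}-J(k\epsilon)|=O(\epsilon)$ uniformly over $k,k'\in\cK$. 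Since the instantaneous regret never exceeds the maximal gain from trade $O(P^2)$, the two exploration phases together contribute $\widetilde O(d^2/\epsilon^2)$ to the regret.

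The delicate estimates are those for the \emph{pooled} quantities $\widehat{F}^{k}_t$ and $\widehat{D}^{k'}_t$, since each averages feedback from many rounds with different contexts --- hence slightly different true increments, all within $2\epsilon$ of $k\epsilon$, resp.\ $k'\epsilon$, by (i) --- and over a \emph{data-dependent} number of samples $N^{k,s}_t$. I would view $\widehat{F}^{k}_t$ as the running average of a bounded martingale difference sequence and combine an Azuma--Hoeffding bound with a union bound over the at most $T$ values of $N^{k,s}_t$, which gives, on $\mathcal{E}$ and for all $t$ and all grid points,
\[
|\widehat{F}^{k}_t-F^s(k\epsilon)|\le L\epsilon+\beta(N^{k,s}_t),\qquad |\widehat{D}^{k'}_t-D^b(k'\epsilon)|\le L\epsilon+\beta(N^{k',b}_t).
\]
Inserting the decomposition of \Cref{lem:EGFT}, together with (iii), the previous display, the $L$-Lipschitzness of $F^s$ and $D^b$, the $1$-Lipschitzness of $I$ and $J$, and $|I|,|J|\le 2P$, a triangle-inequality computation shows that $\widetilde\epsilon=(12PL+7)\epsilon$ is large enough to make the confidence band valid: on $\mathcal{E}$, for every exploitation round $t$ and every $(k,k')\in\cA_t$,
\[
\LCB_t(k,k')\le \egft(x_t, x_t^{\top}\widehat{\theta}^s_t+k\epsilon)\le \UCB_t(k,k').
\]

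On $\mathcal{E}$ the confidence sandwich gives the two usual facts: the best grid action of round $t$ is never eliminated, so it lies in $\cK_t$ (here one also needs that $\cA_t$ contains a price within $O(\epsilon)$ of the true optimal price, which follows from the width and fineness of the grid plus (i), so that Lipschitzness contributes only an $O((PL+1)\epsilon)$ discretization term per round); and every surviving action --- in particular the played pair $(k_t,k'_t)$, which minimizes $\min\{N^{k,s}_t,N^{k',b}_t\}$ over $\cK_t$ --- has instantaneous regret at most $4\widetilde\epsilon+16P\,\beta(n_t)+O((PL+1)\epsilon)$, where $n_t\defeq\min\{N^{k_t,s}_t,N^{k'_t,b}_t\}$; here I use that $\beta$ is decreasing and that $n_t$ is no larger than the analogous quantity for the optimal surviving action. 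Summing over the at most $T$ exploitation rounds, the exploitation regret is $\lesssim T\widetilde\epsilon+P\sqrt{\log(1/\delta)}\sum_t 1/\sqrt{n_t}$. The key pooling estimate is $\sum_t 1/\sqrt{n_t}\lesssim\sqrt{|\cK|\,T}$: bounding $1/\sqrt{n_t}\le 1/\sqrt{N^{k_t,s}_t}+1/\sqrt{N^{k'_t,b}_t}$, grouping the rounds by the seller index $k$ (so that within a group $N^{k,s}_t$ runs through $1,2,\dots,n_k$ with $\sum_k n_k\le T$) and applying Cauchy--Schwarz gives $\sum_t 1/\sqrt{N^{k_t,s}_t}\lesssim\sum_k\sqrt{n_k}\lesssim\sqrt{|\cK|\,T}$, and symmetrically for the buyer index; the $O(|\cK|)$ rounds in which a counter is still $0$ only add $O(P^2|\cK|)$. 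Since $|\cK|=O(P/\epsilon)$ and $\widetilde\epsilon=O(PL\epsilon)$, this gives exploitation regret $\widetilde O(T\epsilon+\sqrt{T/\epsilon})$, hence $R_T=\widetilde O(d^2/\epsilon^2+T\epsilon+\sqrt{T/\epsilon})$, which is $\widetilde O(T^{\nicefrac23})$ for $\epsilon=(d^2\log^2 T/T)^{\nicefrac13}$.

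The main obstacles I anticipate are the anytime, uniform-over-the-grid concentration for the pooled estimators $\widehat{F},\widehat{D}$ with a data-dependent sample count, and the pooling count bound $\sum_t 1/\sqrt{n_t}\lesssim\sqrt{|\cK|\,T}$ --- the step that improves on the $\widetilde O(T^{\nicefrac34})$ obtained by running $O(1/\epsilon)$ independent scouting bandits, and the place where the per-round coupling of the indices $k$ and $k'$ through $\cA_t$ must be handled carefully.
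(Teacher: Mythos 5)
Your proposal is correct and follows essentially the same route as the paper's proof: the same high-probability event (self-normalized bound for \textsc{Est-Par}, elliptical potential for the exploration count, Hoeffding for \textsc{Est-Int}, an anytime Azuma-type bound with a union bound over counter values for the pooled $\widehat F,\widehat D$), the same use of \Cref{lem:EGFT} plus Lipschitzness to validate the confidence band with width $\widetilde\epsilon=(12PL+7)\epsilon$ plus $2P\beta(\cdot)$ terms, the same elimination argument giving $\sr(x_t,p_t)\lesssim \epsilon+\beta(n_t)$ with $n_t=\min\{N^{k_t,s}_t,N^{k'_t,b}_t\}$.

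The only place where the arguments diverge is the final summation over exploitation rounds. The paper partitions each set $\cTSE_{s,k}$ into dyadic phases by the running value of $N^{s,k}$, obtaining $\sum_{t\in\cTSE_{s,k}}\sr\lesssim \log(1/\delta)/\epsilon+\epsilon\,T^{\textnormal{SE}}_{s,k}$ per grid point and then sums linearly over $k$. You instead group by the seller index $k_t=k$, evaluate $\sum_{j\le n_k}1/\sqrt j\lesssim\sqrt{n_k}$ directly, and apply Cauchy--Schwarz, $\sum_k\sqrt{n_k}\le\sqrt{|\cK|\,T}$. Both yield $\widetilde O(T^{\nicefrac23})$; your version is a bit more streamlined and in fact shaves a $\sqrt{\log(1/\delta)}$ factor off the exploitation term, while the paper's dyadic form is linear in $T^{\textnormal{SE}}_{s,k}$ and so sums trivially. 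You also double-count each round via $1/\sqrt{n_t}\le 1/\sqrt{N^{k_t,s}_t}+1/\sqrt{N^{k'_t,b}_t}$ where the paper's $\cTSE_{s,k}/\cTSE_{b,k'}$ partition avoids this, but that is only a constant factor. No gaps.
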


%In the rest of this Section, we provide a sketch of proof for Theorem \ref{thm:noise_2_bits}.
\begin{sproof} 
Let $\sr(x, p) \defeq \max_{p'}\egft(x,p')- \egft(x,p)$, and $\Tpar = \vert \cTpar\vert$.
The regret is bounded by $2P\Tpar + 2P\Tint + \sum_{t \notin \cTpar \cup \cTint} \sr(x_t, p_t)$. The elliptical potential Lemma \citep{carpentier2020elliptical} implies that $\Tpar \lesssim d^2\log(T)^2\epsilon^{-2}$, so $\Tpar + \Tint \lesssim d^2\log(T)^2\epsilon^{-2}$. 
    %Next, we show that for $t\notin \cTpar\cup\cTint$, and $(k, k') \in \cA_t$, $[\LCB_t(k, k'),\UCB_t(k, k')]$ is a valid confidence interval for $\egft(x_t, x_t^{\top}\widehat{\theta^s} + k\epsilon)$. Indeed, classical results for linear bandits show that w.h.p., the estimation error for $x_t^{\top}\theta^s$ and  $x_t^{\top}\theta^b$ are smaller than $\epsilon$ for all $t\notin \cTpar$. Using the Lipschitz-continuity of $I$ and $J$, this implies that the bias of $\widehat{I}^k$ and $\widehat{J}^k$ is also of order $\epsilon$. Classical concentration arguments show that  w.h.p., $\vert \widehat{I}^k - I(k\epsilon)\vert$ and $\vert \widehat{J}^k - J(k\epsilon)\vert$ are of order $\epsilon$. Finally, using similar argument, we find that  w.h.p., $\vert \widehat{F}^k - F(k\epsilon)\vert$ and $\vert \widehat{D}^k - D(k\epsilon)\vert$ are respectively of order $\epsilon + \sqrt{\nicefrac{\log(\nicefrac{1}{\delta})}{N^{k_t,s}}}$ and $\epsilon+ \sqrt{\nicefrac{\log(\nicefrac{1}{\delta})}{N^{k_t,s}}}$, which concludes the proof of the claim.
By using concentration arguments and exploiting the Lipschitz continuity of the reward, we can show that, for all $t\notin \cTpar\cup\cTint$ and $(k, k') \in \cA_t$, $[\LCB_t(k, k'),\UCB_t(k, k')]$ is a valid confidence interval for $\egft(x_t, x_t^{\top}\widehat{\theta}^s + k\epsilon)$. %We claim that on this event, \textcolor{red}{if $t\in \cT^s_k  \defeq \{t\notin \cTpar \cup \cTint: k_t = k, N^{k_t,s}< N^{k'_t, b}\}$, then $\sr(x_t, p_t) \lesssim \sqrt{\nicefrac{\log(\nicefrac{1}{\delta})}{N^{k,s}}}+\epsilon$.}
Now, we want to show that $\sr(x_t, p_t) \lesssim\widetilde O\big(\sqrt{\nicefrac{1}{N^{k_t,s}}}+ \sqrt{\nicefrac{1}{N^{k'_t,b}}} + \epsilon\big)$.
To prove this claim, we define $(k^*_t, k_t^{'*}) \in \argmax_{(k, k')\in \cA_t} \egft(x_t,x_t^{\top}\widehat{\theta}^s + k\epsilon)$. The Lipschitz-continuity of the reward ensures that the discretization error
%$\sr(x_t, x_t^{\top}\widehat{\theta}^s + k^*_t\epsilon)$
is of order $O(\epsilon)$. Moreover, $[\LCB_t(k_t, k'_t),\UCB_t(k_t, k'_t)]$ and $[\LCB_t(k^*_t, k_t^{'*}),\UCB_t(k^*_t, k_t^{'*})]$ are confidence intervals for $\egft(x_t, p_t)$ and $\egft(x_t, x_t^{\top}\widehat{\theta^s} + k_t^*\epsilon)$, respectively, and $\LCB_t(k^*_t, k_t^{'*})\leq \UCB_t(k_t, k'_t)$ by Line \ref{algline:k successive elimination}.
Then, it holds that $\egft(x_t, x_t^{\top}\widehat{\theta^s} + k_t^*\epsilon) - \egft(x_t, p_t)$ is of order $\UCB_t(k_t, k'_t)- \LCB_t(k_t, k'_t) + \UCB_t(k^*_t, k_t^{'*}) - \LCB_t(k^*_t, k_t^{'*})$. Our choice of $(k_t, k'_t)$ ensures that $\min\{N^{k_t,s}, N^{k'_t, b}\} \leq \min\{ N^{k^*_t, s}, N^{k_t^{'*}, s}\}$ (Line \ref{algline:k min}).
Then, by definition of $\UCB$ and $\LCB$, and since the discretization is at most $O(\epsilon)$, we get $\sr(x_t, p_t) \lesssim \widetilde O\big(\sqrt{\nicefrac{1}{N^{k_t,s}}}+ \sqrt{\nicefrac{1}{N^{k'_t,b}}} + \epsilon\big)$. 
%We prove similarly that if $t \in \cT^s_{k'} \defeq \{t\notin \cTpar \cup \cTint: k'_t = k', N^{k_t,s}\geq N^{k'_t, b}\}$, then $\sr(x_t, p_t) \lesssim \sqrt{\nicefrac{\log(\nicefrac{1}{\delta})}{N^{k',s}}}+\epsilon$.

To bound the regret of the Successive Elimination phase, we consider separately the rounds (indexed by $\cT^s_{k}$) where this bound is dominated by $\sqrt{\nicefrac{1}{N^{k,s}}}$ and $k_t = k$, and the rounds (indexed by $\cT^b_{k'}$) where it is dominated by $\sqrt{\nicefrac{1}{N^{k',b}}}$ and $k_t' = k'$. Then, we decompose the regret of the Successive Elimination phase as $\sum_{t \notin \cTpar \cup \cTint} \sr(x_t, p_t) = \sum_{k \in \cK} (\sum_{t \in \cT^s_k} \sr(x_t, p_t) + \sum_{t \in \cT^b_k} \sr(x_t, p_t)).$ 
%Assume for the sake of simplicity that $\cT_k^s = \{1, \dots, \T^s_k\}$, and let $(\sr_a)_{a \leq \overline{a}}$ be a decreasing sequence of gaps such that $\sr_{\overline{a}}\geq\epsilon$. The previous result implies that if $t \in [\log(1/\delta)/\sr_a^2, \log(1/\delta)/\sr_{a+1}^2]$, then $\sr(x_t, p_t) \lesssim \sr_a$. This implies that
Choosing $\sr_a = 2^{-a}$ and $\sr_{\overline{a}} \approx \epsilon$, we consider the decreasing sequence of intervals $(\sr_a)_{a \leq \overline{a}}$. The previous result implies that if $N^{k, s} \in [\widetilde O(\nicefrac{1}{\sr_a^2}), \widetilde O(\nicefrac{1}{\sr_{a+1}^2})]$, then $\sr(x_t, p_t) \lesssim \sr_a$. This implies that
%
% ----
% Now, for $k \in \cK$, let us denote by $\tau_1^k < ... < \tau_{\T^s_k}^k$ the rounds in $\cT^s_k$. Moreover, we define $\sr_a = 2^{-a}$, we set $\overline{a}$ such that $\sr_{\overline{a}}\approx 2\epsilon$, and for $a \in \llbracket 1, \overline{a}\rrbracket$, we define $\mathfrak{t}_a$ such that the bound $\sqrt{\nicefrac{\log(\nicefrac{1}{\delta})}{\mathfrak{t}_a}}+\epsilon$ is of order $\sr_a$. Then, our previous reasoning implies that for $i \in [\mathfrak{t}_a, \mathfrak{t}_{a+1}]$, $\sr(x_{\tau_i^k}, p_{\tau_i^k})\lesssim \sr_a$. Moreover, each phase $[\mathfrak{t}_a, \mathfrak{t}_{a+1}]$ has a length of order $\sr_{a+1}^{-2} - \sr_{a}^{-2}$. Thus,
% ----
% \[
% \sum_{t \in \cT^s_k} \sr(x_t, p_t) \lesssim 2P\frac{\log(1/\alpha)}{\sr_{1}^{-2}} + \sum_{1 \leq a \leq  \overline{a}-1}\sr_a\left(\frac{\log(1/\alpha)}{\sr_{a+1}^{-2}} - \frac{\log(1/\alpha)}{\sr_{a}^{-2}}\right) + \sr_{\overline{a}} \vert \cT_k^s\vert.
% \]
%
\[
\sum_{t \in \cT^s_k} \sr(x_t, p_t) \lesssim \widetilde O\bigg(\sr_{1}^{2} + \sum_{1 \leq a \leq  \overline{a}-1}\sr_a\left(\sr_{a+1}^{2} - \sr_{a}^{2}\right) + \sr_{\overline{a}} \vert \cT_k^s\vert\bigg).
\]
This sum is of order $\T^s_k\epsilon + \nicefrac{\log(\nicefrac{1}{\delta})}{\epsilon}$. We sum over $(k, k') \in \cK$, and use the fact that $\sum_{k\in \cK} \T^s_k+\T^b_k \leq T$ while $\vert \cK\vert \approx \epsilon^{-1}$, to conclude that $R_T=\widetilde O(T^{\nicefrac23})$.
\end{sproof}

\begin{remark}
\citet{CesaBianchiCCF21} studied the related problem of non-contextual stochastic bilateral trade under similar assumptions (independent valuations with bounded densities and support). They obtain a lower bound on the regret of order $\Omega(T^{\nicefrac23})$. This shows that, up to the constant $\widetilde C'$, the dependence on the dimension $d$, and logarithmic factors, \Cref{alg:SBLBT} enjoys minimax optimal regret.
\end{remark}
%%%%%%%%%%%%%%%%%%%%%%%%%%%%%%%%%%%%%%%%%%%%%%%%%%%%%%%%%%%%
\section{Trading off budget balance constraints for feedback}\label{sec:one bit}

In this section, we show that it is possible to relax the budget balance constraints in order to handle scenarios with limited feedback. In particular, we show that any EOC-like algorithm for two-bit feedback (for instance, \Cref{alg:EPBT,alg:explore-commit}) can be suitably adapted to handle settings with one-bit feedback. To achieve this, we resort to the notion of \textit{global budget balance} recently introduced by \citet{bernasconi2023no} and we show that budget can be exploited to compensate for the lack of feedback.
The notion of global budget balance requires the learner to be budget balanced only ``overall'' (\ie over the whole time horizon). In particular, let $\prof_t(p,q)\defeq \indicator{s_t\le p}\indicator{q\le b_t}(q-p)$ be the profit extracted by the learner at time $t$ by posting prices $(p,q)$. Then, the learning algorithm is global budget balanced if the following inequality holds: $\sum_{t=1}^T\prof(p_t,q_t)\ge 0$.

Given an EOC-like algorithm for the two-bit strong budget balance set-up, the idea is to simulate its exploration rounds by doing separate updates for the seller and the buyer, each using a single bit of feedback. In particular, each time in which the two-bit algorithm uses the seller's feedback $\indicator{s_t\le p}$, the one-bit algorithm needs to actively collect that information by posting $(p,-P)$ instead of $(p,p)$. 
Notice that, even under one-bit feedback, if the learner posts price $(p,-P)$ it is able to observe $\indicator{s_t \le p}$ since the buyer is always going to accept the trade. 
Analogously, when the two-bit algorithm would employ $\indicator{b_t\ge p}$, the one-bit algorithm has to collect that information by posting $(P,p)$. 
For instance, instead of executing $\textsc{Est-Par}$, the one-bit algorithm runs two exploration phases: one in which it updates only $\widehat\theta^s$ (Line \ref{algline:update theta s} of $\textsc{Est-Par}$) by posting $(p_t,-P)$, and one in which it updates only $\widehat\theta^b$ by posting $(P,p_{t'})$. The same happens for the other estimation subroutines.

The one-bit algorithm simulates access to finer-grained feedback by posting pairs of prices $(P,p)$, $(p,-P)$. These prices result in a negative GFT. By enforcing global budget balance, we allow the learning algorithm to set prices that are not budget balanced individually, as long as the overall budget balance is maintained. To do that, we add an initial ``budget-collection'' phase where the learning algorithm builds up enough profit to be able to sustain the following exploration round. 
We demonstrate that sufficient profit can be accumulated during the budget-collection phase without incurring excessive regret relative to GFT. This allows us to provide sublinear regret guarantees under one-bit feedback, with rates depending on some natural characteristics of the underlying market.

% Then we can generate a four phase routine which divides $T$ into $4$ phases
% \begin{itemize}
%     \item gain $K=|T_{est,b}| + |T_{est,s}|$ budget
%     \item play $(p_t,1)$ where $p_t$ is the price given by $\Alg_{est,b}$
%     \item play $(0,p_t)$ where $p_t$ is the price given by $\Alg_{est,s}$
%     \item play $(p_t,p_t)$ given 
% \end{itemize}

\xhdr{Collecting budget.} Let $\budget$ be the budget required by the one-bit algorithm to cover the rounds in which it must post either $(P, p)$ or $(p, -P)$ to collect information (the value of $\budget$ will be set later). Therefore, the budget collection phase ends as soon as $\sum_{t}\prof(p_t,q_t)\ge \budget$. We denote by $\tau$ the random variable indicating the last round of the budget-collection phase. Moreover, let $\cTB\defeq\range{\tau}$ be the random set of rounds designated for budget collection.
For each round $t\in\cTB$, the algorithm samples a price $p_t$ uniformly from $[-P,P]$, and a value $i_t$ uniformly from $[0,\log T]$. Then, it posts a pair of prices $(p_t,q_t)$, where $q_t=p_t+2^{-i_t}$.
We start by providing a lower bound to the per-round expected profit obtained by posting $(p_t,p_t+2^{-i_t})$.

\begin{restatable}{lemma}{profitLemma}\label{lemma:profit}
For each round $t\in\cTB$ such that $b_t\ge s_t$, it holds:
$
\E{\prof_t(p_t,q_t)} \ge  \frac{(b_t-s_t)^2}{8P\log T}-\frac{2}{T},
$
where the expectation is with respect to the choice of $(p_t,q_t)$.
\end{restatable}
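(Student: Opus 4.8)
The plan is to compute $\E{\prof_t(p_t,q_t)}$ directly by conditioning on the value $i_t$ and integrating over $p_t \sim \cU([-P,P])$. Fix a round $t \in \cTB$ with $b_t \ge s_t$, and write $w \defeq b_t - s_t \ge 0$ for the width of the valuation interval and $\eta \defeq 2^{-i_t}$ for the sampled price gap, so $q_t = p_t + \eta$. The profit $\prof_t(p_t,q_t) = \indicator{s_t\le p_t}\indicator{p_t+\eta\le b_t}\cdot \eta$ is exactly $\eta$ when $p_t \in [s_t, b_t - \eta]$ and $0$ otherwise; this interval is nonempty precisely when $\eta \le w$. Since $p_t$ is uniform on an interval of length $2P$, for a fixed $\eta \le w$ we get $\E{\prof_t \mid i_t} = \eta \cdot \frac{(w-\eta)^+}{2P} = \frac{\eta(w-\eta)}{2P}$ when $\eta\le w$, and $0$ otherwise.

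**Integrating over the random scale $i_t$.**
Next I would average over $i_t \sim \cU([0,\log T])$. The key point is that there is a dyadic scale $\eta = 2^{-i}$ that is comparable to $w$: pick $i^\star$ with $w/2 \le 2^{-i^\star} \le w$ (assuming $w$ is not so small that all admissible scales are excluded, which is where the $-2/T$ slack enters). For such $i$, $\eta(w-\eta) \gtrsim w^2$ — concretely, if $\eta \in [w/2, 3w/4]$ then $\eta(w-\eta) \ge (w/2)(w/4) = w^2/8$. The set of $i \in [0,\log T]$ with $2^{-i} \in [w/2, 3w/4]$ has length $\log_2(3/2)$, a constant fraction, but to keep constants clean I would instead just lower-bound using the single interval trick: the measure of $i$ for which $2^{-i}\in[w/2,w]$ is $1$ (one unit of $i$), and on a sub-interval of it of length bounded below by a constant the integrand is $\ge w^2/8$. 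Dividing by the normalization $\log T$ gives $\E{\prof_t} \gtrsim \frac{w^2}{P\log T}$; tracking the exact constant to land on $\frac{w^2}{8P\log T}$ requires being slightly careful about which dyadic sub-interval is used and about the case $w$ large (where $w$ may exceed $1$ but is capped by $2P$, so $\eta \le 1 \le w$ still works and $\eta(w-\eta)$ can be bounded below by something like $w/2 \cdot \text{const}$, and since $w \le 2P$ one still gets $\gtrsim w^2/P$).

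**Handling the small-$w$ / truncation correction.**
The $-2/T$ term accounts for the regime where $w = b_t - s_t$ is tiny, specifically smaller than the smallest available dyadic scale issue, or more precisely: the argument above needs $i_t$ to land in a favorable range, and when $w$ is small the favorable range of $i$ might push $i$ beyond $\log T$ only if $w < 1/T$ roughly. I would argue: if $w \ge 1/T$ then there is a valid $i \in [0,\log T]$ (since $2^{-\log T} = 1/T \le w$) and the main bound applies giving $\E{\prof_t}\ge \frac{w^2}{8P\log T}$ outright; if $w < 1/T$ then trivially $\frac{w^2}{8P\log T} < \frac{1}{8P T^2 \log T} \le \frac{2}{T}$, so $\E{\prof_t} \ge 0 \ge \frac{w^2}{8P\log T} - \frac{2}{T}$ holds vacuously. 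The main obstacle is therefore purely bookkeeping: chasing the exact constant $8$ through the integration over the dyadic scale and confirming the $2/T$ slack exactly covers the boundary case, rather than any conceptual difficulty. I would double-check the normalization of $i_t$ (is it $\cU([0,\log T])$ with natural or base-2 log, and whether $\log T$ or $\lfloor \log_2 T\rfloor$) since that directly affects the constant.
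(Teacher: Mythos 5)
Your proposal takes the same approach as the paper's proof: a dyadic pigeonhole over $i_t$ combined with the uniform draw of $p_t$ landing close to $s_t$. The paper conditions on the event $p_t\in[s_t,(s_t+b_t)/2]$ and then picks the $i_t$ for which $2^{-i_t}$ is comparable to $b_t-p_t$; you dually compute $\E{\prof_t\mid i_t}=2^{-i_t}\bigl(w-2^{-i_t}\bigr)^+/(2P)$, with $w\defeq b_t-s_t$, and then choose a good $i_t$. These are two views of the same computation, both hinging on the same pigeonhole (each dyadic scale carries probability $\approx 1/\log T$) and the same degenerate small-$w$ case for the $-2/T$ slack. One small point: the threshold for the vacuous case should be $w\le 2/T$ rather than $w<1/T$, because the favorable scale $\eta\approx w/2$ is only attainable (i.e.\ $\ge 1/T$, hence $i\le\log T$) once $w\ge 2/T$; the paper's case split is exactly at $2/T$ for this reason.

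The genuine gap, which you flag but do not close, is the large-$w$ regime. When $w>2$, every available dyadic scale $\eta=2^{-i}\in(0,1]$ satisfies $\eta\le 1<w/2$, so the quadratic bound $\eta(w-\eta)\ge w^2/8$ simply does not hold. Your fallback chain (``$\eta(w-\eta)\gtrsim w$, and $w\gtrsim w^2/P$ since $w\le 2P$'') in fact multiplies out to $\E{\prof_t}\gtrsim w^2/(P^2\log T)$, a full factor of $P$ weaker than the claimed $w^2/(8P\log T)$, and the $-2/T$ slack cannot absorb the shortfall. This is not a defect unique to your sketch: the paper's own choice $i_t=\lfloor\log(1/(b_t-p_t))\rfloor$ also leaves the range $[0,\log T]$ once $b_t-p_t>1$, which can occur under the conditioning event as soon as $w>2$ (and, separately, $\lfloor\cdot\rfloor$ gives $2^{-i_t}\ge b_t-p_t$, hence $q_t\ge b_t$ and no trade; $\lceil\cdot\rceil$ is the choice that guarantees $q_t\le b_t$). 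But since your plan explicitly defers this case, you should be aware that as written it does not recover the stated constant for $w>2$: closing it requires either a normalization under which $w\le 2$ always holds, or a price gap rescaled to $2P\cdot 2^{-i_t}$ so that $\eta\approx w/2$ is always reachable.
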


The guarantees provided by the one-bit algorithm depend on how ``active'' is the market. We start by assuming that the market is well-behaved, meaning that trade opportunities arise frequently enough. Formally, we require that for a sufficiently long time  $t'$, the cumulative social welfare is $\Omega(t')$.

\begin{assumption}\label{assumption:active market}
Given $\alpha\in(0,1)$, for each $t'\ge\log(T)$, it holds $\sum_{t \in \range{t'}}(b_t-s_t) \ge \alpha t'$.\footnote{The case in which the condition holds with high probability can be addressed with minor variations. }
\end{assumption}

\Cref{remark: one bit no assumption} described the guarantees that can be obtained in the absence of this assumption.
We observe that the budget-collection phase will have at least length $\log T$ since $\tau\in[\nicefrac{\budget}{2P},T]$, and $\budget=\Omega( \log T)$ even in the setting without noise. Then, assuming the environment allows for enough budget to be collected during the first phase, we can show that the profit accumulated by time $\tau$ is at least a factor of $1/\log T$ of the cumulative GFT up to that point, assuming all trades up to $\tau$ had occurred.
\begin{restatable}{lemma}{logTgft}\label{lemma:logT gft}
    Under \Cref{assumption:active market}, for $\tau\ge\log T$, it holds with probability at least $1-1/T$ that
    \[ \sum_{t \in \range{\tau}} \prof_t(p_t,q_t) \ge
    \frac{\alpha}{8 P \log(T)}  \sum_{t\in \range{\tau}} [b_t-s_t]^+ - \sqrt{4 P^2 \log(T)\sum_{t \in \range{\tau}} [b_t-s_t]^+ } -2.\]
\end{restatable}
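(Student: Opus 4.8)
The plan is to combine the per-round profit bound of \Cref{lemma:profit} with \Cref{assumption:active market} and a concentration argument. First I would sum the inequality of \Cref{lemma:profit} over all rounds $t\in\range{\tau}$ with $b_t\ge s_t$, noting that the terms with $b_t<s_t$ contribute $\prof_t\ge 0$ in expectation (indeed if $b_t<s_t$ no profitable trade with $p_t\le q_t$ exists, but since we post $q_t>p_t$ the indicator $\indicator{s_t\le p_t}\indicator{q_t\le b_t}$ forces $s_t\le p_t<q_t\le b_t$, a contradiction, so $\prof_t=0$). This yields
\[
\sum_{t\in\range{\tau}}\E{\prof_t(p_t,q_t)}\ \ge\ \frac{1}{8P\log T}\sum_{t\in\range{\tau}}[b_t-s_t]^+\ -\ \frac{2\tau}{T}.
\]
Since $\tau\le T$ the additive error is at most $2$. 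Then I would invoke \Cref{assumption:active market}: because $\tau\ge\log T$, we have $\sum_{t\in\range{\tau}}[b_t-s_t]^+\ge\sum_{t\in\range{\tau}}(b_t-s_t)\ge\alpha\tau$, but more usefully I keep the sum $\sum_{t\in\range{\tau}}[b_t-s_t]^+$ as is, since the statement is phrased in terms of it.

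Next I would pass from the expectation to a high-probability bound. The randomness is in the choices $(p_t,q_t)=(p_t,p_t+2^{-i_t})$, which are independent across rounds; $\tau$ is a stopping time with respect to this filtration. The summands $\prof_t(p_t,q_t)$ lie in $[-2P,2P]$ (in fact in $[0,2^{-i_t}]\subseteq[0,1]$ on the trade event, but a crude $[-2P,2P]$ bound suffices). Consider the martingale difference sequence $\prof_t(p_t,q_t)-\E{\prof_t(p_t,q_t)\,\vert\,\mathcal F_{t-1}}$ and apply a Freedman- or Azuma-type inequality, using the optional-stopping / Doob argument to handle the random stopping time $\tau\le T$. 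Since the conditional variance of each increment is at most $\E{\prof_t^2}\le 2P\,\E{\prof_t}$ (using $\prof_t\le 2P$), the total conditional variance is controlled by $2P\sum_{t\in\range{\tau}}\E{\prof_t}$, which by the display above is at most roughly $2P\cdot\frac{1}{8P\log T}\sum_{t\in\range{\tau}}[b_t-s_t]^+ + \mathrm{const}\le \frac{1}{\log T}\sum_{t\in\range{\tau}}[b_t-s_t]^+$. A Bernstein/Freedman bound then gives a deviation of order $\sqrt{\log(T)\cdot\frac{1}{\log T}\sum[b_t-s_t]^+}\cdot\sqrt{\log T}=\sqrt{\log(T)\sum[b_t-s_t]^+}$ up to constants, with failure probability $1/T$; matching constants to the stated $\sqrt{4P^2\log(T)\sum_{t\in\range{\tau}}[b_t-s_t]^+}$ is a routine bookkeeping step. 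Combining,
\[
\sum_{t\in\range{\tau}}\prof_t(p_t,q_t)\ \ge\ \frac{1}{8P\log T}\sum_{t\in\range{\tau}}[b_t-s_t]^+\ -\ \sqrt{4P^2\log(T)\sum_{t\in\range{\tau}}[b_t-s_t]^+}\ -\ 2,
\]
and absorbing the $1/(8P\log T)$ into $\alpha/(8P\log T)$ by noting $\alpha\le 1$ (so the RHS only decreases, keeping the bound valid) gives exactly the claimed inequality. Actually the cleaner route is to apply \Cref{assumption:active market} to replace part of the leading term, but since $\alpha\le1$ the stated form follows directly from the $\alpha=1$-style bound; I would present it by carrying the factor $\alpha$ through from the start if a tighter reading is needed, otherwise by monotonicity.

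The main obstacle I anticipate is the concentration step with a random stopping time $\tau$: one must be careful that $\tau$ depends on the past realizations of $\prof_t$, so a naive union bound over all possible values of $\tau\in\{1,\dots,T\}$ (each costing $1/T^2$, say) or a maximal inequality for the stopped martingale is needed to keep the overall failure probability at $1/T$. A secondary subtlety is getting the variance proxy right so that the deviation term scales as $\sqrt{\log(T)\sum[b_t-s_t]^+}$ rather than the weaker $\sqrt{T\log T}$ one would get from a variance-oblivious Azuma bound — this is why a Bernstein-type inequality, leveraging $\E{\prof_t^2}\le 2P\,\E{\prof_t}$ together with the first-moment bound from \Cref{lemma:profit}, is essential. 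Everything else — summing \Cref{lemma:profit}, discarding the $b_t<s_t$ rounds, and bounding $2\tau/T\le 2$ — is immediate.
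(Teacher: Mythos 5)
There is a genuine gap, and it stems from misquoting \Cref{lemma:profit}. That lemma lower-bounds the per-round expected profit by $\frac{(b_t-s_t)^2}{8P\log T}-\frac2T$ — note the \emph{square}. Summing it therefore gives
\[
\sum_{t\in\range{\tau}}\E{\prof_t(p_t,q_t)}\ \ge\ \frac{1}{8P\log T}\sum_{t\in\range{\tau}}\big([b_t-s_t]^+\big)^2\ -\ 2,
\]
not the linear sum $\frac{1}{8P\log T}\sum_t [b_t-s_t]^+$ that your display asserts. The entire purpose of \Cref{assumption:active market} in this lemma is precisely to bridge that gap: by Jensen (or Cauchy--Schwarz), $\sum_{t\le\tau}([b_t-s_t]^+)^2 \ge \frac{1}{\tau}\big(\sum_{t\le\tau}[b_t-s_t]^+\big)^2$, and then the active-market assumption gives $\sum_{t\le\tau}[b_t-s_t]^+\ge\alpha\tau$, so one may divide once by the linear sum to obtain $\sum_t([b_t-s_t]^+)^2\ge\alpha\sum_t[b_t-s_t]^+$. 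This is where the factor $\alpha$ genuinely enters; it is \emph{not} a cosmetic slack that can be introduced ``by monotonicity, since $\alpha\le1$.'' Without the assumption the inequality simply fails: if most rounds have $b_t-s_t$ tiny, the sum of squares can be arbitrarily smaller than the linear sum, and no $\alpha$-free version of the claim holds. Because you dropped the square, you concluded the Jensen and active-market steps were unnecessary and that $\alpha$ was vestigial — both of which are incorrect, and the proof as written does not establish the claimed bound.

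The remaining ingredients in your sketch are broadly compatible with the paper's argument. Handling rounds with $b_t<s_t$ (where $\prof_t=0$ and $[b_t-s_t]^+=0$) is indeed harmless, and the concentration step via a martingale inequality plus a union bound over the $T$ possible values of the stopping time $\tau$ is exactly what the paper does. One refinement worth noting: the paper does not need a Bernstein/Freedman bound. It applies Azuma--Hoeffding directly with the \emph{per-round} increment bound $0\le\prof_t\le[b_t-s_t]^+$, which yields the deviation $\sqrt{2\log(1/\delta)\sum_t([b_t-s_t]^+)^2}$; then $([b_t-s_t]^+)^2\le 2P\,[b_t-s_t]^+$ turns this into the stated $\sqrt{4P^2\log(T)\sum_t[b_t-s_t]^+}$ form. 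Your observation that a variance-oblivious uniform bound $[-2P,2P]$ would only give $\sqrt{T\log T}$ is correct — but the remedy is the nonuniform Azuma increments, not the heavier Bernstein machinery, and either way the same squared sum $\sum_t([b_t-s_t]^+)^2$ appears and must be related to the linear sum, which again requires the Jensen plus active-market argument you omitted.
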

Given a two-bit algorithm $\algA$ such that, with probability at  least $1-1/T$, it requires at most  $\te$ exploration rounds, and has regret for the commit phase of $\regtwo$, by setting the overall budget to  $\budget=\max\mleft\{2048P^4\alpha^{-2}\log^3T,\, 2P\te\mright\}$, the one-bit learning algorithm has the following guarantees.\footnote{We observe that, with probability at most $1/T$, the budget $\budget$ is not sufficient to cover $\te$ exploration rounds. When this occurs, we stop the exploration and start posting arbitrary strong budget balanced prices.} 

\begin{restatable}{theorem}{oneBitThm}\label{thm: one bit regret}
Under \Cref{assumption:active market}, given the two-bit algorithm $\algA$, the corresponding one-bit learning algorithm satisfies global budget balance and, with probability at least $1-1/T$, has regret \[\regone\le  O(\te)\cdot \textnormal{polylog}(T) +\regtwo.\]
\end{restatable}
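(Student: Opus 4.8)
The plan is to split the regret incurred by the one-bit algorithm into three parts according to the phase in which each round falls: the budget-collection phase $\cTB = \range{\tau}$, the simulated exploration rounds, and the commit rounds. I will bound each contribution separately and then combine.

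\textbf{Step 1: Bounding the regret of the budget-collection phase.} During rounds $t\in\cTB$ the algorithm posts prices $(p_t, p_t+2^{-i_t})$ that are not tuned for gain from trade, so in the worst case each such round costs $[b_t-s_t]^+ \le 2P$ in regret. Hence the regret of this phase is at most $2P\,\tau$. The crux is to upper bound $\tau$. By \Cref{lemma:logT gft} (combined with \Cref{lemma:profit}), the accumulated profit by round $\tau$ grows like $\tfrac{\alpha}{8P\log T}\sum_{t\le\tau}[b_t-s_t]^+ - \sqrt{4P^2\log T\sum_{t\le\tau}[b_t-s_t]^+} - 2$, which under \Cref{assumption:active market} is $\gtrsim \tfrac{\alpha^2}{P\log T}\tau - \sqrt{P^2\log T \cdot \tau} - 2$. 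Setting this expression equal to $\budget$ and solving for $\tau$ shows that the phase ends once $\tau = O\big(\tfrac{P\log T}{\alpha}\budget + \tfrac{P^4\log^3 T}{\alpha^2}\big)$ — this is where the choice $\budget = \max\{2048 P^4\alpha^{-2}\log^3 T,\ 2P\te\}$ is calibrated so that the additive $\sqrt{\cdot}$ and constant terms are absorbed. Plugging in $\budget$, we get $\tau = O(\te)\cdot\mathrm{polylog}(T)$, and therefore the budget-collection regret is $O(\te)\cdot\mathrm{polylog}(T)$ as well. One subtlety: \Cref{lemma:logT gft} holds with probability $1-1/T$, and on the complement event (where budget may be insufficient) we fall back to arbitrary strong-budget-balanced prices, which is exactly the footnoted escape clause — this contributes only to the $1/T$ failure probability.

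\textbf{Step 2: Bounding the regret of the simulated exploration rounds.} Each exploration round of $\algA$ (which uses one of the two feedback bits) is simulated by the one-bit algorithm using one round with prices $(p,-P)$ or $(P,p)$. Since $\algA$ uses at most $\te$ exploration rounds with probability $1-1/T$, the one-bit algorithm uses at most $\te$ such rounds, and each costs at most $2P$ in GFT regret (again bounding $[b_t-s_t]^+ \le 2P$ trivially, since these prices guarantee no useful trade). So this contributes $O(P\,\te) = O(\te)$, absorbed into the $O(\te)\cdot\mathrm{polylog}(T)$ term. I should also note that these rounds carry negative profit; this is precisely what $\budget$ was sized to cover, so global budget balance is preserved: the profit banked in Step 1 is at least $2P\te$, and each exploration round drains at most $2P$ from it.

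\textbf{Step 3: Commit rounds and assembly.} On the remaining rounds the one-bit algorithm plays exactly the commit strategy of $\algA$ — posting the identical-price pair $(p_t,p_t)$ that $\algA$ would post — and since the estimates fed into it were produced from feedback statistically identical to what $\algA$ would have seen, the commit-phase regret is bounded by $\regtwo$ with probability $1-1/T$. Summing the three contributions gives $\regone \le O(\te)\cdot\mathrm{polylog}(T) + \regtwo$ on an event of probability at least $1-1/T$ (union-bounding the failure events of $\algA$ and of \Cref{lemma:logT gft}). Global budget balance holds deterministically once we are on the good event, because the commit rounds are individually budget balanced and the budget-collection surplus dominates the exploration deficit.

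\textbf{Main obstacle.} The delicate part is Step 1: controlling $\tau$ requires turning the implicit inequality "profit $\ge \budget$" into an explicit upper bound on $\tau$, and the $-\sqrt{4P^2\log T\sum[b_t-s_t]^+}$ term in \Cref{lemma:logT gft} means one cannot simply invert a linear relation — one must verify that for $\tau$ of the claimed order the linear term $\tfrac{\alpha}{8P\log T}\sum[b_t-s_t]^+$ genuinely dominates, which is exactly why the two terms in the $\max$ defining $\budget$ appear (one to dominate the square-root slack, one to cover the exploration cost). Getting these constants to line up, and handling the low-probability event where the market does not supply enough welfare to reach $\budget$ (the footnote fallback), is the only real work; everything else is bookkeeping.
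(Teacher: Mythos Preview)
Your proposal is correct and follows essentially the same three-phase decomposition as the paper. The one noteworthy difference is in Step~1: you bound the budget-collection regret by $2P\tau$ and then upper-bound $\tau$ by inverting the profit lower bound from \Cref{lemma:logT gft} combined with \Cref{assumption:active market} (yielding $\tau = O(\tfrac{P\log T}{\alpha^2}\budget)$), whereas the paper bounds the budget-collection regret directly by $2\sum_{t\le\tau}[b_t-s_t]^+$ (via an Azuma step relating $\max_p\egft$ to realized $[b_t-s_t]^+$) and then inverts \Cref{lemma:logT gft} to bound $\sum_{t\le\tau}[b_t-s_t]^+ \le \tfrac{16P\log T}{\alpha}(\budget+1)$ using that the accumulated profit is at most $\budget+1$. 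Your route is slightly cruder---it loses an extra factor of $1/\alpha$---but avoids the additional Azuma concentration event; both yield the stated $O(\te)\cdot\mathrm{polylog}(T)$ bound since $\alpha$ is a constant. Two minor points: some two-bit exploration rounds use both feedback bits, so the one-bit simulation can take up to $2\te$ rounds rather than $\te$ (a harmless constant); and global budget balance holds by construction even on the bad event, since the fallback posts strongly budget-balanced prices.
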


For instance, under the set-up of \Cref{sec:noisy}, we can use \Cref{alg:explore-commit} as $\algA$. Then, $\budget\le 4P\te = 4P \cdot 2P\left(\vert\cTpar_{T+1} \vert + \Tint + 2\vert \cK\vert \TFD\right)$, which is of order $\widetilde O(P^4 d T^{\nicefrac34})$. Therefore, the final regret bound in the noisy, one-bit setting with global budget balance is of order $\widetilde O(T^{3/4})$. A similar argument shows that in the noiseless setting of \Cref{sec:deterministic} the regret is of order $\textnormal{polylog}(T)$.

\begin{remark}\label{remark: one bit no assumption}
In the absence of \Cref{assumption:active market}, we can still establish sublinear regret guarantees under one-bit feedback and global budget balance. The main modification to the previous analysis is halting the proof of \Cref{lemma:logT gft} prior to employing \Cref{assumption:active market}. This results in the following bound:
\[ \textstyle  \sum_{t \in \range{\tau}}  \prof_t(p_t,q_t) \ge \Omega\mleft( \tau^{-1}\mleft(\sum_{t \in \range{\tau}}([b_t-s_t)]^+\mright)^2  \mright)-\widetilde O(\sqrt{T}).  \] 
By selecting an appropriate duration for the exploration phase of the two-bit algorithm, we can employ this inequality to bound the regret during the budget-collection phase. For example, by setting $\cte\approx \log T$, we achieve a regret of $\widetilde O(T^{3/4})$ in the deterministic set-up of \Cref{sec:deterministic}, and by setting $\te\approx T^{\nicefrac34}$ we achieve a regret of $\widetilde O(T^{\nicefrac78})$ in the noisy set-up of \Cref{sec:noisy}.
\end{remark}

\clearpage
\section*{Acknowledgments}

MB, MC, and AC are partially supported by the FAIR (Future Artificial Intelligence Research) project PE0000013, funded by the NextGenerationEU program within the PNRR-PE-AI scheme (M4C2, investment 1.3, line on Artificial Intelligence). MC is also partially supported by the EU Horizon project ELIAS (European Lighthouse of AI for Sustainability, No. 101120237). AC is partially supported by MUR - PRIN 2022 project 2022R45NBB funded by the NextGenerationEU program. VP acknowledges support from the French National Research Agency (ANR) under grant number ANR-19-CE23- 0026 as well as from the grant “Investissements d’Avenir” (LabEx Ecodec/ANR-11-LABX-0047). SG gratefully acknowledges funding from the Hadamard Doctoral School of Mathematics (EDMH).

% Bibliography
\bibliographystyle{plainnat}
\bibliography{refBT}

\clearpage
\appendix

\section{Basic facts about ellipsoids}\label{app:ellipsoid}

An ellipsoid $E$ with center $c\in\R^d$ and positive definite matrix $\bM\in\R^{d\times d}$ is defined as 
\[
E(\bM,c)\defeq \mleft\{x\in\R^d: \|x-c\|_{ \bM^{-1}}\le 1\mright\}.
\]
Given an hyperplane $H(a,c)\defeq\{x\in\R^d:a^\top(x-c)\le 0\}$, the minimal volume ellipsoid containing $K=E(\bM,c)\cap H(a,c)$ can be computed in closed form as follows. The new center is
\[
c'=c-\frac{1}{d+1}\cdot\frac{1}{\sqrt{a^\top\bM a}}\bM a,
\]
and the new positive definite matrix is 
\[
\bM'=\frac{d^2}{d^2-1}\mleft(\bM - \frac{2}{d+1}\cdot\frac{\bM aa^\top\bM}{a^\top\bM a}\mright).
\]
Then, the L\"owner-John ellipsoid $\LJ(K)$ of the half ellipsoid $K$ is the ellipsoid $E(\bM',c')$.

Going from $E(\bM,c)$ to $E(\bM',c')$ the volume shrinks by the following amount:
\[
\textnormal{vol}(E(\bM',c'))\le e^{\nicefrac{-1}{2d}}\textnormal{vol}(E(\bM,c)).
\]
An in-depth analysis of these results can be found in \citet{grotschel2012geometric}.
\section{Proof of Theorem \ref{thm:no_noise_2_bits}}\label{app:2bitnonoise}

\noNoiseEllipsoid*

\begin{proof}
The proof of \Cref{thm:no_noise_2_bits} relies on the fact that when the algorithm updates $E^{s}_{t}$ or $E^{b}_{t}$, it behaves as the \textsc{EllipsoidPricing} algorithm of \citet{Cohen_feature_DP}. More precisely, we first note that under \Cref{ass:bounded_context}, $E^s_1$ and $E^b_1$ contain $\theta^s$ and $\theta^b$, respectively. Moreover, straightforward induction shows that if $E^s_t$ contains $\theta^s$, then $K^s_{t+1}$ also contains $\theta^s$, and so does $E^s_{t+1}$ (similar reasoning applies to $\theta^b$ and $E^b_t$). 

This implies that for all rounds, $\underline{s_t} \leq s_t \leq \overline{s_t}$, and $\underline{b_t} \leq b_t \leq \overline{b_t}$. In particular, if the condition in Line \ref{algline:exploit} is verified, we have $s_t \leq p_t \leq b_t$, and so the instantaneous regret at that round is 0.

We also underline that \Cref{alg:EPBT} only explores to estimate $\theta^s$ (Line \ref{algline:explore_s}) if the condition 
$$\max_{\theta \in E^s_t}x_t^{\top}\theta - \min_{\theta \in E^s_t}x_t^{\top}\theta \geq \epsilon$$
is verified, or equivalently if
$$\max_{\theta \in E^s_t}\frac{x_t^{\top}}{B}\theta - \min_{\theta \in E^s_t}\frac{x_t^{\top}}{B}\theta \geq \frac{\epsilon}{B}.$$
Since the normalized contexts $\frac{x_t^{\top}}{B}$ are bounded in norm by 1 by \Cref{ass:bounded_context}, we can apply Lemma 1 by \citet{Cohen_feature_DP}. This Lemma states that \Cref{alg:EPBT} will execute the steps in Lines \ref{algline:explore_s_beg}-\ref{algline:explore_s_end} at most $2d^2\log(20A(d+ 1)/\frac{\epsilon}{B})$ times, after which the condition in Line \ref{algline:explore_s} will never be verified again. A similar reasoning allows us to bound the number of exploration rounds for the buyer's parameter $\theta^b$ (Line \ref{algline:explore_b}). Therefore, the total number of exploration rounds is bounded by $2\times 2d^2\log(20A(d+ 1)/\frac{\epsilon}{B})$. We notice that \Cref{ass:bounded_context} also implies that the valuations are in $[-AB, AB]$, and so the instantaneous regret is bounded by $2AB$. Thus, the regret of the exploration phase (Lines \ref{algline:explore_s} and \ref{algline:explore_b}) is bounded by  $8ABd^2\log(20A(d+ 1)/\frac{\epsilon}{B})$.

Finally, if the conditions on Lines \ref{algline:exploit},  \ref{algline:explore_s} and \ref{algline:explore_b} are not verified, it implies that 
\begin{align*}
    b_t - s_t &\leq \overline{b}_t - \underline{s_t}\\
     &\leq \underline{b}_t + \epsilon - \overline{s_t} + \epsilon\\
     &\leq 2\epsilon.
\end{align*}
Then, the instantaneous regret at that round is at most $2\epsilon$, and the total regret for these steps is at most $2\epsilon T$. Choosing $\epsilon = \frac{ABd^2}{T}$ yields the result.
\end{proof}
\section{Proof of \Cref{thm:etc}}\label{app:explore_commit}
In order to prove Theorem \ref{thm:etc}, we provide a more detailed version of the \textsc{Explore-Or-Commit} algorithm in Algorithm \ref{alg:ETC_detailed}. Note that we adopt the convention $1/0 = +\infty$.

%%%%%%%%%%%%%%%%%%%%%%%%%%%%%%%%%%%%%%

\begin{algorithm}
\caption{\textsc{Explore-Or-Commit} (detailed version)}\label{alg:ETC_detailed}
\begin{algorithmic}[1]
\State \textbf{Input}: parameter $\mu >0$, length of exploration phases $\Tint$ and $\TFD$, discretization size $\epsilon$, confidence level $\delta$.
\State \textbf{Initialize}: $\cTpar_1 = \cTint = \varnothing$, $\widehat{\theta}_1^s = \mathbf{0}_d$, $\widehat{\theta}_1^b = \mathbf{0}_d$, $\bV_1 = \mathbf{I}_d$, $K = \lceil 2P/\epsilon\rceil + 3$, $\cK = \llbracket -K,K\rrbracket$, $\cTF_k = \cTD_k = \varnothing$ for all $k\in\cK$, $\widehat{F}^k =\widehat{D}^k = 0$ for all $k\in\cK$.

\vspace{.5em}
\While{$t \leq T$}
\If{$\Vert x_t\Vert_{\bV_t^{-1}} > \mu$} \Comment{Estimate the parameters $\theta^b$ and $\theta^s$}
    \State Draw $p_t \sim \cU([-P,P])$ and post prices $(p_t,p_t)$
    \State Update $\cTpar_{t+1} \gets \cTpar_t \cup \{t\}$
    \State Update $\bV_{t+1} =  \left(\underset{l \in \cTpar_{t+1}}{\sum} x_l x_l^{\top} +  \bI_d \right)^{-1}$
    \State Update parameter estimates \[\widehat{\theta}^s_{t+1} = 2P\bV_{t+1}\underset{l \in \cTpar_{t+1}}{\sum}\left(\mathbb{I}\{p_l\leq s_l\} - \frac{1}{2}\right)x_l,\quad \widehat{\theta}^b_{t+1} = 2P\bV_{t+1}\underset{l \in \cTpar_{t+1}}{\sum}\left(\mathbb{I}\{p_l \leq b_l\}-\frac{1}{2}\right)x_l\]\label{line:est par detailed}
\ElsIf{$\vert \cTint \vert < \Tint$} \Comment{Estimate the integrals $I$ and $J$}
\State Draw $p_t \sim \cU([-P,P])$ and post prices $(p_t,p_t)$
\State Update $\cTint \gets \cTint \cup \{ t \}$
\If{$\vert \cTint \vert = \Tint$}
\State Compute estimate of integrals $I$ and $J$\label{line:est int detailed}
\For{$k\in \cK$}
\[\widehat{I}^k = \frac{2P}{\Tint}\sum_{l \in \cTint}\mathbb{I}\left\{k\epsilon + x_l^{\top}\widehat{\theta}^b \leq p_l \leq b_l  \right\},\quad \widehat{J}^k = \frac{2P}{\Tint}\sum_{l \in \cTint}\mathbb{I}\left\{s_l \leq p_l \leq k\epsilon + x_l^{\top}\widehat{\theta}^s \right\}.\]
\EndFor
\EndIf
\ElsIf{for some $k\in \cK$, $\vert \cTF_k \vert < \TFD$}\Comment{Estimate $F_k$}
\State Set $p_t = x_t^{\top}\widehat{\theta}^s + k\epsilon$ and post $(p_t,p_t)$
\State Update $\cTF_k \gets \cTF_k \cup \{ t \}$
\If{$\vert  \cTF_k \vert = \TFD$} set \[\textstyle{\widehat{F}^k = \frac{2P}{\TFD}\sum_{l \in \cTF_k}\mathbb{I}\left\{s_l \leq p_l\right\}}\]\EndIf
\ElsIf{for some $k\in \cK$, $\vert \cTD_k \vert < \TFD$}\Comment{Estimate $F_k$}
\State Set $p_t = x_t^{\top}\widehat{\theta}^b + k\epsilon$ and post $(p_t,p_t)$ 
\State Update $\cTD_k \gets \cTD_k \cup \{ t \}$
\If{$\vert  \cTD_k \vert = \TFD$}  set \[\textstyle{\widehat{D}^k = \frac{2P}{\TFD}\sum_{l \in \cTD_k}\mathbb{I}\left\{p_l\leq b_l\right\}}\]
 \EndIf
\Else \Comment{Post greedy price}
\State set $\cA_t = \Bigg\{(k, k') \in \cK^2 : k' = \left\lfloor \frac{x_t^{\top} \left(\widehat{\theta}^s_t - \widehat{\theta}^b_t\right) + k\epsilon}{\epsilon} \right\rfloor\Bigg\}$
\State $k_t = \argmax_{k : \exists k', (k,k') \in \cA_t} \widehat{I}^{k'}\widehat{F}^k+\widehat{J}^{k} \widehat{D}^{k'}$.
\State Set $p_t = x_t^{\top}\widehat{\theta}^s_t + k_t\epsilon$ and post $(p_t,p_t)$
\EndIf
\EndWhile
\end{algorithmic}
\end{algorithm}

Before analyzing the \textsc{Explore-Or-Commit} algorithm, we define the various quantities used within it. Specifically, $\cTpar_t$ represents the set of indices from previous rounds that were dedicated to estimating the parameters $\theta^s$ and $\theta^b$ (Line \ref{line:est par detailed}).
We denote by $\widehat{\theta}^s_t$ and $\widehat{\theta}^b_t$ the estimates for the parameters, and $\bV_t$ is the corresponding empirical covariance matrix at the beginning of round $t$. 
The set $\cTint$ contains the indices of the rounds used to estimate the integrals $I$ and $J$ (Line \ref{line:est int detailed}). Moreover, $\widehat{I}^k$ is the estimate of  $I(k\epsilon)$, and $\widehat{J}^k$ is the estimate of $J(k\epsilon)$. The quantities $\widehat{F}^k$ and $\widehat{D}^k$ represent our estimates of $F^s(k\epsilon)$ and $D^b(k\epsilon)$, respectively.

%%%%%%%%%%%%%%%%%%%%%%%%%%%%%%%%%%%%%%%%%%%%%%%%%%%%%%%%%%%

We use the following lemma to bound the total duration of the parameter exploration phase (the proof can be found in \Cref{app:proof lemma stop tpar}). 
\begin{restatable}{lemma}{stopTpar}\label{lem:explore_theta}
    Almost surely, the length of exploration phase $\cTpar_{T+1}$ is bounded as 
    $$\vert\cTpar_{T+1}\vert \leq \frac{d \log\left(\frac{T + d}{d}\right)}{\mu^2}.$$
\end{restatable}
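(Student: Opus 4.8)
The plan is to prove \Cref{lem:explore_theta} via the elliptical potential lemma. The key observation is that a round $t$ is added to $\cTpar$ precisely when $\|x_t\|_{\bV_t^{-1}} > \mu$, where $\bV_t = \bI_d + \sum_{l \in \cTpar_t} x_l x_l^\top$ is the regularized Gram matrix built from the exploration rounds collected so far. Thus each exploration round contributes a term strictly larger than $\mu^2$ to the sum $\sum_{l \in \cTpar_{T+1}} \min(1,\|x_l\|_{\bV_l^{-1}}^2)$ — here I would either note that $\mu$ can be assumed at most $1$ (otherwise the lemma is vacuous since no exploration ever happens) or work with $\min(1, \cdot)$ throughout. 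Hence $|\cTpar_{T+1}| \cdot \mu^2 < \sum_{l \in \cTpar_{T+1}} \min(1,\|x_l\|_{\bV_l^{-1}}^2)$.

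The second step is to bound the right-hand side. The standard elliptical potential lemma (see, e.g., \citet[Chapter 19]{lattimore2020bandit} or \citet{NIPS2011_e1d5be1c}) states that for any sequence of vectors $x_l$ with $\|x_l\|\le B$ and $\bV_l = \bI_d + \sum_{j<l, j\in\cTpar} x_j x_j^\top$, one has $\sum_{l} \min(1,\|x_l\|_{\bV_l^{-1}}^2) \le 2\log\frac{\det \bV_{\mathrm{final}}}{\det \bI_d} = 2\log\det\bV_{\mathrm{final}}$. Combined with the AM-GM bound $\det \bV_{\mathrm{final}} \le \left(\frac{\operatorname{tr}\bV_{\mathrm{final}}}{d}\right)^d \le \left(\frac{d + |\cTpar_{T+1}| B^2}{d}\right)^d$, this gives $\sum_l \min(1,\|x_l\|_{\bV_l^{-1}}^2) \le 2d\log\frac{d + |\cTpar_{T+1}| B^2}{d}$. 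Putting the two steps together yields $|\cTpar_{T+1}| \mu^2 \le 2d\log\frac{d + |\cTpar_{T+1}|B^2}{d}$, from which $|\cTpar_{T+1}| = O(d\mu^{-2}\log(\cdot))$ follows by the standard "$x \le a\log(bx) \implies x \le 2a\log(2ab)$"-type argument, and then one absorbs constants and the $B$ dependence to match the stated form (the paper's bound $\frac{d\log((T+d)/d)}{\mu^2}$ presumably also uses $|\cTpar_{T+1}|\le T$ and a normalization $B\le 1$ or an absorption into $\mu$, so I would reconcile the constants at the end).

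The main obstacle is cosmetic rather than substantive: matching the exact form $\vert\cTpar_{T+1}\vert \leq \frac{d\log((T+d)/d)}{\mu^2}$ requires care with the factor of $2$ in the potential lemma and the $B$-dependence. I expect this is handled by observing that the algorithm's actual stopping condition uses the rescaled/normalized contexts (as in the proof of \Cref{thm:no_noise_2_bits}), or by the crude bound $|\cTpar_{T+1}| \le T$ inside the logarithm: substituting $|\cTpar_{T+1}| \le T$ on the right gives $|\cTpar_{T+1}|\mu^2 \le 2d\log\frac{d+TB^2}{d}$, and after the appropriate normalization of contexts ($B=1$) this is exactly $2d\log\frac{d+T}{d}$, matching up to the constant $2$ which can be folded into the "almost surely" statement by a slightly more careful application of the potential lemma that avoids the $\min(1,\cdot)$ truncation. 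Since this is a routine consequence of a textbook lemma, I do not anticipate any genuine difficulty; the only thing worth double-checking is that $\bV_t$ is updated using exactly the rounds in $\cTpar_t$ and that the "$> \mu$" (strict) versus "$\ge \mu$" convention is used consistently so the strict inequality $|\cTpar_{T+1}|\mu^2 < \sum(\cdots)$ is valid.
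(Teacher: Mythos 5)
Your overall strategy---bounding $|\cTpar_{T+1}|$ via an elliptical potential argument and the observation that every exploration round contributes at least $\mu$ (or $\mu^2$) to the potential---matches the paper's. The difference is in which form of the potential lemma you invoke, and you flag but do not close the resulting gap in the constant.

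Concretely, you use the classical form $\sum_l \min(1,\|x_l\|^2_{\bV_l^{-1}}) \le 2\log\det\bV_{\textnormal{final}}$, which after the AM--GM trace bound and $|\cTpar_{T+1}|\leq T$ yields only $|\cTpar_{T+1}|\mu^2 \le 2d\log\left(\tfrac{d+TB^2}{d}\right)$, off by a factor of $2$ from the stated bound (and with a $B$-dependence). The paper instead cites Proposition~1 of \citet{carpentier2020elliptical}, the refined elliptical potential lemma, which bounds $\sum_t \|x_t\|^2_{\bV_{t-1}^{-1}}$ \emph{without} the $\min(1,\cdot)$ truncation and \emph{without} the factor of $2$ (under a mild regularization condition). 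Combined with Cauchy--Schwarz this gives exactly the form the paper quotes,
$\sum_{t\in\cTpar_{T+1}}\|x_t\|_{\bV_t^{-1}} \leq \sqrt{|\cTpar_{T+1}|\,d\log\left(\tfrac{|\cTpar_{T+1}|+d}{d}\right)}$;
together with $\|x_t\|_{\bV_t^{-1}} \geq \mu$ this yields $|\cTpar_{T+1}|\mu \leq \sqrt{|\cTpar_{T+1}|\,d\log(\cdot)}$, and squaring, cancelling one factor of $|\cTpar_{T+1}|$, and substituting $|\cTpar_{T+1}|\leq T$ inside the logarithm gives the stated constant with no slack. You anticipate that "a slightly more careful application of the potential lemma that avoids the $\min(1,\cdot)$ truncation" should close the gap---this is precisely right, and the Carpentier--Vernade--Abbasi-Yadkori refinement is the missing ingredient; without citing or re-deriving it, your argument as written proves the lemma only up to an extra factor of $2$. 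The rest of your reasoning (the reduction $|\cTpar_{T+1}|\mu^2 \le \textnormal{potential}$, the determinant/trace bound, the $|\cTpar_{T+1}|\le T$ substitution, the caveat about strict versus non-strict inequality, and the remark that $\bV_t$ is built only from rounds in $\cTpar_t$) is all correct and consistent with the paper.
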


%%%%%%%%%%%%%%%%%%%%%%%%%%%%%%

The following lemma provides bounds on the estimation errors for $\theta^s$, $\theta^b$, $I$, $J$, $F^s$, and $D^b$ for the values of $\mu$, $\Tint$, and $\TFD$ specified in Theorem \ref{thm:noise_2_bits}.
Let $\cE$ be the event
\begin{align*}
\cE \defeq \Big\{ \forall k \in \cK, \ \forall t \notin \cTpar_{T}\cup \cTint\,\,& \left\vert x_t^{\top}(\widehat{\theta}^b_t- \theta^b) \right\vert \leq \epsilon \textnormal{ and }\\  &\left\vert x_t^{\top}(\widehat{\theta}^s_t- \theta^s) \right\vert \leq \epsilon \textnormal{ and }\\ 
&\left\vert\widehat{I}^k - I\left(k\epsilon\right) \right\vert \leq 2\epsilon \textnormal{ and }\\  &\left\vert\widehat{J}^k - J(k\epsilon) \right\vert \leq 2\epsilon\Big\}.
\end{align*}

Let us define the exploitation (\ie commit) phase as
\[\cTC \defeq \llbracket T \rrbracket \setminus \left(\cTpar_{T}\cup \cTint \underset{k\in\cK}{\bigcup}\mleft( \cTF_k\cup\cTD_k\mright)\right).\]
Moreover, let $\cE^{\textnormal{EOC}}$ be the event
\begin{eqnarray*}
\cE^{\textnormal{EOC}} \defeq \cE \cap \mleft\{\forall k \in \cK, \ \forall t \in \cTC, \left\vert \widehat{F}^{k} - F^s(k\epsilon) \right\vert \leq  (L+1)\epsilon \text{ and }  \left\vert \widehat{D}^{k} - D^s(k\epsilon) \right\vert \leq  (L+1)\epsilon \mright\}.
\end{eqnarray*} 
Then, we can lower bound the probability of event $\cE^{\textnormal{EOC}}$ as follows (the proof is postponed to \Cref{app:bound event etc lemma}).

\begin{restatable}{lemma}{boundetcevent}\label{lem:bound_epsilon_etc}
For the choice $\mu = \epsilon\left(P\sqrt{d\log\left(\frac{1 + B^2T}{\delta}\right)} + A\right)^{-1}$, $\Tint = 8P^2\log(1/\delta)/\epsilon^2$, and $\TFD = 2\log(1/\delta)\epsilon^{-2}$, it holds that
$$\mathbb{P}\left(\cE^{\textnormal{EOC}}\right)\geq 1-2\delta - 8\delta(2K+1).$$
\end{restatable}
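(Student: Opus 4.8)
The plan is to prove \Cref{lem:bound_epsilon_etc} by decomposing $\cE^{\textnormal{EOC}}$ into its constituent events (parameter estimation, integral estimation, and c.d.f./demand estimation) and bounding the failure probability of each via a union bound. First I would handle the parameter estimates. By \Cref{lem:explore_theta}, the phase $\cTpar$ terminates, and at any round $t\notin\cTpar_T$ we have $\Vert x_t\Vert_{\bV_t^{-1}}\le\mu$. Since $2P(\indicator{p_l\le s_l}-\tfrac12)$ is an unbiased estimate of $x_l^\top\theta^s$ with range bounded by $2P$, the estimator $\widehat\theta^s_t$ is a ridge-regression-type estimator. I would invoke a standard self-normalized concentration bound (e.g.\ \citet[Chapter 19]{lattimore2020bandit} or \citet{NIPS2011_e1d5be1c}) to get that, with probability at least $1-\delta$, for all $t$ simultaneously, $\Vert\widehat\theta^s_t-\theta^s\Vert_{\bV_t}\le P\sqrt{d\log\bigl(\tfrac{1+B^2T}{\delta}\bigr)}+A$; then Cauchy--Schwarz gives $\vert x_t^\top(\widehat\theta^s_t-\theta^s)\vert\le\Vert x_t\Vert_{\bV_t^{-1}}\cdot(P\sqrt{d\log(\cdots)}+A)\le\mu\cdot(P\sqrt{d\log(\cdots)}+A)=\epsilon$, which is exactly why $\mu$ is set to that value. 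The same bound holds for $\theta^b$, contributing $2\delta$ to the failure probability.

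Next I would handle the integral estimates. Conditioning on the (high-probability) event that $\vert x_l^\top(\widehat\theta^s-\theta^s)\vert\le\epsilon$ and $\vert x_l^\top(\widehat\theta^b-\theta^b)\vert\le\epsilon$, I would observe that $\widehat J^k=\tfrac{2P}{\Tint}\sum_{l\in\cTint}\indicator{s_l\le p_l\le k\epsilon+x_l^\top\widehat\theta^s}$ is, for fixed $\widehat\theta^s$, a sum of i.i.d.\ (over the uniform draw of $p_l$) bounded random variables whose expectation is $2P\,\Pr(s_l\le p_l\le k\epsilon+x_l^\top\widehat\theta^s)=J(k\epsilon+x_l^\top(\widehat\theta^s-\theta^s))$ — here I would unfold the definition of $J$ as $\int_{-C}^\delta F^s$. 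Since $F^s\in[0,1]$ and $\vert x_l^\top(\widehat\theta^s-\theta^s)\vert\le\epsilon$, the bias relative to $J(k\epsilon)$ is at most $\epsilon$, and a Hoeffding bound over $\Tint$ samples with range $2P$ gives deviation at most $\epsilon$ with probability $1-\delta$ provided $\Tint\ge 2P^2\log(1/\delta)/\epsilon^2$ (the choice $8P^2\log(1/\delta)/\epsilon^2$ gives slack); combining bias and deviation yields $\vert\widehat J^k-J(k\epsilon)\vert\le 2\epsilon$. A union bound over $k\in\cK$ (size $2K+1$) and the analogous argument for $\widehat I^k$ (using $D^b\in[0,1]$) contributes about $2\delta(2K+1)$. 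This establishes $\Pr(\cE)\ge 1-2\delta-2\delta(2K+1)$.

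Finally, for the c.d.f.\ and demand estimates on the commit phase: at a round $t$ where \textsc{Est-F} is invoked with grid point $k$, the price is $p_t=x_t^\top\widehat\theta^s+k\epsilon$, deterministic given $\widehat\theta^s$, so $\indicator{s_t\le p_t}$ is Bernoulli with mean $F^s(k\epsilon+x_t^\top(\widehat\theta^s-\theta^s))$, which differs from $F^s(k\epsilon)$ by at most $L\epsilon$ by the density bound in \Cref{ass:bounded_densities} (Lipschitz continuity of $F^s$). A Hoeffding bound over $\TFD=2\log(1/\delta)\epsilon^{-2}$ samples gives deviation at most $\epsilon$ with probability $1-\delta$; combining with the $L\epsilon$ bias gives $\vert\widehat F^k-F^s(k\epsilon)\vert\le(L+1)\epsilon$. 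The same works for $\widehat D^k$. Union bounding over $k\in\cK$ for both $\widehat F$ and $\widehat D$ adds $2\delta(2K+1)$, but since $\cE^{\textnormal{EOC}}$ intersects $\cE$ with these events, the total failure probability is at most $2\delta+2\delta(2K+1)+2\delta(2K+1)+2\delta(2K+1)$; collecting the constants as in the statement gives $\Pr(\cE^{\textnormal{EOC}})\ge 1-2\delta-8\delta(2K+1)$ (the precise bookkeeping of which $\delta(2K+1)$ terms appear is a routine matter). The main obstacle is the first step: correctly applying the self-normalized tail inequality for the ridge estimator and tracking how the confidence radius $P\sqrt{d\log((1+B^2T)/\delta)}+A$ propagates through $\Vert x_t\Vert_{\bV_t^{-1}}\le\mu$ to yield a \emph{uniform-over-$t$} $\epsilon$-accuracy guarantee; everything downstream is a conditionally-i.i.d.\ Hoeffding argument plus a Lipschitz bias term.
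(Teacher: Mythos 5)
Your proposal is essentially correct and follows the same route as the paper: decompose $\cE^{\textnormal{EOC}}$ into parameter-, integral-, and c.d.f./demand-estimation events, use a self-normalized concentration bound (Abbasi-Yadkori et al.) plus Cauchy--Schwarz for the parameters, then Hoeffding-type concentration plus a Lipschitz bias term for the rest; this mirrors the paper's Lemmas \ref{lem:eventE}, \ref{lem:eventF}, and the $\widehat F^k, \widehat D^k$ step. One technical point you gloss over: the rounds assigned to \textsc{Est-Int}, \textsc{Est-F}, and \textsc{Est-D} are selected \emph{adaptively} (the indicator of being in $\cTint$ or $\cTF_k$ depends on past $\Vert x_t\Vert_{\bV_t^{-1}}$ and counter states), and the conditional means vary with the adversarial $x_l$ and the running $\widehat\theta^s_l$, so the samples are not literally i.i.d.\ even conditionally; the paper makes this rigorous with a martingale Azuma--Hoeffding bound with adapted selection indicators $\iota_t$ (Lemma \ref{lem:Hoeffding_rigorous}) rather than a plain Hoeffding bound. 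Your phrase ``conditionally-i.i.d.\ Hoeffding'' suggests you are aware of the conditioning but would need to actually invoke the martingale form to handle the adaptive round selection. Your bookkeeping also lands on $6\delta(2K+1)$ rather than $8\delta(2K+1)$; the paper's factor of $8$ accounts for a $2\delta$ failure term per grid point for each of $\widehat I^k$, $\widehat J^k$, $\widehat F^k$, $\widehat D^k$ (i.e.\ two-sided bounds on four estimators over $2K+1$ grid points)---routine, but worth tracking.
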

Note that the choice $\delta = \left(T(74+32P\epsilon^{-1})\right)^{-1}$ ensures that the event $\cE^{\textnormal{EOC}}$ happens with probability at least $1-T^{-1}$.

The following Lemma bounds the error for estimating the gain from trade of a price $p = x_t^{\top}\widehat{\theta}^s_t + k\epsilon$ on the high-probability event $\cE^{\textnormal{EOC}}$ (the proof is presented in \Cref{app:proof conf bound etc}).

\begin{restatable}{lemma}{errgftevent}\label{lem:conf_bound_etc}
On the event $\cE^{\textnormal{EOC}}$, for all $t \in \cTC$, and all $(k,k')\in \cA_t$,
$$\left \vert \widehat{I}^{k'}\widehat{F}^k+\widehat{J}^{k} \widehat{D}^{k'} - \egft(x_t, x_t^{\top}\widehat{\theta}^s_t + k\epsilon)\right \vert \leq  (10PL + 4P + 7)\epsilon.$$
\end{restatable}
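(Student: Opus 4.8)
The plan is to bound the error $\left\vert \widehat{I}^{k'}\widehat{F}^k + \widehat{J}^k \widehat{D}^{k'} - \egft(x_t, x_t^\top \widehat\theta^s_t + k\epsilon)\right\vert$ by comparing it, term by term, with the exact decomposition given by \Cref{lem:EGFT}. Write $p = x_t^\top \widehat\theta^s_t + k\epsilon$ and set $\widehat\delta^s \defeq k\epsilon$, $\widehat\delta^b \defeq p - x_t^\top\widehat\theta^b_t$; the true increments are $\delta^s = p - x_t^\top\theta^s$ and $\delta^b = p - x_t^\top\theta^b$. By \Cref{lem:EGFT}, $\egft(x_t,p) = I(\delta^b)F^s(\delta^s) + J(\delta^s)D^b(\delta^b)$. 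First I would control how far $\widehat\delta^s, \widehat\delta^b$ are from $\delta^s, \delta^b$: on $\cE^{\textnormal{EOC}}$ we have $|\widehat\delta^s - \delta^s| = |x_t^\top(\theta^s - \widehat\theta^s_t)| \le \epsilon$ and $|\widehat\delta^b - \delta^b| = |x_t^\top(\theta^b - \widehat\theta^b_t)| \le \epsilon$. Also, by definition of $\cA_t$ (the floor operation in Line \ref{line:etc exploit 1}), $k'\epsilon$ is within $\epsilon$ of $k\epsilon - x_t^\top(\widehat\theta^b_t - \widehat\theta^s_t) = \widehat\delta^b$, so $|k'\epsilon - \widehat\delta^b| \le \epsilon$ and hence $|k'\epsilon - \delta^b| \le 2\epsilon$.

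Next I would use these closeness estimates together with Lipschitz continuity. Under \Cref{ass:bounded_densities}, $F^s$ and $D^b$ are $L$-Lipschitz (they have densities bounded by $L$), and therefore $I(\delta) = \int_\delta^C D^b$ and $J(\delta) = \int_{-C}^\delta F^s$ are $1$-Lipschitz and bounded in absolute value by $2P$ (the length of the support is at most $2C \le 2P$, and $D^b, F^s \in [0,1]$). So: $|F^s(\delta^s) - F^s(\widehat\delta^s)| \le L\epsilon$; $|D^b(\delta^b) - D^b(k'\epsilon)| \le 2L\epsilon$; $|I(\delta^b) - I(k'\epsilon)| \le 2\epsilon$; and $|J(\delta^s) - J(\widehat\delta^s)| \le \epsilon$ (since $|\delta^s - \widehat\delta^s| \le \epsilon$). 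Then replace each exact quantity at the grid point by its estimate using the bounds defining $\cE^{\textnormal{EOC}}$: $|\widehat{I}^{k'} - I(k'\epsilon)| \le 2\epsilon$, $|\widehat{J}^k - J(\widehat\delta^s)| = |\widehat{J}^k - J(k\epsilon)| \le 2\epsilon$, $|\widehat{F}^k - F^s(k\epsilon)| \le (L+1)\epsilon$, $|\widehat{D}^{k'} - D^b(k'\epsilon)| \le (L+1)\epsilon$.

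Finally I would assemble the estimate by the standard "product of close-and-bounded quantities" telescoping: for the first term,
\[
\left\vert \widehat{I}^{k'}\widehat{F}^k - I(\delta^b)F^s(\delta^s)\right\vert \le \left\vert \widehat{I}^{k'} - I(\delta^b)\right\vert \cdot |\widehat{F}^k| + |I(\delta^b)| \cdot \left\vert \widehat{F}^k - F^s(\delta^s)\right\vert,
\]
where $|\widehat{I}^{k'}| \le 2P + 2\epsilon$, $|I(\delta^b)| \le 2P$, $|\widehat{F}^k| \le 1$, $\left\vert \widehat{I}^{k'} - I(\delta^b)\right\vert \le |\widehat{I}^{k'} - I(k'\epsilon)| + |I(k'\epsilon) - I(\delta^b)| \le 2\epsilon + 2\epsilon = 4\epsilon$, and $\left\vert \widehat{F}^k - F^s(\delta^s)\right\vert \le (L+1)\epsilon + L\epsilon = (2L+1)\epsilon$. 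An identical computation handles $\left\vert \widehat{J}^k \widehat{D}^{k'} - J(\delta^s)D^b(\delta^b)\right\vert$. Summing the two bounds and collecting the coefficients of $\epsilon$ (using $\epsilon \le 1$ to absorb the $O(\epsilon^2)$ terms, for $T$ large enough so that $\epsilon = (\log T / T)^{1/4} \le 1$) yields a bound of the form $(cPL + c'P + c'')\epsilon$; a careful bookkeeping of the constants gives $(10PL + 4P + 7)\epsilon$, which is the claim. The only mildly delicate point is keeping the constant exactly at $10PL + 4P + 7$ rather than something slightly larger — this is pure bookkeeping, choosing how to group terms and where to spend the slack $\epsilon \le 1$; there is no conceptual obstacle, just arithmetic care.
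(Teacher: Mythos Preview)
Your proposal is correct and follows essentially the same route as the paper: invoke the decomposition of \Cref{lem:EGFT}, bound $|\delta^s-k\epsilon|\le\epsilon$ and $|\delta^b-k'\epsilon|\le 2\epsilon$ using the event $\cE^{\textnormal{EOC}}$ and the definition of $\cA_t$, apply the Lipschitz continuity of $F^s,D^b,I,J$, and then plug in the estimation errors $|\widehat I^{k'}-I(k'\epsilon)|\le 2\epsilon$, $|\widehat J^k-J(k\epsilon)|\le 2\epsilon$, $|\widehat F^k-F^s(k\epsilon)|\le (L+1)\epsilon$, $|\widehat D^{k'}-D^b(k'\epsilon)|\le (L+1)\epsilon$. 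The only cosmetic difference is that the paper separates this into an intermediate lemma (first comparing $\egft$ to the grid-point value $I(k'\epsilon)F^s(k\epsilon)+J(k\epsilon)D^b(k'\epsilon)$, giving $6PL\epsilon+3\epsilon$, and then passing to the estimates), whereas you telescope directly from $\egft$ to the estimates in one step; your bookkeeping lands on exactly $(10PL+4P+7)\epsilon$ without needing the $\epsilon\le 1$ slack.
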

Next, we bound the discretization error. Let us define
$$(k^*_t, k'^*_t) \in \argmax_{(k, k') \in \cA_t} \egft(x_t, x_t^{\top}\widehat{\theta}_t^s + k\epsilon).$$
Then, the following result holds (see \Cref{app:proof discretization error} for its proof).
\begin{restatable}{lemma}{errdiscrevent}\label{lem:discretisation_error}
On the event $\cE^{\textnormal{EOC}}$, we have that
$$\left \vert \max_p\egft(x_t, p) - \egft(x_t, x_t^{\top}\widehat{\theta}_t^s + k_t^*\epsilon)\right \vert \leq 2LP\epsilon.$$
\end{restatable}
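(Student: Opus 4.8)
The plan is to show that $\egft(x_t,\cdot)$ is globally Lipschitz with a small constant, and that the grid $\{x_t^{\top}\widehat\theta^s_t + k\epsilon : k\in\cK\}$ (together with the partners forced by $\cA_t$) always contains a point within $\epsilon/2$ of the true optimal price; the bound then follows from a single application of the Lipschitz estimate, using the optimality of $k^*_t$ over the grid.

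\textbf{Step 1 (Lipschitz continuity of the reward).} Starting from the decomposition of \Cref{lem:EGFT}, $\egft(x_t,p) = I(\delta^b)F^s(\delta^s) + J(\delta^s)D^b(\delta^b)$ with $\delta^s = p - x_t^{\top}\theta^s$ and $\delta^b = p - x_t^{\top}\theta^b$, I would differentiate in $p$. Using $I' = -D^b$, $J' = F^s$, $(F^s)' = f^s$ and $(D^b)' = -f^b$, the contributions $-D^b(\delta^b)F^s(\delta^s)$ and $F^s(\delta^s)D^b(\delta^b)$ cancel, leaving $\tfrac{d}{dp}\egft(x_t,p) = I(\delta^b)f^s(\delta^s) - J(\delta^s)f^b(\delta^b)$ for a.e. $p$. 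Since $D^b,F^s\in[0,1]$ are supported on $[-C,C]$ we have $0\le I(\cdot),J(\cdot)\le 2C\le 2P$, and \Cref{ass:bounded_densities} gives $0\le f^s,f^b\le L$; hence $\bigl|\tfrac{d}{dp}\egft(x_t,p)\bigr|\le 4PL$, so $\egft(x_t,\cdot)$ is $4PL$-Lipschitz on $\R$ (and is nonnegative everywhere, being an expectation of a nonnegative quantity).

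\textbf{Step 2 (the grid covers a near-optimal price).} Let $p^\star$ attain $\max_{p}\egft(x_t,p)$; the maximum is attained by continuity and, for continuous valuations, $\egft(x_t,\cdot)$ vanishes outside $[-P,P]$, so $p^\star\in[-P,P]$. On $\cE^{\textnormal{EOC}}\subseteq\cE$, for the commit rounds $t\in\cTC$ we have $|x_t^{\top}(\widehat\theta^s_t-\theta^s)|\le\epsilon$ and $|x_t^{\top}\theta^s|\le AB\le P$, hence $|x_t^{\top}\widehat\theta^s_t|\le P+\epsilon$, and likewise $|x_t^{\top}\widehat\theta^b_t|\le P+\epsilon$. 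Therefore $|p^\star - x_t^{\top}\widehat\theta^s_t|\le 2P+\epsilon < K\epsilon$ since $K=\lceil 2P/\epsilon\rceil+3$; letting $k_0$ be the nearest integer to $(p^\star - x_t^{\top}\widehat\theta^s_t)/\epsilon$ we get $k_0\in\cK$ and, with $p_0\defeq x_t^{\top}\widehat\theta^s_t + k_0\epsilon$, $|p^\star-p_0|\le\epsilon/2$. Then $|p_0 - x_t^{\top}\widehat\theta^b_t|\le 2P+\tfrac32\epsilon$, so the forced partner $k'_0 = \lfloor(x_t^{\top}(\widehat\theta^s_t-\widehat\theta^b_t)+k_0\epsilon)\epsilon^{-1}\rfloor = \lfloor (p_0 - x_t^{\top}\widehat\theta^b_t)\epsilon^{-1}\rfloor$ satisfies $|k'_0|\le 2P/\epsilon+2\le K$, i.e. $k'_0\in\cK$; thus $(k_0,k'_0)\in\cA_t$.

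\textbf{Step 3 (conclusion).} Since the objective defining $k^*_t$ depends only on its first coordinate and $(k_0,k'_0)\in\cA_t$, we have $\egft(x_t,x_t^{\top}\widehat\theta^s_t+k^*_t\epsilon)\ge\egft(x_t,p_0)$. Combining with Step 1,
$$0\le \max_{p}\egft(x_t,p) - \egft\bigl(x_t,x_t^{\top}\widehat\theta^s_t+k^*_t\epsilon\bigr) \le \egft(x_t,p^\star)-\egft(x_t,p_0) \le 4PL\,|p^\star-p_0| \le 2LP\epsilon,$$
which is the claimed inequality. The main obstacle is not the Lipschitz calculation (a short computation once the cancellation is noticed) but Step 2: one must verify that \emph{both} the rounded seller increment $k_0$ and the partner $k'_0$ it forces fall inside $\cK$, which is precisely why one needs the high-probability event $\cE^{\textnormal{EOC}}$ to control $x_t^{\top}\widehat\theta^s_t$ and $x_t^{\top}\widehat\theta^b_t$, together with the slack built into the choice $K=\lceil 2P/\epsilon\rceil+3$.
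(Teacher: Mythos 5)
Your proof is correct and takes essentially the same route as the paper's: locate a grid increment whose associated price is close to the true optimum, show the grid (and the forced partner $k'$) is wide enough to contain it on $\cE^{\textnormal{EOC}}$, and apply the Lipschitz continuity of $\egft(x_t,\cdot)$. The only cosmetic difference is a trade: you round to the nearest integer so $|p^\star-p_0|\le\epsilon/2$ but use the slack constant $4PL$ (obtained by summing $|I f^s|$ and $|J f^b|$), whereas the paper rounds down so the distance is $\le\epsilon$ but uses the sharper $2PL$ constant (available because, as your own derivative computation shows, $\tfrac{d}{dp}\egft=I(\delta^b)f^s(\delta^s)-J(\delta^s)f^b(\delta^b)$ is a \emph{difference} of two nonnegative terms each bounded by $2PL$); both yield $2LP\epsilon$.
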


We are now ready to prove Theorem \ref{thm:etc}. For $p \in \R$, $x \in \mathbb{R}^d$, we define 
$$\Delta(x, p) \defeq \max_{p'} \egft(x,p') - \egft(x,p).$$ 
We begin by decomposing the regret as 
\begin{align*}
    R_T =& \sum_{t\in \cTpar_{T+1}}\Delta(x_t, p_t) + \sum_{t\in \cTint}\Delta(x_t, p_t) +\\& \sum_{k\in \cK}\left(\sum_{t\in \cTF_k}\Delta(x_t, p_t) + \sum_{t\in \cTD_k}\Delta(x_t, p_t)\right) + \sum_{t\in \cTC}\Delta(x_t, p_t).
\end{align*}
Using the fact that $\Delta(x_t, p_t) \leq 2P$, we obtain 
\begin{align*}
    R_T \leq& 2P\left(\vert\cTpar_{T+1} \vert + \Tint + 2\vert \cK\vert \TFD\right) + \sum_{t\in \cTC}\Delta(x_t, p_t).
\end{align*}
On the one hand, Lemma \ref{lem:explore_theta} implies that $\vert\cTpar_{T+1}\vert \leq \frac{d \log\left(\frac{T + d}{d}\right)}{\mu^2}.$ For the choice  $\mu = \epsilon\left(P\mleft(d\log\left(\frac{1 + B^2T}{\delta}\right)\mright)^{\nicefrac12} + A\right)^{-1}$, $\Tint = 8P^2\log(1/\delta)\epsilon^{-2}$, and $\TFD = 2\log(1/\delta)\epsilon^{-2}$, this implies that
\begin{align*}
    R_T \leq& 2Pd \log\left(\frac{T + d}{d}\right)\left(P\sqrt{d\log\left(\frac{1 + B^2T}{\delta}\right)} + A\right)^2\epsilon^{-2} + 16P^3\log(1/\delta)\epsilon^{-2} +\\
    &8P\left(\frac{4P}{\epsilon}+9\right)\log(1/\delta)\epsilon^{-2} + \sum_{t\in \cTC}\Delta(x_t, p_t).
\end{align*}
Note that the first term is of order $\epsilon^{-3}$ when $\epsilon$ is small enough. On the other hand, on the event $\cE^{\textnormal{EOC}}$, by Lemma \ref{lem:discretisation_error} we have
\begin{align*}
    &\hspace{-.5cm}\sum_{t\in \cTC}\Delta(x_t, p_t) \\&= \sum_{t\in \cTC}\Delta(x_t, x_t^{\top}\widehat{\theta}^s_t + k^*_t\epsilon) + \sum_{t\in \cTC}\egft(x_t, x_t^{\top}\widehat{\theta}_t^s + k_t^*\epsilon) - \egft(x_t, x_t^{\top}\widehat{\theta}_t^s + k_t\epsilon)\\
    &\leq 2TLP\epsilon + \sum_{t\in \cTC}\mleft(\egft(x_t, x_t^{\top}\widehat{\theta}_t^s + k_t^*\epsilon) - \egft(x_t, x_t^{\top}\widehat{\theta}_t^s + k_t\epsilon)\mright),
\end{align*}
Moreover, our choice of $k_t$ ensures that for $k'_t$ such that $(k_t, k'_t)\in \cA_t$,
\begin{align*}
\widehat{I}^{k'_t}\widehat{F}^{k_t}+\widehat{J}^{k_t} \widehat{D}^{k'_t}&\geq \widehat{I}^{k_t^{'*}}\widehat{F}^{k^*_t}+\widehat{J}^{k^*_t} \widehat{D}^{k_t^{'*}}.
\end{align*}
By Lemma \ref{lem:conf_bound_etc}, this implies that, on event $\cE^{\textnormal{EOC}}$,
\begin{align*}
\egft(x_t, x_t^{\top}\widehat{\theta}_t^s + k_t^*\epsilon) - \egft(x_t, x_t^{\top}\widehat{\theta}_t^s + k_t\epsilon)\leq (20PL + 8P + 14)\epsilon.
\end{align*}

Thus, on the event  $\cE^{\textnormal{EOC}}$,
\begin{align*}
    R_T \leq& 2Pd \log\left(\frac{T + d}{d}\right)\left(P\sqrt{d\log\left(\frac{1 + B^2T}{\delta}\right)} + A\right)^2\epsilon^{-2} + 16P^3\log(1/\delta)/\epsilon^2 \\
    &+ 8P\left(\frac{P}{\epsilon}+3\right)\log(1/\delta)\epsilon^{-2} + 2TLP\epsilon + T(20PL + 8P + 14)\epsilon.
\end{align*}
Substituting the values of $\epsilon$ and $\delta$ as per Theorem \ref{thm:etc} and using Lemma \ref{lem:bound_epsilon_etc} allows us to conclude the proof.

\section{Proof of Theorem \ref{thm:noise_2_bits}}\label{app:2bit noise scouting}

\subsection{Detailed Algorithm}

In order to prove Theorem \ref{thm:noise_2_bits}, we provide a more detailed version of the \textsc{Scouting Bandit with Information Pooling} algorithm in Algorithm \ref{alg:SBLBT_detailed}. Note that in Lines 16 an 17, we adopt the convention $1/0 = +\infty$ when computing the upper- and lower- confidence bounds for increment levels $k, k'$ that have not yet been selected.

%%%%%%%%%%%%%%%%%%%%%%%%%%%%%%%%%%%%%%

\begin{algorithm}
\caption{\textsc{Scouting Bandits with Information Pooling (SBIP)} (detailed version)}\label{alg:SBLBT_detailed}
\begin{algorithmic}[1]
\State \textbf{Input}: parameter $\mu >0$, length of scouting phase $\Tint$, discretization size $\epsilon$, confidence level $\delta$.
\State \textbf{Initialize}: $\cTpar_1 = \cTint = \varnothing$, $\widehat{\theta}_1^s = \mathbf{0}_d$, $\widehat{\theta}_1^b = \mathbf{0}_d$, $\bV_1 = \mathbf{I}_d$, $K = \lceil 2P/\epsilon\rceil + 3$, $\cK = \llbracket -K,K\rrbracket$, $\cTSE_{s,k}=\cTSE_{b,k} = \varnothing$ for all $k \in \cK$, $N^{s, k}_1 = N^{b, k}_1 = 0$ for all $k\in\cK$, $\widehat{F}_1^k = \widehat{D}_1^k = 0$ for all $k\in\cK$.

\vspace{.5em}
\While{$t \leq T$}
\If{$\Vert x_t\Vert_{\bV_t^{-1}} > \mu$} \Comment{Estimate the parameters $\theta^b$ and $\theta^s$}
    \State Draw $p_t \sim \cU([-P,P])$ and post $(p_t,p_t)$
    \State Update $\cTpar_{t+1} = \cTpar_t \cup \{t\}$
    \State Update $\bV_{t+1} =  \left(\sum_{l \in \cTpar_{t+1}} x_l x_l^{\top} +  \bI_d \right)^{-1}$
    \State Update parameter estimates
    \[
    \textstyle{
    \widehat{\theta}^s_{t+1} = 2P\bV_{t+1}\sum_{l \in \cTpar_{t+1}}\left(\mathbb{I}\{p_l\leq s_l\} - \frac{1}{2}\right)x_l,\quad
    \widehat{\theta}^b_{t+1} = 2P\bV_{t+1}\sum_{l \in \cTpar_{t+1}}\left(\mathbb{I}\{p_l \leq b_l\}-\frac{1}{2}\right)x_l
    }
    \]
\ElsIf{$\vert \cTint \vert < \Tint$} \Comment{Estimate the integrals $I$ and $J$}
\State Draw $p_t \sim \cU([-P,P])$ and post $(p_t,p_t)$
\State Update $\cTint = \cTint \cup \{ t \}$
\If{$\vert \cTint \vert = \Tint$}
%\State Compute estimate of integrals $I$ and $J$
\For{$k\in \cK$}
\[
\textstyle{
\widehat{I}^k = \frac{2P}{\Tint}\sum_{l \in \cTint}\mathbb{I}\left\{k\epsilon + x_l^{\top}\widehat{\theta}^b \leq p_l \leq b_l  \right\},\quad
\widehat{J}^k = \frac{2P}{\Tint}\sum_{l \in \cTint}\mathbb{I}\left\{s_l \leq p_l \leq k\epsilon + x_l^{\top}\widehat{\theta}^s \right\}
}
\]
\EndFor
\EndIf
\Else \Comment{Run Successive Elimination}
\State Set $\cA_t = \mleft\{(k, k') \in \cK^2 : k' = \left\lfloor \nicefrac{(x_t^{\top} \left(\widehat{\theta}^s_t - \widehat{\theta}^b_t\right) + k\epsilon)}{\epsilon} \right\rfloor\mright\}$
\For{$(k, k') \in \cA_t$}
%\State Update upper and lower confidence bounds
\[\UCB_t(k,k') = \widehat{I}^{k'}\widehat{F}_t^k+\widehat{J}^{k} \widehat{D}_t^{k'} + (12PL + 7)\epsilon + 2P\left(\sqrt{\nicefrac{2\log(1/\delta)}{N^{s,k}_t}} + \sqrt{\nicefrac{2\log(1/\delta)}{N^{b,k'}_t}}\right),\]
\[\LCB_t(k,k') = \widehat{I}^{k'}\widehat{F}_t^k+\widehat{J}^{k} \widehat{D}_t^{k'} - (12PL + 7)\epsilon - 2P\left(\sqrt{\nicefrac{2\log(1/\delta)}{N^{s,k}_t}} + \sqrt{\nicefrac{2\log(1/\delta)}{N^{b,k'}_t}}\right)\]
\EndFor
\State Set $\cK_t =\{(k, k')\in \cA_t : \UCB_t(k,k') \geq \max \{\LCB_t(l,l')\ : \ (l, l') \in \cA_t\}\}$
\State Set $k_t^s = \argmin \left\{N_t^{s,k} : (k, k') \in \cK_t \right\}$, and $k_t^b = \argmin \left\{N_t^{b,k'} : (k, k') \in \cK_t \right\}$
\If{$N_t^{s,k_t^s} \leq  N_t^{b,k_t^b}$}
\State $k_t = k_t^s$, and set $k_t'$ to be such that $(k_t, k_t') \in \cK_t$
\State $\cTSE_{s,k_t}  = \cTSE_{s,k_t} \cup \{t\}$
\Else
\State $k'_t = k_t^b$, and set $k_t$ to be such that $(k_t, k_t') \in \cK_t$
\State $\cTSE_{b,k'_t}  = \cTSE_{b,k'_t} \cup \{t\}$
\EndIf
\State Set $p_t = x_t^{\top}\widehat{\theta}^s_t + k_t\epsilon$ and post $(p_t,p_t)$
\State Update $\widehat{F}_{t+1}^{k_t} = \frac{N^{k_t, s}_t\widehat{F}_{t}^{k_t}+\mathbb{I}\left\{s_t \leq p_t\right\}}{N^{k_t, s}_t + 1}$, $\widehat{D}_{t+1}^{k'_t} = \frac{N^{k'_t, b}_t\widehat{D}_{t}^{k'_t}+\mathbb{I}\left\{b_t \geq p_t\right\}}{N^{k'_t, b}_t + 1}$
\State Update $N^{k_t, s}_{t+1} = N^{k_t, s}_t + 1$, and $N^{k'_t, b}_{t+1} = N^{k'_t, b}_t + 1$
\EndIf
\State Quantities that have not been updated during round $t$ are kept the same for round $t+1$
\EndWhile
\end{algorithmic}
\end{algorithm}

Before analyzing the \textsc{SBIP} algorithm, we introduce the different quantities appearing in the algorithm. 
In words, $\cTpar_t$ is the set of indices of the previous rounds that have been spent estimating the parameters $\theta^s$ and $\theta^b$ at the beginning of round $t$. Similarly, $\widehat{\theta}^s_t$, $\widehat{\theta}^b_t$, and $\bV_t$ are the estimates of the parameters and the corresponding empirical covariance matrix at the beginning of round $t$. The set $\cTint$ consists of the indices of the rounds used to estimate the integrals $I$ and $J$. Moreover, $\widehat{I}^k$ estimates $I(k\epsilon)$, and $\widehat{J}^k$ estimates $J(k\epsilon)$. The quantities $\widehat{F}^k_t$ and $\widehat{D}^k_t$ represent our estimates of $F^s(k\epsilon)$ and $D^b(k\epsilon)$ at the beginning of round $t$.
Similarly, $N^{s, k}_t$ (resp., $N^{b, k}_t$) counts the number of rounds in the successive elimination phase where the increment $p_l - x_l^{\top}\widehat{\theta}^s_l$ was equal to $k\epsilon$ (resp., where the increment $p_l - x_l^{\top}\widehat{\theta}^b_l$ was close to $k\epsilon$), and where we gained information on $F^s(k\epsilon)$ (resp., on $D^b(k\epsilon)$). The set $\cA_t$ collects the pairs $(k,k') \in \cK$ such that $x_t^{\top}\widehat{\theta}^s_t + k\epsilon \approx x_t^{\top}\widehat{\theta}^b_t + k'\epsilon$ is a possible price (within $[-P,P]$). Then, for $(k,k') \in \cA_t$, the quantities $\UCB_t(k, k')$ and $\LCB_t(k, k')$ provide upper and lower confidence bounds on the expected gain from trade corresponding to this price.

The third phase is a \emph{successive elimination phase}: at each round, we consider a set of possible optimal prices, with corresponding upper confidence bound larger than the highest lower confidence bound. This set is denoted by $\cK_t$.
In order to ensure sufficient exploration of all potentially optimal price increments $(k\epsilon, k'\epsilon)$ for $(k,k') \in \cK_t$, we choose the price increment with the widest confidence interval: this corresponds to choosing the pair $(k,k') \in \cK_t$ such that either $N_t^{s,k}$ or $N_t^{b,k'}$ is the lowest of the set. We denote $(k_t, k'_t)$ the pair of increments chosen in such way. In order to analyze this phase, we store the indexes of the rounds where we chose this pair because $N_t^{s,k_t}$ was the smallest in the set $\cTSE_{s,k_t}$. Analogously, the rounds where we chose this pair because $N_t^{b,k'_t}$ was the smallest are stored in the set $\cTSE_{b,k'_t}$).

\subsection{Regret Analysis}

The beginning of the proof of Theorem \ref{thm:noise_2_bits} is similar to that of Theorem \ref{thm:etc}. As in the previous case, we define the event
\begin{eqnarray*}
\cE \defeq \Big \{ \forall k \in \cK, \ \forall t \notin \cTpar_{T+1}\cup \cTint, &\left\vert x_t^{\top}(\widehat{\theta}^b_t- \theta^b) \right\vert \leq \epsilon,\,\,  \left\vert x_t^{\top}(\widehat{\theta}^s_t- \theta^s) \right\vert \leq \epsilon,\\
&\left\vert\widehat{I}^k - I\left(k\epsilon\right) \right\vert \leq 2\epsilon, \,\, \left\vert\widehat{J}^k - J(k\epsilon) \right\vert \leq 2\epsilon\Big\}.
\end{eqnarray*} 
Moreover, we define a new event $\cE^{\textnormal{SBIP}}$  as
\begin{eqnarray*}
\cE^{\textnormal{SBIP}} \defeq \cE \bigcap \Big \{\forall k \in \cK, \ \forall t \notin \cTpar_{T+1}\cup \cTint, &\left\vert \widehat{F}^{k}_t - F^s(k\epsilon) \right\vert \leq \sqrt{\frac{2\log(1/\delta)}{N^{s,k}_t}} + L\epsilon, \\
&\left\vert \widehat{D}^{k} - D^s(k\epsilon) \right\vert \leq \sqrt{\frac{2\log(1/\delta)}{N^{b,k}_t}} + 2L\epsilon\Big\}.
\end{eqnarray*} 

The following Lemma shows that the event $\cE^{\textnormal{SBIP}}$ happens with large probability. The detailed proof can be found in \Cref{app:proof event sbip prob}.

\begin{restatable}{lemma}{eventsbipprob}\label{lem:bound_epsilon}
For the choice $\mu = \epsilon\left(P\mleft(d\log\left(\frac{1 + B^2T}{\delta}\right)\mright)^{\nicefrac12} + A\right)^{-1}$ and $\Tint = 8P^2\log(1/\delta)/\epsilon^2$, it holds that
$$\mathbb{P}\left(\cE^{\textnormal{SBIP}}\right)\geq 1-2\delta - 4\delta(2K+1) -   4\delta(2K+1)T.$$
\end{restatable}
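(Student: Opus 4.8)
The plan is to decompose the event $\cE^{\textnormal{SBIP}}$ into three sub-events and bound each failure probability by a union bound: (i) the parameter estimates satisfy $|x_t^\top(\widehat\theta^s_t-\theta^s)|\le\epsilon$ and $|x_t^\top(\widehat\theta^b_t-\theta^b)|\le\epsilon$ for all $t\notin\cTpar_{T+1}$; (ii) the batch estimates satisfy $|\widehat I^k-I(k\epsilon)|\le 2\epsilon$ and $|\widehat J^k-J(k\epsilon)|\le 2\epsilon$ for all $k\in\cK$; (iii) the running estimates satisfy $|\widehat F^k_t-F^s(k\epsilon)|\le\sqrt{2\log(1/\delta)/N^{s,k}_t}+L\epsilon$ and the analogous bound for $\widehat D^k_t$. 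Sub-events (i) and (ii) together are exactly $\cE$, and their analysis is identical to that in \Cref{lem:bound_epsilon_etc}; the only genuinely new work for \textsc{SBIP} is (iii), because $\widehat F^k$, $\widehat D^k$ are now incremental averages maintained throughout the successive-elimination phase and therefore require a \emph{time-uniform} concentration bound rather than a fixed-sample one.

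For (i), set $\eta_l\defeq 2P(\indicator{p_l\le s_l}-\tfrac12)-x_l^\top\theta^s$; it is conditionally mean-zero given $x_l$ and the past, and $|\eta_l|\le 2P$, hence $2P$-subgaussian, and similarly for the buyer. Since $\widehat\theta^s_t$ is the ridge estimator on $\cTpar_t$ with regularizer $\bI_d$ and $\|\theta^s\|\le A$, the self-normalized / elliptical-potential bound (\citet[Chapter~19]{lattimore2020bandit}, \citet{NIPS2011_e1d5be1c}) gives, with probability at least $1-\delta$ and uniformly in $t$, $\|\widehat\theta^s_t-\theta^s\|_{\bV_t}\le P\sqrt{d\log(\tfrac{1+B^2T}{\delta})}+A$; for $t\notin\cTpar_{T+1}$ the round-$t$ test failed, i.e.\ $\|x_t\|_{\bV_t^{-1}}\le\mu$, so Cauchy--Schwarz and the choice of $\mu$ give $|x_t^\top(\widehat\theta^s_t-\theta^s)|\le\epsilon$; the buyer costs another $\delta$, for a total of $2\delta$. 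For (ii), condition on the whole context sequence --- which already determines the partition of rounds into $\cTpar,\cTint,\dots$, since the algorithm branches on contexts only --- and on $(\widehat\theta^s,\widehat\theta^b)$, a function of the contexts and the bits revealed on $\cTpar$; the samples $(p_l,s_l,b_l)_{l\in\cTint}$ are then independent of this conditioning, so for fixed $k$ the summands $2P\,\indicator{s_l\le p_l\le k\epsilon+x_l^\top\widehat\theta^s}$ are independent and lie in $[0,2P]$. Using $p_l\sim\cU([-P,P])$ and the identities $\E{(a-\xi^s)^+}=J(a)$, $\E{(\xi^b-a)^+}=I(a)$ together with the $1$-Lipschitz continuity of $a\mapsto(a-\xi)^+$, each conditional mean differs from $J(k\epsilon)$ (resp.\ $I(k\epsilon)$) by at most $|x_l^\top(\widehat\theta^s-\theta^s)|\le\epsilon$ on the event of (i), and Hoeffding over $\Tint=8P^2\log(1/\delta)\epsilon^{-2}$ samples bounds the fluctuation by $2P\sqrt{\log(1/\delta)/(2\Tint)}\le\epsilon$, giving the $2\epsilon$ bound; a union over the $2K+1$ grid points, the two functions, and both tails costs $4\delta(2K+1)$.

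For (iii), fix $k\in\cK$ and consider the successive-elimination rounds recorded in $\cTSE_{s,k}$: at the $n$-th such round the posted price is $x_l^\top\widehat\theta^s_l+k\epsilon$, so $\indicator{s_l\le p_l}$ has conditional mean $F^s\big(k\epsilon+x_l^\top(\widehat\theta^s_l-\theta^s)\big)$, which on the event of (i) lies within $L\epsilon$ of $F^s(k\epsilon)$ since $F^s$ is $L$-Lipschitz. Hence $\widehat F^k_t$ minus the running average of these means is an average of $N^{s,k}_t$ bounded martingale differences, and Azuma--Hoeffding with a union bound over the possible counter values $n\in\{1,\dots,T\}$ gives $|\widehat F^k_t-F^s(k\epsilon)|\le\sqrt{2\log(1/\delta)/N^{s,k}_t}+L\epsilon$ uniformly in $t$, with failure probability at most $2\delta T$. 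The demand $D^b$ is handled identically, with one extra $L\epsilon$ of Lipschitz slack coming from the rounding of the buyer increment $k'$ in the definition of $\cA_t$, which yields the stated $2L\epsilon$; summing over $k$ and over the two functions costs $4\delta(2K+1)T$. Adding the three contributions gives $\Pr(\cE^{\textnormal{SBIP}})\ge 1-2\delta-4\delta(2K+1)-4\delta(2K+1)T$.

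I expect the main obstacle to be the conditioning bookkeeping rather than any inequality: one must verify that the data-dependent partition of rounds into $\cTpar,\cTint$ and the sets $\cTSE_{s,k},\cTSE_{b,k}$ depends on the contexts (and previously revealed bits) only, so that at each stage the freshly drawn prices and the i.i.d.\ valuation noise remain independent of the estimators built so far, and that the incremental estimates $\widehat F^k_t,\widehat D^k_t$ genuinely need an anytime concentration bound since $N^{s,k}_t$ behaves like a stopping time. Getting this filtration structure right --- not the Hoeffding/elliptical-potential computations, which are standard --- is what makes the proof nontrivial.
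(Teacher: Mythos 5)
Your proposal is correct and follows essentially the same route as the paper: decompose $\cE^{\textnormal{SBIP}}$ into the $\theta$-estimation event (handled via a self-normalized elliptical bound, as in Lemma~\ref{lem:eventE}), the $I,J$-estimation event (handled via Hoeffding, as in Lemma~\ref{lem:eventF}), and the new time-uniform concentration for $\widehat F^k_t,\widehat D^k_t$ (handled via a martingale Hoeffding bound with a union over the counter values, which is exactly what the paper packages as Lemma~\ref{lem:Hoeffding_rigorous}), with the extra $L\epsilon$ on the demand side coming from the rounding of $k'$ in $\cA_t$. Two small remarks: the paper handles part (ii) via the martingale lemma rather than by conditioning on contexts and $\widehat\theta$ and invoking independent-summand Hoeffding, though your conditioning argument is sound because $\cTpar$ and $\cTint$ are determined by the context sequence alone and $\widehat\theta^s,\widehat\theta^b$ depend only on the $\cTpar$ bits; and for part (iii) the counter $N^{s,k}_t$ is incremented at every successive-elimination round with $k_t=k$, not only those recorded in $\cTSE_{s,k}$, so the indicator to use is $\iota_t=\indicator{t\notin\cTpar_{T+1}\cup\cTint \text{ and } k_t=k}$ as in the paper rather than membership in $\cTSE_{s,k}$ --- but this does not affect the Azuma argument or the final probability count.
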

%%%%%%%%%%%%%%%%%%%%%%%%%%%%%%%%%%%%%%%%%%%%%%%%%%%%%%%%%%%%
Note that the choice  $\delta = \left(T(38 +16P\epsilon^{-1}+ 16P\epsilon^{-1}T + 36T)\right)^{-1}$ ensures that the event $\cE^{\textnormal{SBIP}}$ happens with probability at least $1-T^{-1}$.

The following Lemma (whose proof is located in \Cref{app:proof ucb lcb}) shows that the upper and lower confidence bounds used in Algorithm \ref{alg:SBLBT_detailed} hold conditioned on the high-probability event $\cE^{\textnormal{SBIP}}$.

\begin{restatable}{lemma}{lcbucb}\label{lem:UCB_LCB}
Under the assumptions of Lemma \ref{lem:bound_epsilon} and conditioned on the event $\cE^{\textnormal{SBIP}}$, we have that for all $t \notin \cTpar_{T+1}\cup \cTint$, and all $(k, k')\in \cA_t$;
\begin{eqnarray*}
    \LCB_t(k, k') \leq \egft(x_t, x_t^{\top}\widehat{\theta}_t^s + k\epsilon) \leq \UCB_t(k, k').
\end{eqnarray*}
Moreover, it holds that $(k^*_t, k'^*_t)\in \cK_t$, where we recall that
$$(k^*_t, k'^*_t) \in \argmax_{(k, k') \in \cA_t} \egft(x_t, x_t^{\top}\widehat{\theta}_t^s + k\epsilon).$$
\end{restatable}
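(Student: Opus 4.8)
The plan is to show that conditioned on $\cE^{\textnormal{SBIP}}$, the quantity $\widehat{I}^{k'}\widehat{F}_t^k+\widehat{J}^{k}\widehat{D}_t^{k'}$ approximates $\egft(x_t, x_t^{\top}\widehat{\theta}_t^s + k\epsilon)$ up to the half-width of the confidence interval, namely $\widetilde\epsilon + 2P(\beta(N^{s,k}_t) + \beta(N^{b,k'}_t))$ with $\widetilde\epsilon = (12PL+7)\epsilon$. The natural starting point is \Cref{lem:EGFT}: the true expected gain from trade at price $p = x_t^{\top}\widehat{\theta}_t^s + k\epsilon$ equals $I(\delta^b)F^s(\delta^s) + J(\delta^s)D^b(\delta^b)$ where $\delta^s = p - x_t^{\top}\theta^s$ and $\delta^b = p - x_t^{\top}\theta^b$. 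On $\cE^{\textnormal{SBIP}}$ we have $|x_t^{\top}(\widehat{\theta}^s_t - \theta^s)| \leq \epsilon$, so $\delta^s = k\epsilon + (x_t^\top\widehat\theta^s_t - x_t^\top\theta^s)$ lies within $\epsilon$ of $k\epsilon$; similarly, using the definition of $k'$ in $\cA_t$ (which rounds $x_t^{\top}(\widehat\theta^s_t - \widehat\theta^b_t) + k\epsilon$ to the grid) together with the bound on $|x_t^\top(\widehat\theta^b_t - \theta^b)|$, one gets that $\delta^b$ is within $O(\epsilon)$ of $k'\epsilon$ — I would track the constant carefully here, it should come out around $3\epsilon$.

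Next I would replace each of the four true quantities $I(\delta^b), F^s(\delta^s), J(\delta^s), D^b(\delta^b)$ by its grid value at $k$ or $k'$, and then by its estimate. The $I, J$ replacements cost $2\epsilon$ each (the $\widehat I^{k'}, \widehat J^k$ error bounds in $\cE$) plus a Lipschitz term from moving the argument from $\delta^s$ to $k\epsilon$ — since $I' = -D^b$ and $J' = F^s$ are bounded by $1$, moving the argument by $O(\epsilon)$ costs $O(\epsilon)$. The $F^s, D^b$ replacements use the $\cE^{\textnormal{SBIP}}$ bounds $|\widehat F^k_t - F^s(k\epsilon)| \leq \beta(N^{s,k}_t) + L\epsilon$ (and analogously for $D$), plus the Lipschitz term $L\epsilon$ from Assumption \ref{ass:bounded_densities} for moving the argument. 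Collecting: each product term contributes an $O(\epsilon)$-type error scaled by factors bounded by $2P$ (since $I, J \leq 2P$ as integrals over an interval of length $\leq 2C \leq 2P$ of a function bounded by $1$), plus the statistical term $2P\beta(\cdot)$. Summing the two products and bookkeeping the constants should yield exactly the $\widetilde\epsilon + 2P(\beta(N^{s,k}_t)+\beta(N^{b,k'}_t))$ bound — this is essentially the computation already done for \Cref{lem:conf_bound_etc}, with the deterministic $(L+1)\epsilon$ error on $\widehat F, \widehat D$ replaced by the time-varying $\beta(N) + L\epsilon$. Once the confidence interval is established, $\LCB_t \leq \egft \leq \UCB_t$ is immediate, and $(k^*_t, k'^*_t) \in \cK_t$ follows because its $\UCB_t$ is at least its own true value, which (by optimality within $\cA_t$) is at least the true value of every other pair, which is at least every other $\LCB_t$ — hence at least $\max_{(l,l')\in\cA_t}\LCB_t(l,l')$.

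The main obstacle is the careful propagation of the $O(\epsilon)$ discrepancies through the product structure: one must be sure that the argument shifts $|\delta^b - k'\epsilon|$ and $|\delta^s - k\epsilon|$ are genuinely $O(\epsilon)$ (this hinges on both the parameter-estimation bounds in $\cE$ \emph{and} the grid-rounding in the definition of $\cA_t$), and that every multiplicative factor multiplying an error term is correctly bounded — in particular $F^s, D^b \in [0,1]$ but $I, J \in [0, 2P]$, so the statistical fluctuations $\beta(N^{s,k}_t)$ of $\widehat F^k_t$ get amplified by the $2P$ bound on $\widehat I^{k'}$, which is exactly why the $2P$ appears in front of $\beta$. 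A secondary subtlety is handling the convention $1/0 = +\infty$: when $N^{s,k}_t = 0$ the interval is vacuously valid (width $+\infty$), so the claim holds trivially for such pairs and one only needs the argument above when all relevant counts are positive.
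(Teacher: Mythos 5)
Your plan follows exactly the paper's argument: start from the decomposition of Lemma~\ref{lem:EGFT}, use the $\cE^{\textnormal{SBIP}}$ bounds to control the argument shifts $|\delta^s - k\epsilon|$ and $|\delta^b - k'\epsilon|$ together with the Lipschitz constants of $I, J, F^s, D^b$ and the $2P$ bounds on $I, J$ to obtain the confidence half-width $(12PL+7)\epsilon + 2P(\beta(N^{s,k}_t)+\beta(N^{b,k'}_t))$ (this is precisely what the paper factors out as Lemma~\ref{lem:decompo_error} plus the event definition), and then conclude $(k_t^*,k_t'^*)\in\cK_t$ via $\LCB_t(k,k')\le\egft\le\egft^*\le\UCB_t(k_t^*,k_t'^*)$. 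The only discrepancies are cosmetic constant tracking ($|\delta^b-k'\epsilon|$ is $\le 2\epsilon$, not $\approx 3\epsilon$; the $\cE^{\textnormal{SBIP}}$ bound on $\widehat D$ carries $2L\epsilon$, not $L\epsilon$), which you flagged yourself and which do not affect the correctness of the approach.
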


%%%%%%%%%%%%%%%%%%%%%%%%%%%%%%%%%%%%%%%%%%%%%%%%%%%%%%

Finally, we bound the number of times a sub-optimal price increment $k\epsilon$ can be selected. For $p \in \mathbb{R}$, $x \in \mathbb{R}^d$, recall that we defined 
$$\sr(x, p) = \max_{p'} \egft(x,p') - \egft(x,p).$$

\begin{restatable}{lemma}{lemelim}\label{lem:elim}
Conditioned on the event $\cE^{\textnormal{SBIP}}$, if $t \in \cTSE_{s,k}$, then the following condition holds
\begin{align*}
    \sr(x_t, p_t) \leq (50PL + 28)\epsilon + 16P\sqrt{\frac{2\log(1/\delta)}{N^{s,k}_t}}.
\end{align*}
Similarly, if $t \in \cTSE_{b,k'}$, then the following condition holds
\begin{align*}
    \sr(x_t, p_t) \leq (50PL + 28)\epsilon + 16P\sqrt{\frac{2\log(1/\delta)}{N^{b,k'}_t}}.
\end{align*}
\end{restatable}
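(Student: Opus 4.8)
The plan is to turn the definition of the surviving set $\cK_t$ together with the ``explore the least-sampled pair'' rule of Algorithm~\ref{alg:SBLBT_detailed} into a per-round simple-regret bound, and then to charge every confidence width that appears to the single count $N^{s,k}_t$ (respectively $N^{b,k'}_t$). Everything is done on the event $\cE^{\textnormal{SBIP}}$. I would first record three ingredients: (a) by Lemma~\ref{lem:UCB_LCB}, for each $(k,k')\in\cA_t$ the interval $[\LCB_t(k,k'),\UCB_t(k,k')]$ contains $\egft(x_t,x_t^{\top}\widehat{\theta}^s_t+k\epsilon)$, and moreover the best grid pair $(k^*_t,k_t^{'*})$ lies in $\cK_t$; (b) writing $w_t(k,k')\defeq(12PL+7)\epsilon+2P\sqrt{2\log(1/\delta)/N^{s,k}_t}+2P\sqrt{2\log(1/\delta)/N^{b,k'}_t}$ for the half-width of that interval, we have $\UCB_t(k,k')-\LCB_t(k,k')=2w_t(k,k')$; (c) arguing exactly as in Lemma~\ref{lem:discretisation_error} --- which uses only the Lipschitz continuity of $p\mapsto\egft(x_t,p)$ and the bound $|x_t^{\top}(\widehat{\theta}^s_t-\theta^s)|\le\epsilon$ guaranteed by $\cE\supseteq\cE^{\textnormal{SBIP}}$ --- the discretization error satisfies $\max_{p'}\egft(x_t,p')-\egft(x_t,x_t^{\top}\widehat{\theta}^s_t+k^*_t\epsilon)\le 2LP\epsilon$.

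Next comes the core chain. Since $p_t=x_t^{\top}\widehat{\theta}^s_t+k_t\epsilon$ and $(k_t,k'_t)\in\cK_t\subseteq\cA_t$, ingredient (a) gives $\egft(x_t,p_t)\ge\LCB_t(k_t,k'_t)$; since $(k_t,k'_t)\in\cK_t$, the definition of $\cK_t$ gives $\UCB_t(k_t,k'_t)\ge\LCB_t(k^*_t,k_t^{'*})$; and (a) applied at $(k^*_t,k_t^{'*})$ gives $\egft(x_t,x_t^{\top}\widehat{\theta}^s_t+k^*_t\epsilon)\le\UCB_t(k^*_t,k_t^{'*})$. Telescoping through $\LCB_t(k^*_t,k_t^{'*})\le\UCB_t(k_t,k'_t)$ and invoking (b),
\[
\egft(x_t,x_t^{\top}\widehat{\theta}^s_t+k^*_t\epsilon)-\egft(x_t,p_t)\;\le\;\UCB_t(k^*_t,k_t^{'*})-\LCB_t(k_t,k'_t)\;\le\;2w_t(k^*_t,k_t^{'*})+2w_t(k_t,k'_t).
\]
Adding the discretization bound from (c) yields $\sr(x_t,p_t)\le 2LP\epsilon+2w_t(k^*_t,k_t^{'*})+2w_t(k_t,k'_t)$.

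It then remains to charge the four statistical terms. If $t\in\cTSE_{s,k}$ then $k_t=k$, and the index-selection rule --- this branch is entered precisely because the $s$-minimum over $\cK_t$ does not exceed the $b$-minimum --- guarantees $N^{s,k}_t=N^{s,k_t}_t\le\min\{N^{s,l}_t,\,N^{b,l'}_t\}$ for \emph{every} $(l,l')\in\cK_t$. Applying this with $(l,l')=(k^*_t,k_t^{'*})\in\cK_t$ and with $(l,l')=(k_t,k'_t)\in\cK_t$ shows that each of $N^{s,k^*_t}_t$, $N^{b,k_t^{'*}}_t$, $N^{b,k'_t}_t$ is at least $N^{s,k}_t$; hence all four square-root terms inside $2w_t(k^*_t,k_t^{'*})+2w_t(k_t,k'_t)$ are bounded by $\sqrt{2\log(1/\delta)/N^{s,k}_t}$. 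Collecting constants ($2LP+4(12PL+7)=50PL+28$ for the $\epsilon$ part, $4\cdot 4P=16P$ for the statistical part) gives the first claim. The case $t\in\cTSE_{b,k'}$ is identical after swapping the roles of the $s$- and $b$-counts, using that this branch is entered because the $b$-minimum over $\cK_t$ is the smaller one.

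The only delicate point is the bookkeeping in the last step: one must make sure that the least-sampled pair really dominates \emph{all four} counts entering the two widths --- in particular the counts attached to the optimal grid pair, which is exactly why the inclusion $(k^*_t,k_t^{'*})\in\cK_t$ (the second half of Lemma~\ref{lem:UCB_LCB}) is needed --- and that the constants appearing in $w_t$ and in the discretization bound add up to precisely $50PL+28$ and $16P$.
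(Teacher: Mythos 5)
Your proof is correct and follows essentially the same route as the paper's: it uses the inclusion $(k^*_t,k_t^{'*})\in\cK_t$ from Lemma~\ref{lem:UCB_LCB} together with the definition of $\cK_t$ to telescope $\egft$ between the optimal and chosen grid pair through the two confidence widths, then charges all four statistical terms to $N^{s,k}_t$ via the ``explore the least-sampled'' rule, and finally adds the discretization error from Lemma~\ref{lem:discretisation_error}. The only (cosmetic) difference is that you parametrize by the half-width $w_t$ and bound $\UCB_t(k^*_t,k_t^{'*})-\LCB_t(k_t,k'_t)$ in one step rather than rearranging the $\LCB/\UCB$ inequality as the paper does; the constants $50PL+28$ and $16P$ come out identically.
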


The proof for this result can be found in \Cref{app:proof elim}.

We are now ready to bound the regret of \textsc{SBIP}. We begin by decomposing the regret as 
\begin{align*}
    R_T = \sum_{t \in \cTpar_{T+1}} \sr(x_t, p_t) + \sum_{t \in \cTint} \sr(x_t, p_t) + \sum_{t \notin \cTpar_{T+1}\cup \cTint} \sr(x_t, p_t).
\end{align*}
Since the gain from trade is bounded by $2P$, this implies
\begin{align*}
    R_T \leq 2P\vert \cTpar_{T+1}\vert + 2P\vert \cTint\vert + \sum_{t \notin \cTpar_{T+1}\cup \cTint} \sr(x_t, p_t).
\end{align*}
Then, using Lemma \ref{lem:explore_theta}, by setting $\mu = \epsilon\left(P\mleft(d\log\left(\frac{1 + B^2T}{\delta}\right)\mright)^{\nicefrac12} + A\right)^{-1}$ and $\Tint = 8P^2\log(1/\delta)/\epsilon^2$ we have
\begin{multline}\label{eq:explore_phase}
    R_T \leq \epsilon^{-2}\mleft(2Pd \log\left(\frac{T + d}{d}\right)\left(P\sqrt{d\log\left(\frac{1 + B^2T}{\delta}\right)} + A\right)^2\mright) +\\ 16\epsilon^{-2}P^3\log(1/\delta)+ \sum_{t \notin \cTpar_{T+1}\cup \cTint} \sr(x_t,p_t).
\end{multline}
To bound the third term (\ie the regret of the successive elimination phase), we decompose it as
\[
    \sum_{t \notin \cTpar_{T+1}\cup \cTint} \sr(x_t, p_t\epsilon) = \sum_{k \in \cK} \sum_{t \in \cTSE_{s, k}}\sr(x_t, p_t) +  \sum_{k' \in \cK} \sum_{t \in \cTSE_{b, k'}}\sr(x_t, p_t).
\]
For $k \in \cK$, let us partition $\cTSE_{s, k}$ as follows. We denote by $T^{\textnormal{SE}}_{s,k} = \vert \cTSE_{s, k}\vert$, and we define $t_1,\ldots, t_{T^{\textnormal{SE}}_{s,k}}$ the rounds where $t\in \cTSE_{s, k}$. More formally, we have $t_1 < t_2 < ... < t_{T^{\textnormal{SE}}_{s,k}}$, and $\{t_1, ..., t_{T^{\textnormal{SE}}_{s,k}}\} =\cTSE_{s, k}$. We define $\overline{a} = \left\lfloor -\log_2\left(2(50PL+28)\epsilon\right)\right\rfloor$, and for $a \in \llbracket 1, \overline{a}\rrbracket$, we define
$$\mathfrak{t}_a = 2\log(1/\delta)\cdot\left(32P2^a\right)^2.$$
With these notation, we have
\begin{align*}
\sum_{t \in \cTSE_{s, k}}\sr(x_t, p_t) &= \sum_{i \leq \mathfrak{t}_{1}\land T^{\textnormal{SE}}_{s,k}} \sr(x_{t_i}, p_{t_i})\\
&+ \sum_{a = 1}^{\overline{a}-1}\,\,\sum_{\mathfrak{t}_a\land T^{\textnormal{SE}}_{s,k} < i \leq \mathfrak{t}_{a+1}\land T^{\textnormal{SE}}_{s,k}}\sr(x_{t_i}, p_{t_i}) + \sum_{(\mathfrak{t}_{\overline{a}} + 1)\land T^{\textnormal{SE}}_{s,k}\leq i \leq T^{\textnormal{SE}}_{s,k}}\sr(x_{t_i}, p_{t_i}).
\end{align*}
For $1\leq a \leq \overline{a}$, if $\mathfrak{t}_a \leq i \leq T^{\textnormal{SE}}_{s,k} $, by definition of $\mathfrak{t}_a$, we have $$(50PL+28)\epsilon\leq 16P\sqrt{\frac{2\log(1/\delta)}{N^{s,k}_{t_i}}}\leq \frac{2^{-a}}{2},$$ and Lemma \ref{lem:elim} implies that, conditioned on the high-probability event $\cE^{\textnormal{SBIP}}$,
$$\sr(x_{t_i}, p_{t_i}) \leq 2^{-a}.$$
Using the fact that $\sr(x_t, p_t)\leq 2P$, we obtain
\begin{align*}
\sum_{t \in \cTSE_{s, k}} \sr(x_t, p_t) &\leq 2P \mathfrak{t}_{1}+ \sum_{a = 1}^{\overline{a}-1}2^{-a} \left( \mathfrak{t}_{a+1} - \lceil \mathfrak{t}_a\rceil + 1 \right) +  2^{-\overline{a}}\cdot T^{\textnormal{SE}}_{s,k}\\
&\leq 2P \mathfrak{t}_{1}+ \sum_{a = 1}^{\overline{a}-1}2^{-a} \left( \mathfrak{t}_{a+1} - \mathfrak{t}_a + 1 \right) +  2^{-\overline{a}}\cdot T^{\textnormal{SE}}_{s,k}.
\end{align*}
Then, 
\begin{align*}
    \sum_{a = 1}^{\overline{a}-1}2^{-a} \left(\mathfrak{t}_{a+1} - \mathfrak{t}_a\right) &= 6\log(1/\delta)\cdot\left(32P\right)^2\sum_{a = 1}^{\overline{a}-1}2^{a}\\
    &= 6\log(1/\delta)\cdot\left(32P\right)^2 2^{\overline{a}}\\
    &\leq \frac{3 \log(1/\delta)\cdot\left(32P\right)^2}{(50PL+28)\epsilon}
\end{align*}
Moreover, we have that
\begin{align*}
    2^{-\overline{a}}\cdot T^{\textnormal{SE}}_{s,k} \leq 2(50PL+28)\epsilon T^{\textnormal{SE}}_{s,k}.
\end{align*}
Therefore,
\begin{align*}
\sum_{t \in \cTSE_{s, k}} \sr(x_t, p_t) \leq 4P\log(1/\delta)\cdot\left(128P\right)^2 +  \frac{3 \log(1/\delta)\cdot\left(32P\right)^2}{(50PL+28)\epsilon} + 1 + 2(50PL+28)\epsilon T^{\textnormal{SE}}_{s,k}.
\end{align*}
We proceed similarly to bound $\sum_{t \in \cTSE_{b, k}} \sr(x_t, p_t)$, and we obtain that
\begin{align*}
\sum_{t \in \cTSE_{b, k}} \sr(x_t, p_t) \leq 4P\log(1/\delta)\cdot\left(128P\right)^2 +  \frac{3 \log(1/\delta)\cdot\left(32P\right)^2}{(50PL+28)\epsilon} + 1 + 2(50PL+28)\epsilon T^{\textnormal{SE}}_{b,k}.
\end{align*}
Summing over $k \in \cK$ and $k' \in \cK$, and using the fact that $\sum_{k \in \cK}T^{\textnormal{SE}}_{s,k} + T^{\textnormal{SE}}_{b,k}\leq T$, we find
\begin{align}\label{eq:elim_phase}
    \sum_{t \notin \cTpar_{T+1}\cup \cTint} \sr(x_t, p_t\epsilon) &\leq 2\vert \cK\vert \left(4P\log(1/\delta)\times\left(128P\right)^2 +  \frac{3 \log(1/\delta)\times\left(32P\right)^2}{(50PL+28)\epsilon} + 1 \right)\nonumber\\
    &+ 2(50PL+28)\epsilon T.
\end{align}
Combining Equations \eqref{eq:explore_phase} and \eqref{eq:elim_phase}, we find that, conditioned on the event $\cE^{\textnormal{SBIP}}$,
\begin{align*}
    R_T \leq& \frac{2Pd \log\left(\frac{T + d}{d}\right)\left(P\sqrt{d\log\left(\frac{1 + B^2T}{\delta}\right)} + A\right)^2}{\epsilon^2} + \frac{18P^3\log(1/\delta)}{\epsilon^2}\\
    &+2\vert \cK\vert \left(4P\log(1/\delta)\times\left(128P\right)^2 +  \frac{3 \log(1/\delta)\times\left(32P\right)^2}{(50PL+28)\epsilon} + 1 \right)+ 2(50PL+28)\epsilon T.
\end{align*}

By setting $\epsilon = (d^2\log(T)^2/T)^{\frac{1}{3}}$ and $\delta = T^{-3}$, and using the definition of $\cK$, we find that there exists a constant $\widetilde C'$ depending on $A$, $B$ $C$, and $L$ such that with probability $1 - 20T^{-2/3}$, 
\begin{align*}
    R_T \leq \widetilde C'(d\log(T)T)^{\nicefrac{2}{3}}.
\end{align*}
This concludes the proof of \Cref{thm:noise_2_bits}.
%%%%%%%%%%%%%%%%%%%%%%%%%%%%%%%%

\section{Proofs of auxiliary lemmas}\label{app:aux}
%%%%%%%%%%%%%%%%%%%%%%%%%%%%%%%%%%%%%%%%%%%%%%%%%%%%%%%%%%%%
\subsection{Proof of Lemma \ref{lem:EGFT}}\label{aux:egft dec}

\egftdec*

\begin{proof}
The proof relies on the characterization of the gain from trade given by Lemma 4.1 in \cite{CesaBianchiCCF21}. Recall that, under Assumption \ref{ass:bounded_context}, we have $\xi_t^s \in [-C,C]$, and $\xi_t^b \in [-C,C]$. Then, denoting $f^s$ and $f^b$ the densities of $\xi_t^s$ and $\xi_t^b$, we have
\begin{small}
\begin{align*}
&\egft(x,p)\\
&\hspace{0.5cm}= \int_{(s,b) \in [x^{\top}\theta^s-C,x^{\top}\theta^s+C]\times [x^{\top}\theta^b-C,x^{\top}\theta^b+C]}(b - s)\mathbb{I}\{s \leq p \leq b\}f^b(x^{\top}\theta^b-s) f^s(x^{\top}\theta^s-s) \diff s\diff b.
\end{align*}    
\end{small}
Let $\tilde{f}^b : b \mapsto f^b(x^{\top}\theta^b-s)$ and $\tilde{f}^s : s \mapsto f^s(x^{\top}\theta^s-s)$ be the densities of $s_t$ and $b_t$ conditionnally on $x_t = x$, and note that outside of $[x^{\top}\theta^b-C,x^{\top}\theta^b+C]\times [x^{\top}\theta^s-C,x^{\top}\theta^s+C]$, $\tilde{f}^s(s)\tilde{f}^b(b) = 0$. With these notation, we have
\begin{align*}
\egft(x,p) &= \int_{(s,b) \in [-P,P]^2}(b - s)\mathbb{I}\{s \leq p \leq b\}\tilde{f}^b(b) \tilde{f}^s(s) \diff s \diff b\\
&= \int_{(s, b)\in [-P,p]\times [p, P]} \left(\int_{-P}^b du  - \int_{-P}^s du\right)\tilde{f}^s(s)\tilde{f}^b(b) \diff s \diff b.
\end{align*}
Thus,
\begin{align*}
\egft(x,p) =& \int_{[-P,P]}\left(\int_{(s,b) \in[-P,p]\times [p\lor u, P]} \tilde{f}^s(s) \tilde{f}^b(b) \diff s \diff b\right)  \diff u  \\
&-  \int_{[-P,p]}\left(\int_{(s,b) \in[u,p]\times [p, P]} \tilde{f}^s(s) \tilde{f}^b(b) \diff s \diff b \right) \diff u.
\end{align*}

This yields
\begin{align*}
\egft(x,p) =& \int_{[-P,p]}\left(\int_{(s,b) \in[-P,p]\times [p, P]} \tilde{f}^s(s) \tilde{f}^b(b)\diff s \diff b  \right) \diff u\\
&+ \int_{[p, P]}\left(\int_{(s,b) \in[-P,p]\times [u, P]} \tilde{f}^s(s) \tilde{f}^b(b) \diff s \diff b  \right) \diff u  \\
& -  \int_{[-P,p]}\left(\int_{(s,b) \in[u,p]\times [p, P]} \tilde{f}^s(s) \tilde{f}^b(b) \diff s \diff b  \right) \diff u.
\end{align*}
This, in turn, implies that
\begin{align*}
\egft(x,p) =& \int_{[-P,p]}\left(\int_{(s,b) \in[-P,u]\times [p, P]} \tilde{f}^s(s) \tilde{f}^b(b)\diff s \diff b  \right) \diff u \\
&+ \int_{[p, P]}\left(\int_{(s,b) \in[-P,p]\times [u, P]} \tilde{f}^s(s) \tilde{f}^b(b) \diff s \diff b  \right) \diff u\\
=&
 \int_{[p,P]}\tilde{f}^b(b) \diff d \int_{[-P,p]}\int_{[-P,u]} \tilde{f}^s(s) \diff s  \diff u \\&+ \int_{[-P, p]}\tilde{f}^s(s)\diff s \int_{[p,P]}\int_{[u, P]}  \tilde{f}^b(b)  \diff b  \diff u .
\end{align*}
By using the change in variables $\xi^s = s- x^{\top}\theta^s$, $u^s = u- x^{\top}\theta^s$, $\xi^b = b- x^{\top}\theta^b$, and $u^b = u- x^{\top}\theta^b$, and the definition of $\tilde{f}^s$ and $\tilde{f}^b$, we get
\begin{align*}
\egft(x,p) =&  \int_{[p- x^{\top}\theta^b,P- x^{\top}\theta^b]}f(\xi^b) \diff d \int_{[-P, p]}\int_{[-P- x^{\top}\theta^s,u - x^{\top}\theta^s]} f(\xi^s)\diff \xi^s  \diff u\\
&+\int_{[-P- x^{\top}\theta^s, p- x^{\top}\theta^s]}f(\xi^s)\diff \xi^s \int_{[p,P]}\int_{[u- x^{\top}\theta^b, P- x^{\top}\theta^b]}  f(\xi^b)  \diff \xi^b  \diff u \\
=&  \int_{[p- x^{\top}\theta^b,P- x^{\top}\theta^b]}f(\xi^b) \diff \xi^b \int_{[-P- x^{\top}\theta^s, p- x^{\top}\theta^s]}\int_{[-P- x^{\top}\theta^s,u^s]} f(\xi^s)\diff \xi^s  \diff u^s\\
&+\int_{[-P- x^{\top}\theta^s, p- x^{\top}\theta^s]}f(\xi^s)\diff \xi^s \int_{[p- x^{\top}\theta^b,P- x^{\top}\theta^b]}\int_{[u^b, P- x^{\top}\theta^b]}  f(\xi^b)  \diff \xi^b  \diff u^b.
\end{align*}
Now, under Assumption \ref{ass:bounded_context}, $f^b$ and $f^s$ are null outside of $[-C,C]$. Moreover, $P- x^{\top}\theta^b \geq C$ and $-P- x^{\top}\theta^s \leq -C$. Thus,
\begin{align*}
\egft(x,p) =& \int_{[p- x^{\top}\theta^b,C]}f(\xi^b) \diff d \int_{[-C, p- x^{\top}\theta^s]}\int_{[-C,u^s]} f(\xi^s)\diff \xi^s  \diff u^s\\
&+ \int_{[-C, p-x^{\top}\theta^s]}f(\xi^s)\diff \xi^s \int_{[p- x^{\top}\theta^b,C]}\int_{[u^b, C]}  f(\xi^b)  \diff \xi^b  \diff u^b \\
=&  D^b(p- x^{\top}\theta^b)\int_{-C}^{p- x^{\top}\theta^s}F^s(u^s)  \diff u^s + F^s(p-x^{\top}\theta^s) \int_{p- x^{\top}\theta^b}^CD^b(u^b)  \diff u^b.
\end{align*}
Using the definition of $\delta^s = p- x^{\top}\theta^s$ and $\delta^b= p- x^{\top}\theta^b$, we obtain the desired result.
\end{proof}

%%%%%%%%%%%%%%%%%%%%%%%%%%%%%%%%%%%%%%%%%%%%%%%%%%%%%%%%%%%
\subsection{Proof of Lemma \ref{lem:explore_theta}}\label{app:proof lemma stop tpar}

\stopTpar*
\begin{proof}
The elliptical potential Lemma (see, \eg Proposition 1 in \cite{carpentier2020elliptical}) implies that almost surely,
\begin{align*}
    \sum_{t\in \cTpar_{T+1}} \Vert x_t\Vert_{\left(\sum_{l \in \cTpar_t} x_l x_l^{\top} +  \bI_d \right)^{-1}} \leq \sqrt{\vert\cTpar_{T+1}\vert d \log\left(\frac{\vert\cTpar_{T+1}\vert + d}{d}\right)}.
\end{align*}
Since for all $t \in \cTpar_{T+1}$, $\Vert x_t\Vert_{\left(\sum_{l \in \cTpar_t} x_l x_l^{\top} +  \bI_d \right)^{-1}} \geq \mu$, this implies that
\begin{align*}
    \vert \cTpar_{T+1}\vert \mu \leq \sqrt{\vert\cTpar_{T+1}\vert d \log\left(\frac{\vert\cTpar_{T+1}\vert + d}{d}\right)}.
\end{align*}
Now, almost surely, $\vert \cTpar_{T+1}\vert \leq T$. This implies in particular that
\begin{align*}
    \vert \cTpar_{T+1}\vert \leq \frac{d \log\left(\frac{T + d}{d}\right)}{\mu^2},
\end{align*}
which concludes the proof.
\end{proof}
%%%%%%%%%%%%%%%%%%%%%%%%%%%%%%%%%%%%%%%%%%%%%%%%%%%%%%%%%%%
\subsection{Proof of Lemma \ref{lem:bound_epsilon_etc}}\label{app:bound event etc lemma}
\boundetcevent*
\begin{proof}
The Lemma is a consequence of the following two auxiliary results, bounding respectively the error for estimating the parameters $\theta^s$ and $\theta^b$, and the integrals $I$ and $J$.
\begin{restatable}{lemma}{eventE}\label{lem:eventE}
Let $\cE_1$ be the event :
\begin{align*}
   \cE_1 \defeq \mleft\{\forall t \notin \cTpar_{T+1},\quad \left\vert x_t^{\top}(\widehat{\theta}^b_t- \theta^b) \right\vert \leq \epsilon \quad \textnormal{ and } \quad \left\vert x_t^{\top}(\widehat{\theta}^s_t- \theta^s) \right\vert \leq \epsilon \mright\}.
\end{align*}
Then, for the choice $\mu = \epsilon\left(P\mleft(d\log\left(\frac{1 + B^2T}{\delta}\right)\mright)^{\nicefrac12} + A\right)^{-1}$, we have $\mathbb{P}\left(\cE_1\right)\geq 1-2\delta$. 
\end{restatable}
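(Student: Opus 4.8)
The plan is to reduce the claim to a standard self-normalized concentration bound for linear regression with bounded noise, combined with the elliptical potential control from Lemma \ref{lem:explore_theta}. First I would recall the key unbiasedness property exploited by \textsc{Est-Par}: since $p_l$ is drawn uniformly on $[-P,P]$ and the valuations lie in $[-P,P]$, we have $\E{2P(\indicator{p_l\le s_l}-\tfrac12)\,\vert\, x_l} = 2P\,\P{p_l\le s_l\,\vert\,x_l} - P = x_l^\top\theta^s$, so $y_l^s \defeq 2P(\indicator{p_l\le s_l}-\tfrac12)$ is a bounded (hence sub-Gaussian with parameter at most $P$) unbiased observation of $x_l^\top\theta^s$, and similarly for the buyer. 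The estimator in Line \ref{algline:update theta s} is exactly the ridge estimator $\widehat\theta^s_t = \bV_t^{-1}\sum_{l\in\cTpar_t} y_l^s x_l$ with $\bV_t = \sum_{l\in\cTpar_t} x_lx_l^\top + \bI_d$.

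Next I would invoke the standard self-normalized bound (e.g.\ Theorem 1 of \cite{NIPS2011_e1d5be1c}, as used in \cite{lattimore2020bandit}): with probability at least $1-\delta$, simultaneously for all rounds $t$,
\[
\bigl\|\widehat\theta^s_t - \theta^s\bigr\|_{\bV_t} \;\le\; P\sqrt{d\log\!\Bigl(\tfrac{1+B^2 T}{\delta}\Bigr)} \;+\; \|\theta^s\|_2 \;\le\; P\sqrt{d\log\!\Bigl(\tfrac{1+B^2 T}{\delta}\Bigr)} + A,
\]
where the regularization contributes the $\|\theta^s\|_2\le A$ term and $\det(\bV_t)\le(1+B^2 T/d)^d$ under Assumption \ref{ass:bounded_context}. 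Call the right-hand side $\rho$; then I would apply Cauchy--Schwarz in the $\bV_t^{-1}$ geometry: for any $t\notin\cTpar_{T+1}$, the algorithm did \emph{not} enter the estimation branch, so $\|x_t\|_{\bV_t^{-1}}\le\mu$, hence
\[
\bigl|x_t^\top(\widehat\theta^s_t-\theta^s)\bigr| \;\le\; \|x_t\|_{\bV_t^{-1}}\,\bigl\|\widehat\theta^s_t-\theta^s\bigr\|_{\bV_t} \;\le\; \mu\,\rho \;=\; \epsilon
\]
by the choice $\mu = \epsilon/\rho = \epsilon\bigl(P\sqrt{d\log((1+B^2T)/\delta)}+A\bigr)^{-1}$. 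One subtlety: $\widehat\theta^s_t$ is defined from $\cTpar_t$, while the ``not explored'' condition at round $t$ is $\|x_t\|_{\bV_t^{-1}}\le\mu$ with the \emph{current} $\bV_t$, which is precisely the matrix built from $\cTpar_t$ — so the indices match up and no off-by-one issue arises. Running the identical argument for the buyer and taking a union bound over the two $1-\delta$ events yields $\P{\cE_1}\ge 1-2\delta$, as claimed.

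The main obstacle is bookkeeping rather than conceptual: one must be careful that the stopping/skipping rule and the estimator use \emph{consistent} index sets and covariance matrices (so the self-normalized martingale bound applies with the filtration generated by $(x_l,p_l)_{l\le t}$, noting $x_t$ is $\mathcal F_{t-1}$-measurable as it is revealed before $p_t$ and the feedback), and that the adversarial choice of contexts $x_t$ is handled — which it is, since the self-normalized inequality holds uniformly over all (even adversarial) context sequences. The only genuinely quantitative point is verifying the $\log$-determinant bound $\log\det(\bV_t)\le d\log(1+B^2T/d)\le d\log(1+B^2T)$ to match the stated form of $\mu$; this is immediate from $\mathrm{tr}(\bV_t)\le d + B^2 T$ and AM--GM on the eigenvalues.
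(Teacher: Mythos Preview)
Your proposal is correct and follows essentially the same route as the paper's proof: establish that $2P(\indicator{p_l\le s_l}-\tfrac12)$ is an unbiased, $P$-sub-Gaussian observation of $x_l^\top\theta^s$, apply the self-normalized bound of \citet{NIPS2011_e1d5be1c} to the ridge estimator, and combine the resulting $\|\widehat\theta^s_t-\theta^s\|_{\bV_t}$ control with $\|x_t\|_{\bV_t^{-1}}\le\mu$ via Cauchy--Schwarz before taking a union bound over seller and buyer. The paper handles the index-set bookkeeping by introducing auxiliary variables $\tilde x_t = x_t\,\indicator{t\in\cTpar_{t+1}}$ and $\tilde y_t$ to embed the estimation rounds into a full martingale sequence, which is exactly the formalization of the filtration subtlety you flag.
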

The proof of the above lemma is deferred to \Cref{app:proof lem event E}.

\begin{restatable}{lemma}{eventF}\label{lem:eventF}
Let $\cE_2$ be the event:
\begin{align*}
 \cE_2 \defeq \mleft\{\vert \cTint\vert < \Tint\textnormal{ or }\forall k \in \cK\,\, \ \left\vert\widehat{I}^k - I\left(k\epsilon\right) \right\vert \leq 2\epsilon  \textnormal{ and } \left\vert\widehat{J}^k - J(k\epsilon) \right\vert \leq 2\epsilon \mright\} \cap \cE_1.
\end{align*}
Then, for the choice $\mu = \epsilon\left(P\mleft(d\log\left(\frac{1 + B^2T}{\delta}\right)\mright)^{\nicefrac12} + A\right)^{-1}$ and $\Tint = 8P^2\log(1/\delta)/\epsilon^2$, we have 
\[\mathbb{P}\left(\cE_2\right)\geq 1-2\delta - 4\delta(2K+1).\]
\end{restatable}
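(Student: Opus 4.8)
The plan is to condition on the event $\cE_1$ (which holds with probability at least $1-2\delta$ by Lemma~\ref{lem:eventE}) and, on that event, to show that each of the estimators $\widehat I^k$ and $\widehat J^k$ concentrates around $I(k\epsilon)$ and $J(k\epsilon)$ respectively, up to an error of $2\epsilon$, uniformly over $k\in\cK$. First, I would observe that we only need to bound the estimation error when $\vert\cTint\vert=\Tint$, since otherwise the event is satisfied trivially. When $\vert\cTint\vert=\Tint$, each round $l\in\cTint$ uses a fresh price $p_l\sim\cU([-P,P])$ posted to both agents, and the summands $2P\,\indicator{k\epsilon+x_l^\top\widehat\theta^b\le p_l\le b_l}$ are, conditionally on the context, i.i.d.\ bounded random variables in $[0,2P]$.

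The key step is a bias/variance decomposition. I would write the expectation of a single summand (conditionally on $x_l$ and on the parameter estimates, which are fixed before the $\textsc{Est-Int}$ phase starts) as $2P\cdot\P(k\epsilon+x_l^\top\widehat\theta^b\le p_l\le b_l\mid x_l)$. Since $p_l$ is uniform on $[-P,P]$ and $b_l=x_l^\top\theta^b+\xi_l^b$ with $\xi_l^b$ independent of $p_l$, this probability equals $\frac{1}{2P}\int \P(\xi_l^b\ge u - x_l^\top\theta^b)\,\indicator{u\ge k\epsilon+x_l^\top\widehat\theta^b}\,du$ restricted appropriately, which after the substitution $v=u-x_l^\top\theta^b$ becomes $\frac1{2P}\int_{k\epsilon+x_l^\top(\widehat\theta^b-\theta^b)}^{C} D^b(v)\,dv = \frac{1}{2P}\,I\!\big(k\epsilon + x_l^\top(\widehat\theta^b-\theta^b)\big)$ (using $P-x_l^\top\theta^b\ge C$ so the upper limit can be taken as $C$). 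Hence the estimator $\widehat I^k$ is an unbiased estimator of $I\big(k\epsilon+x_l^\top(\widehat\theta^b-\theta^b)\big)$, not of $I(k\epsilon)$ itself. On $\cE_1$ we have $\vert x_l^\top(\widehat\theta^b-\theta^b)\vert\le\epsilon$ for all $l\notin\cTpar_{T+1}\supseteq\cTint$, and since $I$ is $1$-Lipschitz (as $0\le D^b\le 1$), the bias is at most $\epsilon$. The analogous computation handles $\widehat J^k$, using that $J$ is $1$-Lipschitz because $0\le F^s\le 1$.

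It then remains to control the stochastic fluctuation by $\epsilon$. By Hoeffding's inequality applied to the $\Tint$ i.i.d.\ bounded terms (range $2P$), the empirical average deviates from its mean by more than $\epsilon$ with probability at most $2\exp(-\Tint\epsilon^2/(2P^2))$; with $\Tint=8P^2\log(1/\delta)/\epsilon^2$ this is at most $2\delta^4\le 2\delta$. Taking a union bound over the $2K+1$ values of $k\in\cK$ for each of $\widehat I^k$ and $\widehat J^k$ gives a failure probability of at most $4\delta(2K+1)$; combining with $\P(\cE_1^c)\le 2\delta$ via a final union bound yields $\P(\cE_2)\ge 1-2\delta-4\delta(2K+1)$.

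The main obstacle is the first step, namely correctly identifying that the estimators target the shifted arguments $I(k\epsilon+x_l^\top(\widehat\theta^b-\theta^b))$ rather than $I(k\epsilon)$, and carefully checking that the support truncation (replacing the upper integration limit by $C$ and the lower one by $-C$) is valid under Assumptions~\ref{ass:bounded_context} and~\ref{ass:bounded_densities} — this is where the choice $P=C+AB$ is used. Everything after that is a routine Hoeffding-plus-union-bound argument, and the factor of $2$ in the error bound ($2\epsilon$ rather than $\epsilon$) is exactly what absorbs the $\epsilon$ of bias plus the $\epsilon$ of concentration.
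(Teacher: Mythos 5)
Your proposal takes essentially the same route as the paper: condition on $\cE_1$, compute the (conditional) mean of each summand as $\tfrac{1}{2P}I\bigl(k\epsilon + x_l^\top(\widehat\theta^b - \theta^b)\bigr)$ (resp.\ $J$), use the $1$-Lipschitz continuity of $I$ and $J$ to bound the bias by $\epsilon$ on $\cE_1$, then concentrate the fluctuation by another $\epsilon$ via a Hoeffding-type bound with $\Tint = 8P^2\log(1/\delta)/\epsilon^2$, and union bound over $k\in\cK$ and the two estimators. One small imprecision: the summands are not i.i.d.\ (they have different conditional means, since the adversarial contexts $x_l$ differ across $l$), but only independent with a common bound; Hoeffding still applies, and your bias argument correctly accounts for the differing means.

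The one place where the paper is more careful is the concentration step. You apply plain Hoeffding to a fixed collection of $\Tint$ independent summands, implicitly treating $\cTint$ and the parameter estimates as fixed. This is in fact valid here because the \textsc{Est-Par}/\textsc{Est-Int} branching is driven purely by $\Vert x_t\Vert_{\bV_t^{-1}}$ and cardinality counters, which are deterministic functions of the oblivious adversarial context sequence; so $\cTint$ is deterministic and the $\cTint$-round randomness is independent of the $\cTpar$-round randomness. The paper instead invokes its martingale-style Lemma~\ref{lem:Hoeffding_rigorous}, which yields the bound in the form ``either $\vert\cTint\vert<\Tint$ or the deviation is small,'' and has the advantage of being reusable for the successive-elimination phase of the SBIP analysis where the selected rounds genuinely are adaptive. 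If you wanted a version of your argument that is robust to adaptive $\cTint$ (or to non-oblivious adversaries), you would need Azuma--Hoeffding or the paper's self-normalized martingale lemma rather than classical Hoeffding, but for this particular lemma your approach is sound.
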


The proof of the above lemma is deferred to \Cref{app:proof lemma eventF}.

Next, we control the error $\vert \widehat{F}^k - F^s(k\epsilon)\vert$ uniformly for $k \in \cK$. The results for $\vert \widehat{D}^k -D^b(k\epsilon)\vert$ follow from similar arguments. To do so, we rely on the following well known result (for the sake of completeness, we provide a proof in \Cref{app:proof hoeffding rigorous}).

\begin{restatable}{lemma}{Hoeffding}\label{lem:Hoeffding_rigorous}
Let $(y_t)_{t\geq 1}$ be a sequence of random variables adapted for a filtration $\cF_t$, such that $y_t - \mathbb{E}\left[y_t \vert \cF_{t-1}\right] \in [m, M]$. Assume that for $t \in \mathbb{N}_*$, $\iota_t \in \{0,1\}$ is $\cF_{t-1}$-measurable, and define $N_t \defeq \sum_{l \leq t}\iota_l$, and $\widehat{\mu}_t \defeq N_t^{-1}\sum_{l \leq t}\iota_l (y_l - \mathbb{E}\left[y_l \vert \cF_{l-1}\right] )$ if $N_t \geq 1$. Then, for any $t \in \mathbb{N}_*$ and $\delta \in (0,1)$,
$$\mathbb{P}\left(N_t = 0 \,\,\textnormal{ or }\,\, \vert \widehat{\mu}_t\vert\leq (M-m)\sqrt{\frac{\log(1/\delta)}{2N_t }} \right) \geq 1-2t\delta.$$
Moreover, for any $t >0$ and $\delta \in (0,1)$,
$$\mathbb{P}\left(N_t = t \,\,\textnormal{ and }\,\, \vert \widehat{\mu}_t\vert\geq (M-m)\sqrt{\frac{\log(1/\delta)}{2N_t }} \right) \leq 1-2\delta.$$
\end{restatable}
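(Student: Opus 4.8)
The plan is to reduce the statement to the standard Azuma--Hoeffding inequality for martingales with bounded increments, dealing with the randomness of the count $N_t$ by subsampling the underlying martingale at the (random) selection times and taking a union bound over the at most $t$ possible values of $N_t$. First I would introduce the process $M_k \defeq \sum_{l\le k}\iota_l\bigl(y_l - \mathbb{E}[y_l\vert \cF_{l-1}]\bigr)$. Since each $\iota_l$ is $\cF_{l-1}$-measurable and $\mathbb{E}\bigl[y_l - \mathbb{E}[y_l\vert\cF_{l-1}]\,\big\vert\,\cF_{l-1}\bigr]=0$, the process $(M_k)_{k\ge 0}$ is a martingale with respect to $(\cF_k)_k$ with $M_0=0$, and $N_t\widehat\mu_t = M_t$ whenever $N_t\ge 1$. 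I would also record the elementary fact that, since $y_l-\mathbb{E}[y_l\vert\cF_{l-1}]$ has conditional mean zero and lies in $[m,M]$, we have $m\le 0\le M$; hence every increment of $(M_k)$ — including the null increments at rounds where $\iota_l=0$ — lies in the interval $[m,M]$ of width $M-m$.

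Next I would define, for $n\ge 1$, the selection times $\tau_n \defeq \inf\{k\ge 1 : N_k = n\}$ (with $\tau_0\defeq 0$ and $\inf\varnothing\defeq +\infty$) and the subsampled process $\widetilde M_n \defeq M_{\tau_n\wedge t}$. Each $\tau_n\wedge t$ is a bounded stopping time, so by optional stopping $(\widetilde M_n)_{n\ge 0}$ is a martingale with respect to $(\cF_{\tau_n\wedge t})_n$. A short case analysis comparing $\tau_{n-1}$ and $\tau_n$ with $t$ shows that $\widetilde M_n-\widetilde M_{n-1}$ is either $0$ (if no selection occurs in the relevant window) or equals $y_{\tau_n}-\mathbb{E}[y_{\tau_n}\vert\cF_{\tau_n-1}]$, so in both cases it lies in $[m,M]$. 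Applying Azuma--Hoeffding to $\widetilde M_n$ for each fixed $n\in\range{t}$ — which can be cited, or obtained from the exponential supermartingale $\exp\bigl(\lambda \widetilde M_n - \tfrac{\lambda^2(M-m)^2}{8}n\bigr)$ together with Markov's inequality and optimization over $\lambda$ — yields
\[
\mathbb{P}\Bigl(|\widetilde M_n| \ge (M-m)\sqrt{\tfrac{n\log(1/\delta)}{2}}\Bigr)\le 2\delta,
\]
and a union bound over $n\in\range{t}$ shows that, with probability at least $1-2t\delta$, this bound holds simultaneously for all $n\le t$.

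To conclude the first claim, I would note that on $\{N_t\ge 1\}$ we have $\tau_{N_t}\le t<\tau_{N_t+1}$, so $M_t = M_{\tau_{N_t}} = \widetilde M_{N_t}$ (no selection occurs strictly between $\tau_{N_t}$ and $t$); hence on the high-probability event above, $|M_t|\le (M-m)\sqrt{N_t\log(1/\delta)/2}$, i.e. $|\widehat\mu_t| = |M_t|/N_t \le (M-m)\sqrt{\log(1/\delta)/(2N_t)}$, giving the first inequality with failure probability at most $2t\delta$. For the second (one-sided) claim I would apply Azuma--Hoeffding directly to the un-masked martingale $\bar M_t\defeq\sum_{l\le t}\bigl(y_l-\mathbb{E}[y_l\vert\cF_{l-1}]\bigr)$ (increments in $[m,M]$), so that $\mathbb{P}\bigl(|\bar M_t|\ge (M-m)\sqrt{t\log(1/\delta)/2}\bigr)\le 2\delta$, and observe that on $\{N_t=t\}$ all $\iota_l=1$, hence $M_t=\bar M_t$ and $N_t=t$; therefore the event in the statement is contained in $\{|\bar M_t|\ge (M-m)\sqrt{t\log(1/\delta)/2}\}$, which has probability at most $2\delta$.

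The only real obstacle is that one cannot apply the Hoeffding bound with $n$ replaced by the random variable $N_t$ directly, since the sum of conditional squared ranges of $(M_k)$ is itself random; the subsampling-at-$\tau_n$-plus-union-bound device is precisely what circumvents this, at the modest cost of the factor $t$ in the failure probability. The remaining steps — verifying the martingale and boundedness properties of the subsampled process and the identity $M_t=\widetilde M_{N_t}$ — are routine.
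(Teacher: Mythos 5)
Your proposal is correct, and it reaches the claimed bounds (interpreting the paper's ``$\leq 1-2\delta$'' in the second display as the intended ``$\leq 2\delta$'', which your argument in fact delivers). However, it takes a genuinely different route from the paper's own proof.

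The paper works entirely ``in real time'': it builds the single exponential process $\exp\bigl(xZ_t - \tfrac{1}{8}x^2(M-m)^2N_t\bigr)$, where the compensator tracks the random count $N_t$, verifies that this is a supermartingale using the $\mathcal{F}_{t-1}$-measurability of $\iota_t$ together with conditional Hoeffding, and then applies a Markov--Chernoff bound to $\mathbb{I}\{N_t=l\}\,e^{xZ_t}$ for each fixed $l$, optimizing $x$ separately for each $l$ before taking a union bound over $l\in\range{t}$. Your proof instead re-indexes: you pass from the time scale $t$ to the number-of-selections scale $n$ via the stopped selection times $\tau_n\wedge t$, verify by optional stopping that $\widetilde M_n = M_{\tau_n\wedge t}$ is again a martingale with increments in $[m,M]$, invoke standard Azuma--Hoeffding for each fixed $n$, and union bound over $n\in\range{t}$; the identity $M_t = \widetilde M_{N_t}$ on $\{N_t\geq 1\}$ then transfers the bound back. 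The two proofs are essentially dual organizations of the same Chernoff calculation: the paper keeps the index deterministic and makes the compensator random, whereas you make the compensator deterministic by randomizing the index. What your route buys is modularity --- once the subsampled process is shown to be a bounded-increment martingale, a known black-box result finishes the job --- at the cost of the optional-stopping machinery and the small bookkeeping needed to verify the increment bound of $\widetilde M_n$ in the three cases $\tau_n\leq t$, $\tau_{n-1}\leq t<\tau_n$, and $\tau_{n-1}>t$ (which you do address). The paper's route is more self-contained but repeats the exponential-supermartingale construction from scratch. Both arguments are sound, and the per-$n$ (resp.\ per-$l$) probability of $\delta$ for each tail, hence $2t\delta$ after the union bound, is obtained identically.

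One small remark: when you claim that the increments of $\widetilde M_n$ are in $[m,M]$ you implicitly use $m\leq 0\leq M$ (so that the null increments are covered); you do state this explicitly, which is good --- it follows from $y_l-\mathbb{E}[y_l\vert\mathcal{F}_{l-1}]$ being conditionally centered and contained in $[m,M]$, hence forcing $m\leq 0\leq M$ unless the variable is deterministic, in which case the claim is trivial.
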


For $t\leq T$, we define $\iota_t \defeq \mathbb{I}\left\{t \in \cT^F_k\right\}$, $y_t \defeq \mathbb{I}\left\{s_t \leq p_t \right\}$, and we observe that for $\cF_t = \sigma\left((x_1,\ldots, x_{t+1}, (\mathbb{I}\{s_1 \leq p_1\},\mathbb{I}\{b_1 \leq p_1\}),\ldots, (\mathbb{I}\{s_t \leq p_t\},\mathbb{I}\{b_t \leq p_t\}) \right)$, $\iota_t$ is $\cF_{t-1}$-measurable, and $y_t$ is $\cF_t$ adapted. Moreover,
\begin{align*}
    \iota_t\mathbb{E}\left[y_t \,\,\vert\,\, \cF_{t-1}\right] &= \iota_t\mathbb{P}\left(x_t^{\top} \theta^s + \xi_t^s \leq x_t^{\top} \widehat{\theta}^s_t + k\epsilon \right)\\
    & = \iota_tF^s\left(x_t^{\top} \left(\widehat{\theta}^s_t - \theta^s\right) + k\epsilon \right).
\end{align*}
Using Lemma \ref{lem:Hoeffding_rigorous}, we find that with probability $1 - 2\delta$, either $\vert \cTF_k\vert < \TFD$, or
\begin{align*}
    \left\vert \widehat{F}^k_{t} -\frac{\sum_{s\leq t}\iota_tF^s\left(x_t^{\top} \left(\widehat{\theta}^s_t - \theta^s\right) + k\epsilon \right)}{\TFD} \right\vert \leq \sqrt{\frac{2\log(1/\delta)}{\TFD}}.
\end{align*}
Moreover, on the event $\cE_1$, for all $t \notin \cTpar$, $\vert x_t^{\top} (\widehat{\theta}^s_t - \theta^s)\vert \leq \epsilon$. Using the fact that $F^s$ is $L$-Lipschitz, we find that
\begin{align*}
    \left\vert F^s\left(x_t^{\top} \left(\widehat{\theta}^s_t - \theta^s\right) + k\epsilon \right) - F^s\left(k\epsilon \right)\right\vert \leq L\epsilon.
\end{align*}
Thus, with probability $1 - 2\delta$, either $\vert \cTF_k\vert < \TFD$, or
\begin{align*}
    \left\vert \widehat{F}^k_{t} - F^s(k\epsilon)\right\vert \leq \sqrt{\frac{2\log(1/\delta)}{\TFD}} + L\epsilon.
\end{align*}
Using Lemma \ref{lem:eventE} and \ref{lem:eventF}, along with a union bound over $k \in \cK$ yields the desired result for the choice $\TFD = 2\log(1/\delta)\epsilon^{-2}$.
\end{proof}
%%%%%%%%%%%%%%%%%%%%%%%%%%%%%%%%%%%%%
\subsection{Proof of Lemma \ref{lem:conf_bound_etc}}\label{app:proof conf bound etc}

\errgftevent*

\begin{proof}
The proof relies on the following result :
\begin{restatable}{lemma}{decompoerror}\label{lem:decompo_error}
On the event $\cE^{\textnormal{ETC}}$, for all $(k,k')\in \cA_t$, and $p = x_t^{\top}\widehat{\theta}^s_t + k\epsilon$, we have 
\begin{align*}
\left \vert \egft(x_t, p)  - \left(\widehat{I}^{k'}\widehat{F}^k + \widehat{J}^k\widehat{D}^{k'}\right)\right \vert& \leq 6PL\epsilon +3\epsilon + 2P\left(\vert\widehat{\Delta} F\vert+\vert\widehat{\Delta} D\vert\right) +  \vert\widehat{\Delta} I\vert + \vert\widehat{\Delta}J\vert
\end{align*}
where we define $\widehat{\Delta}I = I(k' \epsilon) -\widehat{I}^{k'}$, $\widehat{\Delta}J = J(k \epsilon) -\widehat{J}^{k}$, $\widehat{\Delta}F = F(k \epsilon) -\widehat{F}^{k}$, $\widehat{\Delta}D = D(k' \epsilon) -\widehat{D}^{k'}$.
% \begin{eqnarray*}
%     &\widehat{\Delta}I = I(k' \epsilon) -\widehat{I}^{k'}, &\widehat{\Delta}J = J(k \epsilon) -\widehat{J}^{k}\\
%     &\widehat{\Delta}F = F(k \epsilon) -\widehat{F}^{k}, &\widehat{\Delta}D = D(k' \epsilon) -\widehat{D}^{k'}.
% \end{eqnarray*}
\end{restatable}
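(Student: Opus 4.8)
The plan is to unfold $\egft(x_t,p)$ via the decomposition of \Cref{lem:EGFT} and then peel off the error in two bilinear steps: first swap the true price increments $(\delta^s,\delta^b)$ for the grid points $(k\epsilon,k'\epsilon)$ using Lipschitz continuity, and then swap the true functional values $I(k'\epsilon),J(k\epsilon),F^s(k\epsilon),D^b(k'\epsilon)$ for their estimates, which is exactly what generates the terms $\widehat{\Delta}I,\widehat{\Delta}J,\widehat{\Delta}F,\widehat{\Delta}D$. Throughout I write $F=F^s$ and $D=D^b$ as in the statement.

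First I would record the elementary ingredients. By \Cref{ass:bounded_densities}, $F^s$ and $D^b$ are $L$-Lipschitz (their derivatives are $\pm f^s$ and $\mp f^b$, bounded by $L$), while $I(\delta)=\int_\delta^C D^b$ and $J(\delta)=\int_{-C}^\delta F^s$ are $1$-Lipschitz and lie in $[0,2C]\subseteq[0,2P]$; moreover $F^s,D^b,\widehat{F}^k,\widehat{D}^{k'}\in[0,1]$ and $\widehat{I}^{k'},\widehat{J}^k\in[0,2P]$. Then I would bound the two discretization gaps for $p=x_t^\top\widehat{\theta}^s_t+k\epsilon$, using that $\cE^{\textnormal{ETC}}\subseteq\cE$ controls the parameter-estimation errors by $\epsilon$. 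Since $\delta^s-k\epsilon=x_t^\top(\widehat{\theta}^s_t-\theta^s)$, on $\cE^{\textnormal{ETC}}$ we get $|\delta^s-k\epsilon|\le\epsilon$. For the buyer, $\delta^b-k'\epsilon=\bigl(x_t^\top(\widehat{\theta}^s_t-\widehat{\theta}^b_t)+k\epsilon-k'\epsilon\bigr)+x_t^\top(\widehat{\theta}^b_t-\theta^b)$; the definition of $\cA_t$ forces $k'=\lfloor(x_t^\top(\widehat{\theta}^s_t-\widehat{\theta}^b_t)+k\epsilon)/\epsilon\rfloor$, so the first bracket is at most $\epsilon$ in absolute value, and the second is at most $\epsilon$ on $\cE^{\textnormal{ETC}}$, giving $|\delta^b-k'\epsilon|\le 2\epsilon$.

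Next I would introduce the hybrid $G\defeq I(k'\epsilon)F^s(k\epsilon)+J(k\epsilon)D^b(k'\epsilon)$ and split the target quantity by the triangle inequality as $|\egft(x_t,p)-G|+|G-(\widehat{I}^{k'}\widehat{F}^k+\widehat{J}^k\widehat{D}^{k'})|$. For the first piece, writing $I(\delta^b)F^s(\delta^s)-I(k'\epsilon)F^s(k\epsilon)=I(\delta^b)\bigl(F^s(\delta^s)-F^s(k\epsilon)\bigr)+F^s(k\epsilon)\bigl(I(\delta^b)-I(k'\epsilon)\bigr)$ and plugging in the Lipschitz constants, the sup-norm bounds, and $|\delta^s-k\epsilon|\le\epsilon$, $|\delta^b-k'\epsilon|\le2\epsilon$ gives $\le 2P\cdot L\epsilon+1\cdot 2\epsilon$; the $J$–$D^b$ product handled identically contributes $\le 2P\cdot 2L\epsilon+1\cdot\epsilon$, so $|\egft(x_t,p)-G|\le 6PL\epsilon+3\epsilon$. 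For the second piece I would use the identity $ab-\widehat{a}\widehat{b}=a(b-\widehat{b})+\widehat{b}(a-\widehat{a})$ with $a=I(k'\epsilon)$, $\widehat{a}=\widehat{I}^{k'}$, $b=F^s(k\epsilon)$, $\widehat{b}=\widehat{F}^k$, so that $|I(k'\epsilon)F^s(k\epsilon)-\widehat{I}^{k'}\widehat{F}^k|\le 2P|\widehat{\Delta}F|+|\widehat{\Delta}I|$, and likewise $|J(k\epsilon)D^b(k'\epsilon)-\widehat{J}^k\widehat{D}^{k'}|\le 2P|\widehat{\Delta}D|+|\widehat{\Delta}J|$; adding the two pieces yields exactly the claimed bound.

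The delicate points — both minor — are, first, getting the constant in the $\delta^b$ discretization right: it accumulates error from estimating $\theta^b$ \emph{and} from the floor in the definition of $\cA_t$, hence $2\epsilon$ rather than $\epsilon$, which is why the $D^b$-term eventually contributes $4PL\epsilon$ and not $2PL\epsilon$. Second, in the final bilinear step one must split each product so that the factor multiplying $\widehat{\Delta}I$ (resp. $\widehat{\Delta}J$) is the bounded quantity $\widehat{F}^k\le 1$ (resp. $\widehat{D}^{k'}\le 1$) rather than $I(k'\epsilon)$ (resp. $J(k\epsilon)$), which is only bounded by $2P$; keeping that factor exact is precisely what makes the coefficients of $|\widehat{\Delta}I|$ and $|\widehat{\Delta}J|$ equal to $1$ in the statement. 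I would also note in passing that \Cref{lem:EGFT} remains valid even if $p=x_t^\top\widehat{\theta}^s_t+k\epsilon$ falls slightly outside $[-P,P]$, since then one of $F^s(\delta^s),D^b(\delta^b)$ (and the corresponding integral) vanishes, matching $\egft(x_t,p)=0$.
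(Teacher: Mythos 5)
Your proof is correct and follows essentially the same route as the paper's: insert the hybrid term $G=I(k'\epsilon)F^s(k\epsilon)+J(k\epsilon)D^b(k'\epsilon)$, telescope each bilinear product to bound $|\egft(x_t,p)-G|$ by $6PL\epsilon+3\epsilon$ using $|\delta^s-k\epsilon|\le\epsilon$, $|\delta^b-k'\epsilon|\le2\epsilon$ together with the Lipschitz constants of $F^s,D^b,I,J$, and then telescope $G-(\widehat I^{k'}\widehat F^k+\widehat J^k\widehat D^{k'})$ to obtain the $\widehat\Delta$-terms with the claimed coefficients. One small remark: the split in the last step is not forced the way you suggest — pairing $\widehat\Delta I$ with $F^s(k\epsilon)\le1$ and $\widehat\Delta F$ with $\widehat I^{k'}\le 2P$ works just as well, since $\widehat I^{k'}\in[0,2P]$ by construction.
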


The proof of this intermediate result is in \Cref{app: proof decompo error}.

To conclude the proof of Lemma \ref{lem:conf_bound_etc}, it remains to bound the gaps $\widehat{\Delta} F$, $\widehat{\Delta} D$, $\widehat{\Delta} I$, and $\widehat{\Delta} J$ on the event $\cE^{\textnormal{ETC}}$. By definition of the event $\cE^{\textnormal{ETC}}$, on this event $\widehat{\Delta}I \leq 2 \epsilon$, $\widehat{\Delta}D  \leq (L+1)\epsilon$, $\widehat{\Delta}J \leq 2\epsilon$ and $\widehat{\Delta}F \leq (L+1)\epsilon$. Thus, on the event $\cE^{\textnormal{ETC}}$,
\begin{align*}
\left \vert \egft(x_t, p)  - \left(\widehat{I}^{k'}\widehat{F}^k + \widehat{J}^k\widehat{D}^{k'}\right)\right \vert& \leq 6PL\epsilon +3\epsilon + 4P(L+1)\epsilon +4 \epsilon.
\end{align*}
This concludes the proof.
\end{proof}

\subsection{Proof of Lemma \ref{lem:discretisation_error}}\label{app:proof discretization error}

\errdiscrevent*

\begin{proof}
By definition of $\cA_t$,
$$\sup_{p\in [-P, P]}\egft(x_t, p) \geq  \egft(x_t, x_t^{\top}\widehat{\theta}^s_t + k_t^*\epsilon).$$
Since the mapping $p \mapsto \egft(x, p)$ is continuous, we can define
$$p^*_t \in \argmax_{p}\egft(x_t, p), \quad \tilde{k}_t =\left\lfloor\frac{p^*_t - x_t^{\top}\widehat{\theta}^s_t}{\epsilon}\right\rfloor, \quad\text{and }\,\, \tilde{k'}_t = \left\lfloor \frac{x_t^{\top}\left(\widehat{\theta}^s_t - \widehat{\theta}^b_t\right)+\tilde{k}_t}{\epsilon} \right\rfloor.$$
We now show that $(\tilde{k}_t, \tilde{k}'_t)\in\cA_t$. By Lemma \ref{lem:EGFT}, we have that $p^*_t - x_t^{\top}\theta^s \geq -C$, and  $p^*_t - x_t^{\top}\theta^b \leq C$ (otherwise $\egft(x_t, p^*_t) = 0$). This implies that $p^*_t - x_t^{\top}\theta^s \in [-2P, 2P]$, and similarly that $p^*_t - x_t^{\top}\theta^b \in [-2P, 2P]$.

On the one hand, on the event $\cE^{\textnormal{ETC}}$, $\left\vert x_t^{\top}\left(\theta^s - \widehat{\theta}^s_t\right) \right\vert \leq \epsilon$, so $p^*_t - x_t^{\top}\widehat{\theta}^s_t \in [-2P-\epsilon,2P+\epsilon]$. Thus, $\tilde{k}_t \in \cK$.
On the other hand,  $$x_t^{\top}\left(\widehat{\theta}^s_t-\widehat{\theta}^b_t\right) + \tilde{k}_t\epsilon = \left(x_t^{\top} \widehat{\theta}^s_t + \tilde{k}_t\epsilon -p_t^*\right) + (p^*_t - x_t^{\top} \theta^b) + x_t^{\top}\left(\theta^b - \widehat{\theta}^b_t\right).$$
Then, on the event $\cE^{\textnormal{ETC}}$, 
\[
x_t^{\top}\left(\widehat{\theta}^s_t-\widehat{\theta}^b_t\right) + \tilde{k}_t\epsilon \in  [-2P-2\epsilon,2P+2\epsilon],\], so $k_t' \in [-2P-3\epsilon,2P+3\epsilon]$. Therefore, $\tilde{k}'_t \in \cK$. This implies that $(\tilde{k}_t,\tilde{k}_t')\in \cA_t$, so
$$\egft(x_t, x_t^{\top}\widehat{\theta}^s_t + k^*_t\epsilon) \geq \egft(x_t, x_t^{\top}\widehat{\theta}^s_t + \tilde{k}_t\epsilon).$$
Finally, Lemma \ref{lem:EGFT} and Assumption \ref{ass:bounded_densities} imply that the function $p \mapsto \egft(x_t, p)$ is $2LP$-Lipschitz continuous. This, in turn, implies that
$$\egft(x_t, p^*_t) - \egft(x_t, x_t^{\top}\widehat{\theta}^s_t + \tilde{k}_t\epsilon) \leq 2LP\epsilon.$$
This proves the statement.
\end{proof}

%%%%%%%%%%%%%%%%%%%%%%%%%%%%%%%%%%%%%%%%%%%%%%%%%%%%%%%%%%%%%%%%%%%%%%%%%%%%%%%%%%%%%%%%%%%%%%%%%%%%%%%%%%%%%%%%%%%%%%%%%%%
\subsection{Proof of Lemma \ref{lem:bound_epsilon}}\label{app:proof event sbip prob}

\eventsbipprob*

\begin{proof}
The proof relies on Lemma \ref{lem:eventE} and \ref{lem:eventF}. On top of these results, we have to control the error $\vert \widehat{F}^k_t - F^s(k\epsilon)\vert$ uniformly for $k \in \cK$. To do so, we rely on Lemma \ref{lem:Hoeffding_rigorous}. For $t\leq T$, we define $\iota_t = \mathbb{I}\left\{t \notin \cTint\cup \cTpar_{T+1} \text{ and }k_t = k\right\}$, $y_t = \mathbb{I}\left\{s_t \leq p_t \right\}$, and we note that for $\cF_t = \sigma\left((x_1, \ldots, x_{t+1}, (\mathbb{I}\{s_1 \leq p_1\},\mathbb{I}\{b_1 \leq p_1\}), \ldots, (\mathbb{I}\{s_t \leq p_t\},\mathbb{I}\{b_t \leq p_t\}) \right)$, $\iota_t$ is $\cF_{t-1}$-measurable, and $y_t$ is $\cF_t$ adapted. Moreover,
\begin{align*}
    \iota_t\mathbb{E}\left[y_t \,\vert\, \cF_{t-1}\right] &= \iota_t\mathbb{P}\left(x_t^{\top} \theta^s + \xi_t^s \leq x_t^{\top} \widehat{\theta}^s_t + k\epsilon \right)\\
    & = \iota_tF^s\left(x_t^{\top} \left(\widehat{\theta}^s_t - \theta^s\right) + k\epsilon \right).
\end{align*}
Using Lemma \ref{lem:Hoeffding_rigorous}, we find that with probability $1 - 2\delta t$, either $N^{s,k}_t = 0$ or
\begin{align*}
    \left\vert \widehat{F}^k_{t} -\frac{\sum_{s\leq t}\iota_tF^s\left(x_t^{\top} \left(\widehat{\theta}^s_t - \theta^s\right) + k\epsilon \right)}{N^{s,k}_t} \right\vert \leq \sqrt{\frac{2\log(1/\delta)}{N^{s,k}_t}}
\end{align*}
(where we recall that we adopt the convention $1/0 = \infty$). Moreover, on the event $\cE_1$, for all $t \notin \cTpar$, $\vert x_t^{\top} (\widehat{\theta}^s_t - \theta^s)\vert \leq \epsilon$. Since $F^s$ is $L$-Lipschitz, this implies
\begin{align*}
    \vert F^s\left(x_t^{\top} \left(\widehat{\theta}^s_t - \theta^s\right) + k\epsilon \right) - F^s\left(k\epsilon \right)\vert \leq L\epsilon.
\end{align*}
Thus, with probability $1 - 2\delta t$, 
\begin{align*}
    \left\vert \widehat{F}^k_{t} - F^s(k\epsilon)\right\vert \leq \sqrt{\frac{2\log(1/\delta)}{N^{s,k}_t}} + L\epsilon.
\end{align*}
Using a union bound over $k \in \cK$ and $t \notin \cTint\cup \cTpar_{T+1}$ yields the desired result.

\medskip

Next, we can bound $\vert \widehat{D}^k_{t} - D^b(k\epsilon)\vert$ using similar arguments. In particular, we define $\iota_t = \mathbb{I}\left\{t \notin \cTint\cup \cTpar_{T+1} \text{ and }k'_t = k\right\}$, $y_t = \mathbb{I}\left\{b_t \geq p_t \right\}$ and note that 
\begin{align*}
    \iota_t\mathbb{E}\left[y_t \,\vert\, \cF_{t-1}\right] &= \iota_t\mathbb{P}\left(x_t^{\top} \theta^b + \xi_t^b \geq x_t^{\top} \widehat{\theta}^s_t + k_t\epsilon \ \Big \vert \  \cF_{t-1} \right)\\
    & = \iota_tD^b\left(x_t^{\top} \left(\widehat{\theta}^b_t - \theta^b\right) + k\epsilon + \epsilon\left(\frac{x_t^{\top} \widehat{\theta}^s_t - x_t^{\top}\widehat{\theta}^b_t + k_t\epsilon}{\epsilon} - k\right) \right).
\end{align*}
By definition of $\cA_t$, we have that if $\iota_t = 1$, then $\left \vert \frac{x_t^{\top} \widehat{\theta}^s_t - x_t^{\top}\widehat{\theta}^b_t + k_t\epsilon}{\epsilon} - k\right\vert \leq 1$. Using the Lipschitz continuity of $D^b$, this implies that, conditioned on the event $\cE_1$,
\begin{align*}
    \iota_t\left\vert \mathbb{E}\left[y_t \,\vert \,\cF_{t-1}\right] - D^b(k\epsilon)\right\vert \leq 2L\epsilon.
\end{align*}
The rest of the proof follows similarly.
\end{proof}
%%%%%%%%%%%%%%%%%%%%%%%%%%%%%%%%%%%%%%%%%%%%%%%%%%%%%%%%%%%%%%%%

\subsection{Proof of Lemma \ref{lem:UCB_LCB}}\label{app:proof ucb lcb}

\lcbucb*
\begin{proof}
Let us prove the first part of Lemma \ref{lem:UCB_LCB}. 
By Lemma \ref{lem:decompo_error}, and by definition of the event $\cE^{\textnormal{SBIP}}$, we have for all $(k, k')\in \cA_t$, and all $t \notin \cTpar_{T+1}\cup \cTint$,
\begin{align*}
\left \vert \egft(x_t, p)  - \left(\widehat{I}^{k'}\widehat{F}^k + \widehat{J}^k\widehat{D}^{k'}\right)\right \vert\\
&\hspace{-1cm} \leq 6PL\epsilon +3\epsilon + 2P\left(\sqrt{\frac{2\log(1/\delta)}{N^{s,k}_t}} 
 + \sqrt{\frac{2\log(1/\delta)}{N^{b,k'}_t}} + 3L\epsilon \right) +  4\epsilon\\
 &\hspace{-1cm}\leq (12PL + 7)\epsilon + 2P\left(\sqrt{\frac{2\log(1/\delta)}{N^{s,k}_t}} 
 + \sqrt{\frac{2\log(1/\delta)}{N^{b,k'}_t}}\right).
\end{align*}

This concludes the first part of Lemma \ref{lem:UCB_LCB}. The second claim follows immediately by noticing that by definition, $(k_t^*, k_t'^*) \in \cA_t$, and that, for all $(k, k') \in \cA_t$,
\begin{eqnarray*}
    \LCB_t(k, k') \leq \egft(x_t, x_t^{\top}\widehat{\theta}_t^s + k\epsilon)\leq \egft(x_t, x_t^{\top}\widehat{\theta}_t^s + k_t^*\epsilon) \leq \UCB_t(k_t^*, k_t'^*).
\end{eqnarray*}
This concludes the proof.
\end{proof}

%%%%%%%%%%%%%%%%%%%%%%%%%%%%%%%%%%%%%%%%%%%%%%%%%%%%%%%%%%%%%%%
\subsection{Proof of Lemma \ref{lem:elim}}\label{app:proof elim}

\lemelim*

\begin{proof}
Assume that $t \in \cTSE_{s,k}$. Then, our choice of $k_t$ together with Lemma \ref{lem:UCB_LCB} ensures that, conditioned on the event $\cE^{\textnormal{SBIP}}$, 
$$\UCB_t(k_t,k'_t) \geq \LCB_t(k^*_t, k'^*_t).$$ This implies that
\begin{multline*}
    \LCB_t(k_t,k'_t) + \left(\UCB_t(k_t,k'_t)-\LCB_t(k_t,k'_t)\right) \geq \\\UCB_t(k^*_t, k'^*_t)+ \left(\LCB_t(k^*_t, k'^*_t)-\UCB_t(k^*_t, k'^*_t)\right).
\end{multline*}
Using again Lemma \ref{lem:UCB_LCB}, this implies that
\begin{align*}
    \egft(x_t, x_t^{\top}\widehat{\theta}^s + k_t\epsilon)& + \left(\UCB_t(k_t,k'_t)-\LCB_t(k_t,k'_t)\right) + \left(\UCB_t(k^*_t,k'^*_t)-\LCB_t(k^*_t,k'^*_t)\right)\\
    &\geq  \egft(x_t, x_t^{\top}\widehat{\theta}^s + k^*_t\epsilon).
\end{align*}
Thus,
\begin{align*}
    \egft(x_t, x_t^{\top}\widehat{\theta}^s + k^*_t\epsilon)& - \egft(x_t, x_t^{\top}\widehat{\theta}^s + k_t\epsilon) \\
    &\leq \left(\UCB_t(k_t,k'_t)-\LCB_t(k_t,k'_t)\right) + \left(\UCB_t(k^*_t,k'^*_t)-\LCB_t(k^*_t,k'^*_t)\right).
\end{align*}
Then, conditioned on the event $\cE^{\textnormal{SBIP}}$,
\begin{align*}
    \UCB_t(k_t,k'_t)-\LCB_t(k_t,k'_t)& \leq 2(12PL + 7)\epsilon + 4P\left(\sqrt{\frac{2\log(1/\delta)}{N^{s,k}_t}} + \sqrt{\frac{2\log(1/\delta)}{N^{b,k'}_t}}\right).
\end{align*}
Since $t \in \cTSE_{s,k}$, we know that $N^{b,k'_t}_t\geq N^{s,k_t}_t$. Then, we have that 
\begin{align*}
    \UCB_t(k_t,k'_t)-\LCB_t(k_t,k'_t)& \leq 2(12PL + 7)\epsilon + 8P\sqrt{\frac{2\log(1/\delta)}{N^{s,k}_t}}.
\end{align*}
Similarly, since $t \in \cTSE_{s,k}$, we have that $N^{b,k'^*_t}_t\geq N^{s,k_t}_t$, and $N^{s,k^*_t}_t\geq N^{s,k_t}_t$, so we also have
\begin{align*}
    \UCB_t(k^*_t,k'^*_t)-\LCB_t(k^*_t,k'^*_t) &\leq 2(12PL + 7)\epsilon + 8P\sqrt{\frac{2\log(1/\delta)}{N^{s,k}_t}}.
\end{align*}
Thus, 
\begin{align*}
    \egft(x_t, x_t^{\top}\widehat{\theta}^s + k^*_t\epsilon) - \egft(x_t, x_t^{\top}\widehat{\theta}^s + k_t\epsilon) \leq 4(12PL + 7)\epsilon + 16P\sqrt{\frac{2\log(1/\delta)}{N^{s,k}_t}}.
\end{align*}
By Lemma \ref{lem:discretisation_error}, this implies  
\begin{align*}
    \sr(x_t, x_t^{\top}\widehat{\theta}^s + k_t\epsilon)\leq (50PL + 28)\epsilon + 16P\sqrt{\frac{2\log(1/\delta)}{N^{s,k}_t}}.
\end{align*}
The proof of the second claim follows from similar arguments.
\end{proof}

%%%%%%%%%%%%%%%%%%%%%%%%%%%%%%%%%%%%%%%%%%%%%%%%%%%%%%%%%%%%%%%
\subsection{Proof of Lemma \ref{lem:eventE}}\label{app:proof lem event E}

\eventE*

\begin{proof}
    
Let us prove the bound $\vert x_t^{\top}(\widehat{\theta}^s_t- \theta^s) \vert\leq \epsilon$ with high probability; the bound on $\vert x_t^{\top}(\widehat{\theta}^b_t- \theta^b) \vert$ will follow from similar arguments. 

We introduce the variables
$$\tilde{x}_t  \defeq x_t \mathbb{I}\left\{t \in \cTpar_{t+1}\right\} \quad 
\text{and} \quad \tilde{y}_t \defeq 2P\,\mathbb{I}\left\{t \in \cTpar_{t+1}\right\}\left(\mathbb{I}\left\{p_t \leq s_t\right\}-\frac{1}{2}\right)$$
and the $\sigma$-algebra $\cF_t = \sigma\left((x_1,\ldots, x_{t+1}, (\mathbb{I}\{s_1 \leq p_1\},\mathbb{I}\{b_1 \leq p_1\}),\ldots, (\mathbb{I}\{s_t \leq p_t\},\mathbb{I}\{b_t \leq p_t\}) \right).$ With these notation, we have that $\tilde{x_t}$ and $\{t \in \cTpar_{t+1}\}$ are $\cF_{t}$-measurable. Moreover,
\begin{align*}
    \mathbb{E}\left[\tilde{y}_t\vert \cF_{t-1}\right] &= \mathbb{I}\left\{t \in \cTpar_{t+1}\right\}\cdot\left( 2P \int_{-P}^P \mathbb{P}\left[u \leq s_t \vert \cF_{t-1}\right]\frac{\diff u}{2P} - P\right)\\
    &= \mathbb{I}\left\{t \in \cTpar_{t+1}\right\}\cdot \left( \int_{-P}^P\int_{-C}^C \mathbb{I}\left\{u \leq  x_t^{\top}\theta^s + \xi\right\} f^s(\xi)\diff \xi \diff u - P\right)\\
    &=  \mathbb{I}\left\{t \in \cTpar_{t+1}\right\}\cdot\left(  \int_{-C}^C \int_{-P}^{\xi + x_t^{\top}\theta^s}\diff u f^s(\xi)\diff \xi  - P\right)\\
    &=  \mathbb{I}\left\{t \in \cTpar_{t+1}\right\} \cdot \left(x_t^{\top}\theta^s + \int_{-C}^C \xi f^s(\xi) \diff \xi\right)\\
    &=   \mathbb{I}\left\{t \in \cTpar_{t+1}\right\} x_t^{\top}\theta^s
\end{align*}
where in the last equality we used that $\int_{-C}^C \xi f^s(\xi) \diff \xi = \mathbb{E}\left[\xi^s_t\right] = 0$. Thus, conditionally on $\cF_{t-1}$, $\tilde{y}_t -\tilde{x}_t^{\top}\theta^s$ is centered and it belongs to $[-P,P]$, which implies that it is $P$-subgaussian. Now, for all $t \in \{1,\dots, T\}$, we have
\begin{align*}
\widehat{\theta}^s_{t} &= 2P\left(\sum_{l\in \cTpar_{T+1}}x_l x_l^{\top}+  \bI_d\right)^{-1}\sum_{s\in \cTpar_{T+1}}\left(\mathbb{I}\left\{p_t \geq s_t\right\}-\frac{1}{2}\right)x_l\\
&= \left(\sum_{l < t}\tilde{x}_l \tilde{x}_l^{\top}+  \bI_d\right)^{-1}\sum_{l<t}\tilde{y}_l\tilde{x}_l.
\end{align*}

Using the fact that, for all $t \leq 1$, $\Vert \tilde{x}_t \Vert \leq B$ and $\Vert \theta^s \Vert \leq A$, and applying Theorem 2 by \citet{NIPS2011_e1d5be1c}, we find that for all $t \geq 0$, with probability $1-\delta$, 
$$\Vert\widehat{\theta}^s_t - \theta^s\Vert_{\sum_{s < t}\tilde{x}_l \tilde{x}_l^{\top}+  \bI_d}  \leq P\sqrt{d\log\left(\frac{1 + B^2T}{\delta}\right)} + A.$$
Note that our definitions of $\tilde{x}_t$ and $\tilde{y}_t$ ensure that 
\[
\Vert\widehat{\theta}^s_t - \theta^s\Vert_{\sum_{s\in \cTpar_{T+1}}x_l x_l^{\top} +  \bI_d} = \Vert\widehat{\theta}^s_t - \theta^s\Vert_{\sum_{s < t}\tilde{x}_l \tilde{x}_l^{\top}+  \bI_d}.\]
Moreover, for all $t$, 
$$\vert x_t^{\top}(\widehat{\theta}^s_t- \theta^s) \vert\leq \Vert x_t^{\top}\Vert_{\left(\sum_{s\in \cTpar_{T+1}}x_l x_l^{\top} +  \bI_d\right)^{-1}}\,\,\Vert\widehat{\theta}^s_t - \theta^s\Vert_{\mleft(\sum_{s\in \cTpar_{T+1}}x_l x_l^{\top} +  \bI_d\mright)}.$$
In particular, if $t \notin \cTpar_{T+1}$, $\Vert x_t^{\top}\Vert_{\left(\sum_{s\in \cTpar_{T+1}}x_l x_l^{\top} +  \bI_d\right)^{-1}}\leq \mu$, so with probability $1-\delta$,
$$\vert x_t^{\top}(\widehat{\theta}^s_t- \theta^s) \vert\leq \mu\left(P\sqrt{d\log\left(\frac{1 + B^2T}{\delta}\right)} + A\right).$$
For the choice  $\mu = \epsilon\left(P\sqrt{d\log\left(\frac{1 + B^2T}{\delta}\right)} + A\right)^{-1}$, this implies that
$$\vert x_t^{\top}(\widehat{\theta}^s_t- \theta^s) \vert\leq \epsilon,$$
which concludes the proof.
\end{proof}
%%%%%%%%%%%%%%%%%%%%%%%%%%%%%%%%%%%%%%%%%%%%%%%%%%%%%%%%%%%%%
\subsection{Proof of Lemma \ref{lem:eventF}}\label{app:proof lemma eventF}

\eventF*

\begin{proof} We control the error $\left\vert\widehat{I}^k - I\left(k\epsilon\right) \right\vert$ uniformly for $k \leq K$; the result for $\left\vert\widehat{J}^k - J(k\epsilon) \right\vert$ can be proved analogously.

For $k\leq K$, and $t\leq T$, let us define $\iota_t = \mathbb{I}\left\{t \in \cTint\right\}$, $y_t = 2P\mathbb{I}\left\{k\epsilon + x_t^{\top}\widehat{\theta}_t^b \leq p_l \leq b_t  \right\}$ and note that for $\cF_t = \sigma\left((x_1,\ldots, x_{t+1}, (\mathbb{I}\{s_1 \leq p_1\},\mathbb{I}\{b_1 \leq p_1\}),\ldots, (\mathbb{I}\{s_t \leq p_t\},\mathbb{I}\{b_t \leq p_t\}) \right)$, $\iota_t$ is $\cF_{t-1}$-measurable, and $y_t$ is $\cF_t$-adapted. Moreover,
\begin{align*}
\mathbb{E}\left[y_t\vert \cF_{t-1}\right] &= 2P\,\mathbb{P}\left[k\epsilon + x_t^{\top}\widehat{\theta}_t^b \leq p_t \leq b_t\, \vert\, \cF_{t-1}\right]\\
&= 2P\int_{k\epsilon + x_t^{\top}\widehat{\theta}^b_t}^{P}\int_{-C}^C\mathbb{I}\left\{x_t^{\top}\theta^b + \xi \geq u\right\}f^b(\xi)\diff \xi\frac{\diff u}{2P}\\
&= \int_{k\epsilon + x_t^{\top}\widehat{\theta}^b_t}^P D^b\left( u- x_t^{\top}\theta^{b}\right)\diff u.
\end{align*}
Using the change in variables $u' = u- x_t^{\top}\theta^{b}$, this implies that 
\begin{align*}
\mathbb{E}\left[y_t\vert \cF_{t-1} \right]&= \int_{k\epsilon + x_t^{\top}\widehat{\theta}^b_t - x_t^{\top}\theta^{b}}^{P- x_t^{\top}\theta^{b}} D^b\left(u'\right)\diff u'.
\end{align*}
Moreover, under Assumptions \ref{ass:bounded_context} and \ref{ass:bounded_densities}, $P-x_t^{\top}\theta^{b} \geq C$ and, for $u' \in [C,P-x_t^{\top}\theta^{b}]$, $D^b(u') = 0$. Thus,
\begin{align*}
\mathbb{E}\left[y_t\vert \cF_{t-1} \right]&= \int_{k\epsilon + x_t^{\top}\widehat{\theta}^b_t - x_t^{\top}\theta^{b}}^{C} D^b\left(u'\right)\diff u'\\
&=I(k\epsilon + x_t^{\top}\widehat{\theta}^b_t - x_t^{\top}\theta^{b}).
\end{align*}
Finally, note that $y_t - \mathbb{E}\left[y_t\vert \cF_{t-1} \right]$ is in $[-2P,2P]$. Then, using Lemma \ref{lem:Hoeffding_rigorous}, we find that
\begin{align*}
    \mathbb{P}\left(\vert \cTint \vert = \Tint \text{ and }\left\vert \widehat{I}^k - \frac{\sum_{t\in \cTint}I\left(k\epsilon + x_t^{\top}\left(\widehat{\theta}^b_t - \theta^b\right)\right)}{\Tint}\right\vert\geq 4P\sqrt{\frac{\log(1/\delta)}{2\Tint}}\right) \leq 2\delta.
\end{align*}
To conclude our proof, note that on the event $\cE_1$, for all $t \in \cTint$, $\vert x_t^{\top}(\widehat{\theta}^b_t - \theta^b)\vert \leq \epsilon$, so with probability $1-2\delta$, either $\vert \cTint \vert < \Tint$ or
\begin{align*}
\left\vert I\left(k\epsilon + x_t^{\top}\left(\widehat{\theta}^b_t - \theta^b\right)\right) - I(k\epsilon) \right \vert &=  \left\vert\int_{k\epsilon}^{k\epsilon + x_t^{\top}\left(\widehat{\theta}^b_t - \theta^b\right)} \left(1-F^b(\lambda)\right)d\lambda\right \vert\\
&\leq \left\vert\int_{k\epsilon}^{k\epsilon + x_t^{\top}\left(\widehat{\theta}^b_t - \theta^b\right)} d\lambda\right \vert\\
&\leq \mleft\vert x_t\left(\widehat{\theta}^b_t - \theta^b\right)\mright\vert\\
&\leq \epsilon.
\end{align*}
Using the same reasoning to control the error in estimating $J$, taking a union bound for $k \in \cK$, and using Lemma \ref{lem:eventE}, we find that on an event $\cE_2 \subset \cE_1$ of probability larger than $1-2\delta - 4\delta(2K+1)$, either $\vert \cTint \vert < \Tint$ or
\begin{align*}
    \left\vert \widehat{I}^k - I(k\epsilon) \right\vert\leq 4P\sqrt{\frac{\log(1/\delta)}{2\Tint}} + \epsilon
\end{align*}
and 
\begin{align*}
    \left\vert \widehat{J}^k - J(k\epsilon) \right\vert\leq 4P\sqrt{\frac{\log(1/\delta)}{2\Tint}} + \epsilon
\end{align*}
simultaneously for all $k\in \cK$. For the choice $\Tint = 8P^2\log(1/\delta)/\epsilon^2$, we obtain the desired result.
\end{proof}

%%%%%%%%%%%%%%%%%%%%%%%%%%%%%%%%%%%%%%%%%%%%%%%%%%%%%%%%%%%%%
\subsection{Proof of Lemma \ref{lem:Hoeffding_rigorous}}\label{app:proof hoeffding rigorous}

\Hoeffding*

\begin{proof}
Let us define $Z_t \defeq \sum_{l\leq t}\iota_l (y_l-\mathbb{E}\left[y_l \vert \cF_{l-1}\right])$, and for all $x\in \mathbb{R}$ let $M_t \defeq \exp\left\{x\,Z_t - \frac{1}{8}x^2(M-m)^2N_t\right\}$. We begin by showing that $M_t$ is a super-martingale. Indeed, we have that
\begin{align*}
    \mathbb{E}\left[\exp\{x\,\iota_t (y_t-\mathbb{E}\left[y_t \vert \cF_{t-1}\right])\} \,\,\Big \vert \,\,\cF_{t-1}\right] & = \mathbb{E}\left[\iota_t \exp\{x(y_t-\mathbb{E}\left[y_t \vert \cF_{t-1}\right])\}+ (1-\iota_t) \,\,\Big \vert \,\,\cF_{t-1}\right] \\
    &\leq \iota_t \exp\mleft\{\frac{x^2(M-m)^2}{8}\mright\}+ (1-\iota_t)\\
    &\leq \exp\mleft\{\frac{x^2(M-m)^2\iota_t}{8}\mright\},
\end{align*}
where we use the fact that $(y_t-\mathbb{E}\left[y_t\, \vert\, \cF_{t-1}\right])$ is bounded in $[m,M]$ together with the conditional version of Hoeffding's Lemma. Noticing that 
\begin{align*}
    M_t = M_{t-1}\exp\mleft\{x\,\iota_t (y_t-\mathbb{E}\left[y_t \vert \cF_{t-1}\right]) - \frac{x^2(M-m)^2\iota_t}{8}\mright\},
\end{align*}
this proves that $M_t$ is a super-martingale, and so $\mathbb{E}\left[M_t\right] \leq \mathbb{E}\left[M_0\right] = 1$.

\bigskip

\noindent Now, for all $\epsilon>0$ and all $l \in \mathbb{N}$, and all $x>0$, by a Markov-Chernoff argument,
\begin{align*}
    \mathbb{P}\left(Z_t \geq \epsilon \textnormal{ and }N_t = l\right) &= \mathbb{P}\left(\mathbb{I}\left\{N_t = l\right\}e^{xZ_t} \geq e^{\epsilon x}\right)\\
    &\leq e^{-\epsilon x}\mathbb{E}\left( \mathbb{I}\left\{N_t = l\right\}\cdot e^{xZ_t}\right)\\
    &=e^{-\epsilon x + \frac{x^2(M-m)^2l}{8}}\mathbb{E}\left(\mathbb{I}\left\{N_t = l\right\}\cdot e^{xZ_t - \frac{x^2(M-m)^2l}{8}}  \right).
\end{align*}
Using the previous result, we have that
\begin{align*}
    \mathbb{E}\left(  \mathbb{I}\left\{N_t = l\right\}\cdot e^{xZ_t - \frac{x^2(M-m)^2l}{8}}\right) &= \mathbb{E}\left( \mathbb{I}\left\{N_t = l\right\} \cdot e^{xZ_t - \frac{x^2(M-m)^2N_t}{8}} \right) \\
    &\leq \mathbb{E}\left(e^{xZ_t - \frac{x^2(M-m)^2N_t}{8}}  \right) \\
    &=\mathbb{E}(M_t)\\
    &\leq \mathbb{E}(M_0) = 1.
\end{align*}
This yields 
\begin{align*}
    \mathbb{P}\left(Z_t \geq \epsilon \textnormal{ and }N_t = l\right) &\leq e^{-\epsilon x + \frac{x^2(M-m)^2l}{8}}.
\end{align*}
In particular, for $\epsilon = (M-m)\sqrt{\frac{l\cdot\log(1/\delta)}{2}}$ and $x = \frac{4\epsilon}{l(M-m)^2}$,
\begin{align*}
    \mathbb{P}\left(Z_t \geq (M-m)\sqrt{\frac{l\cdot\log(1/\delta)}{2}} \textnormal{ and }N_t = l\right) &\leq \delta.
\end{align*}
This proves the first part of the Lemma. Summing over the values of $l$ from $1$ to $t$, we find that
\begin{align*}
    \mathbb{P}\left(Z_t \geq (M-m)\sqrt{\frac{N_t\log(1/\delta)}{2}} \text{ and }N_t\geq 1\right) &\leq t\delta.
\end{align*}
Similar arguments can be used to prove that 
\begin{align*}
    \mathbb{P}\left(-Z_t \geq (M-m)\sqrt{\frac{N_t\log(1/\delta)}{2}} \text{ and }N_t\geq 1\right) &\leq t\delta.
\end{align*}
In order to conclude the proof we observe that $Z_t = \hat{\mu}_tN_t$, we normalize by $N_t$, and observe that adding the case $N_t=0$ can only increase the probability.
\end{proof}
%%%%%%%%%%%%%%%%%%%%%%%%%%%%%%%%%%%%%%
\subsection{Proof of Lemma \ref{lem:decompo_error}}\label{app: proof decompo error}
\decompoerror*

\begin{proof}
By Lemma \ref{lem:EGFT}, for any price $p = x_t^{\top}\widehat{\theta}^s_t + k\epsilon$ and $k'$ such that $(k, k')\in \cA_t$, and $\delta^s = p - x_t^{\top}\theta^s$, $\delta^b = p - x_t^{\top}\theta^b$, we have that
\begin{eqnarray*}
    \egft(x_t, p) = I(\delta^b) F^s(\delta^s) + J(\delta^s)D^b(\delta^b).
\end{eqnarray*}
Then,
\begin{align*}
\egft(x_t, p) =&\left(I(k' \epsilon) + I(\delta^b)- I(k' \epsilon)\right)\left(F^s(k \epsilon) + F^s(\delta^s) - F^s(k \epsilon)\right) + \\
    &\left(J(k\epsilon)+ J(\delta^s) - J(k\epsilon)\right)\left(D^b(k' \epsilon) + D^b(\delta^b) - D^b(k' \epsilon)\right).
\end{align*}
Moreover, by letting $\Delta I = I(\delta^b)- I(k' \epsilon)$, $\Delta F = F^s(\delta^s) - F^s(k \epsilon)$, $\Delta J = J(\delta^s) - J(k\epsilon)$ and $\Delta D = D^b(\delta^b) - D^b(k' \epsilon)$, this yields
\begin{align*}
\egft(x_t, p) =& I(k' \epsilon) F^s(k \epsilon ) + J(k\epsilon)D^b(k' \epsilon)+ I(\delta^b)\Delta F +\\&  F^s(k \epsilon )\Delta I + J(\delta^s)\Delta D + D^b(k' \epsilon )\Delta J.
\end{align*}

Since $I$ and $J$ are bounded by $2P$, and $F$ and $D$ are bounded by $1$, this implies that 
\begin{align*}
\left \vert \egft(x_t, p)  - \left(I(k' \epsilon) F^s(k \epsilon ) + J(k\epsilon)D^b(k' \epsilon)\right)\right \vert \leq 2P\vert \Delta F\vert +  \vert\Delta I\vert + 2P\vert\Delta D\vert + \vert\Delta J\vert. 
\end{align*}
Now, let us introduce $e^s = \delta^s - k \epsilon$, and $e^b = \delta^b - k' \epsilon$. Then, we have
\begin{align*}
    e^s &=  \left(x_t^{\top}\widehat{\theta}^s_t + k\epsilon\right) -\left(k\epsilon + x_t^{\top}\theta^s \right)\\
    &= x_t^{\top}(\widehat{\theta}^s_t-\theta^s).
\end{align*}
On event $\cE^{\textnormal{ETC}}$ we have $\vert e^s\vert \leq \epsilon$. Since $F$ is $L$-Lipschitz continuous, and $\Delta F = F^s(k \epsilon + e^s) - F^s(k \epsilon)$, this implies $\vert \Delta F \vert \leq L\epsilon$. Similarly, $J$ is $1$-Lipschitz continuous, and $\Delta J = J(k \epsilon + e^s) - J(k\epsilon)$, so $\vert \Delta J \vert \leq \epsilon$. 
Similarly, 
\begin{align*}
    e^b &= \left(x_t^{\top}\widehat{\theta}^s_t + k\epsilon\right) -  \left(x_t^{\top}\theta^b + k'\epsilon\right)\\
    &=  \left(x_t^{\top}(\widehat{\theta}^s_t-\widehat{\theta}^b_t) + k\epsilon\right) - k'\epsilon + x_t^{\top}(\widehat{\theta}^b_t -\theta^b ).
\end{align*}
By definition of $\cA_t$, under the event $\cE^{\textnormal{ETC}}$ we have $\vert e^b \vert \leq 2\epsilon$. Since $D$ is $L$-Lipschitz continuous and $\Delta D = D^b(k' \epsilon + e^b) - D^b(k' \epsilon)$,  $\vert \Delta D \vert \leq 2L\epsilon$. Similarly, $I$ is $1$-Lipschitz continuous, $\Delta I = I(k' \epsilon + e^b)- I(k' \epsilon)$, so this implies that $\vert\Delta I\vert \leq 2\epsilon$.
Putting everything together, we find that on the event $\cE^{\textnormal{ETC}}$,
\begin{align*}
\left \vert \egft(x_t, p)  - \left(I(k' \epsilon) F^s(k \epsilon ) + J(k\epsilon)D^b(k' \epsilon)\right)\right \vert \leq& 6PL\epsilon +3\epsilon.
\end{align*}

Similarly, denoting $\widehat{\Delta}I = I(k' \epsilon) -\widehat{I}^{k'}$, $\widehat{\Delta}D = D(k' \epsilon) -\widehat{D}^{k'}$, $\widehat{\Delta}J = J(k \epsilon) -\widehat{J}^{k}$ and $\widehat{\Delta}F = F(k \epsilon) -\widehat{F}^{k}$, we have
\begin{align*}
\left \vert I(k' \epsilon) F^s(k \epsilon ) + J(k\epsilon)D^b(k' \epsilon) - \left(\widehat{I}^{k'}\widehat{F}^k + \widehat{J}^k\widehat{D}^{k'}\right)\right \vert &\leq 2P\vert \widehat{\Delta} F\vert +  \vert\widehat{\Delta} I\vert + 2P\vert\widehat{\Delta} D\vert + \vert\widehat{\Delta}J\vert.
\end{align*}
This concludes the proof.
\end{proof}
\section{Motivating example}\label{sec:example}

Lemma \ref{lem:EGFT} emphasizes that the expected gain from trade at a given price $p$ depends on the quantities $\delta^s = p - x_t^{\top}\theta^s$ and $\delta^b = p - x_t^{\top}\theta^b$. Remember that the difference in average valuations $\Delta$ is given by $\Delta = x_t^{\top}\theta^b - x_t^{\top}\theta^s$, and with this notation, $\delta^b = \delta^s - \Delta$. Therefore, the expected gain from trade can be rewritten as a function of the pair $(\delta^s, \Delta)$. As an immediate consequence, we see that the optimal increment $\delta^s$ only depends on the difference in average valuations $\Delta$: if $p = x^{\top}\theta^s + \delta^s$ maximizes $\egft(x,p)$, and if $x'$ is such that $x'^{\top}\theta^b - x'^{\top}\theta^s = x^{\top}\theta^b - x^{\top}\theta^s$, then $p' = x'^{\top}\theta^s + \delta^s$ maximizes $\egft(x',p')$.

On the other hand, the following example shows that there no explicit dependence of the optimal price increment $\delta^s$ on the difference in average valuations $\Delta$. In words, when $\Delta$ is small, we might prefer to choose an increment $\delta^s$ that leads to trade happening with lower probabilities but corresponds to higher rewards. By contrast, as $\Delta$ increases, similar values of the increment $\delta^s$ will correspond to higher gains if the trade happens. Then, we might choose to post prices corresponding to a smaller increment $\delta^s$ to increase the probability that the trade happens.

This reasoning demonstrates that knowing that an increment $\delta^s$ is optimal for a difference in average valuations $\Delta$ does not allow us to determine the optimal increment $\delta'$ corresponding to a different value $\Delta'$ of the difference in average valuations. This implies that to precisely identify the optimal price increment $\delta^s$ for all differences in average valuations $\Delta$, it may be necessary to have accurate estimates of the functions $F^s$ and $I$ for a broad range of values of $\delta^s$. Similar arguments can be employed to argue that precise estimates of the value of $D^b$ and $J$ for a wide range of values of $\delta^b = \Delta - \delta^s$ might also be necessary.

To illustrate this phenomenon, we construct an example where different levels of $\Delta$ lead to entirely different choices of the optimal price increment $\delta^s$. Specifically, we consider a scenario where, for certain values of $(s_1, s_2, s_3)\in \mathbb{R}^3$, $(b_1, b_2, b_3)\in \mathbb{R}^3$, $(\alpha_1, \alpha_2, \alpha_3)$ in the simplex, and $\theta > 0$ (to be defined later), the density $f^s$ (resp. $f^b$) of the seller's noise $\xi^s$ (resp. buyer's noise $\xi^b$) is given by:
\begin{align*}
f^s(\delta) &= \frac{\mathbb{I}\{x \in [s_1, s_1 + \theta] \cup [s_2, s_2 + \theta]\cup [s_3, s_3 + \theta]\}}{3\theta}\\
f^b(\delta) &= \frac{\alpha_1\mathbb{I}\{x \in [b_1, b_1 + \theta]\} +\alpha_2\mathbb{I}\{x \in [b_2, b_2 + \theta]\} + \alpha_3\mathbb{I}\{x \in [b_3, b_3 + \theta]\}}{\theta}.
\end{align*}
The setting in illustrated in figure \ref{fig:example}.

We assume that $(s_1, s_2, s_3)$ and $(b_1, b_2, b_3)$ verify $s_1 + \theta < b_1$, $b_1+ \theta < s_2$, $s_2 + \theta < b_2$, $b_2 + \theta < s_3$, and $s_3 + \theta < b3$. Then, it is straightforward to see that while $\Delta > 0$, and $\Delta$ is small enough so that $b_1+ \theta + \Delta < s_2$ and $b_2 + \theta + \Delta < s_3$, the optimal increment belongs to the set $\{\delta_1, \delta_2, \delta_3\}$, where $\delta_1$, $\delta_2$, and $\delta_3$ belong respectively to $[s_1 + \theta, b_1]$, $[s_2 + \theta, b_2]$, and $[s_3 + \theta, b_3]$. 

We also assume that $\alpha_1$ is much larger than $\alpha_2$, which in turn is much larger than $\alpha_3$. Then, the increment $\delta_1$ is such that the trade happens with the highest probability: indeed, if $\delta^s = \delta^b > b_1$, the buyer rejects the trade with a high probability. For the same reasons, the probability of a trade happening at an increment $\delta_2$ is much lower, and the probability of a trade happening at increment $\delta_3$ is the lowest.

Finally, we assume that $b_3 - s_3$ is much larger than $b_2 - s_2$, which in turn is much larger than $b_1 - s_1$. Then, the gain from any trade happening at increment $\delta_1$ is small compared to the gain from trades happening at increment $\delta_2$, which in turn is small compared to the gain from a trade happening at increment $\delta_3$.

\begin{figure}[!ht]
    \includegraphics[width=\textwidth]{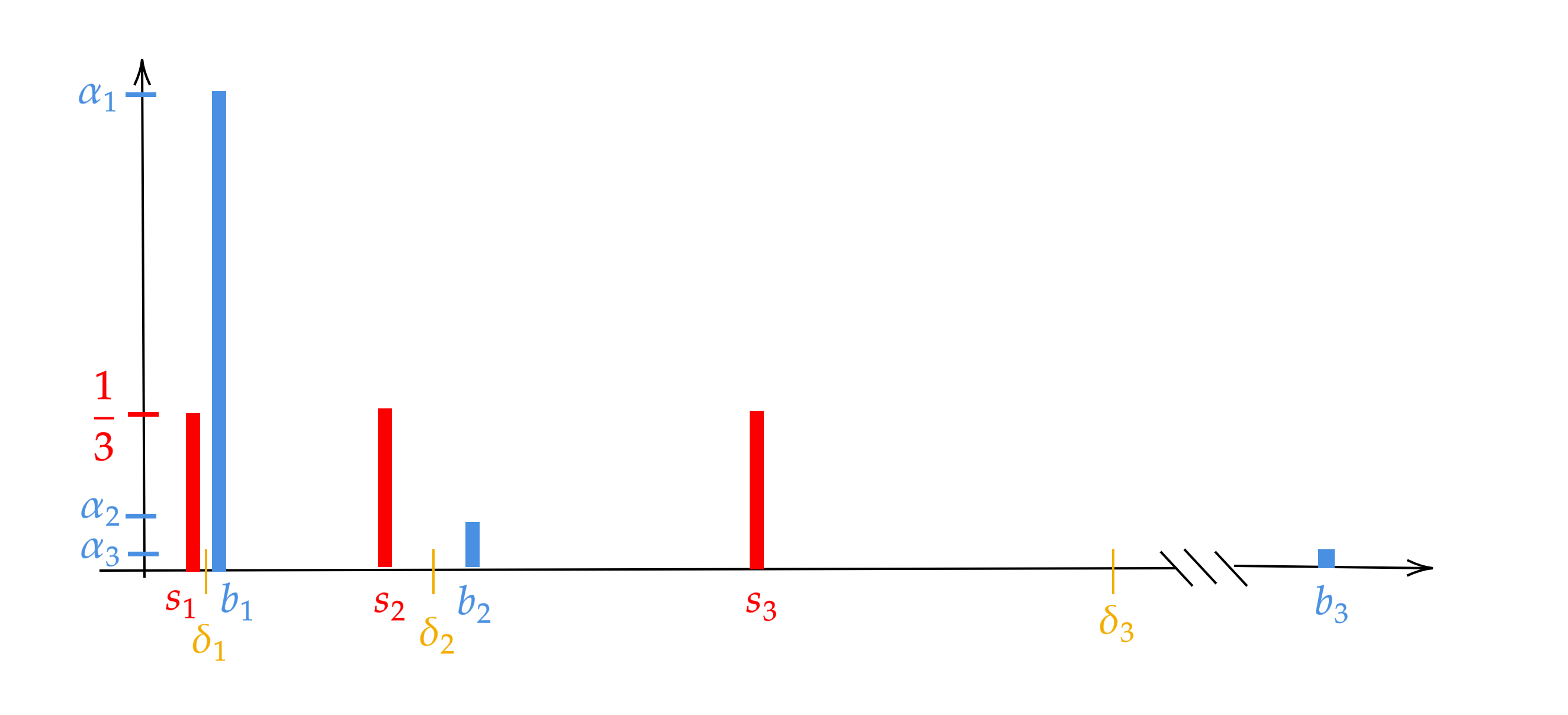}
    \caption{Illustration for the c.d.f. $f^s$ and $f^b$ corresponding to the example described in Section \ref{sec:example}. The c.d.f. $f^s$ is represented in red, the c.d.f. $f^b$ is represented in blue.\label{fig:example}}
\end{figure}

For some well-chosen values of the parameters specified below, when the average gain from trade of the seller and the buyer are both zero ($x_t^{\top}\theta^s = x_t^{\top}\theta^b = 0$), the most profitable increment is $\delta_3$: the probability of the trade happening is less likely, but when it occurs, it is more profitable.

Now, to obtain the probability of a price $p = x^{\top}\theta^s + \delta^s$ being accepted by the buyer, it is sufficient to translate the cumulative distribution function (c.d.f.) of $\xi^b$ (represented in Figure \ref{fig:example}) by the quantity $\Delta = x^{\top}(\theta^b - \theta^s)$. In particular, when $\Delta > 0$ is small enough so that $b_1 + \theta + \Delta < s_2$ and $b_2 + \theta + \Delta < s_3$, we observe that, on the one hand, the probability that the trade happens for the increments $\delta^s \in {\delta_1, \delta_2, \delta_3}$ remains unchanged; however, the corresponding gains if the trade occurs all increase with $\Delta$. For some well-chosen values of the parameters, if $\Delta$ is positive but sufficiently small, the expected gain from trade is maximized by $\delta^s = \delta^2$. When $\Delta$ is large, the expected gain from trade is maximized by choosing $\delta^s = \delta^1$: in other words, this choice of increment ensures that the sale happens with the highest probability, and each sale leads to a reward of at least $\Delta$.
In Figure \ref{fig:EFGT}, we plot the expected gain from trade for different values of $\Delta$ as a function of $\delta^s$. The values chosen for the parameters are as follows: $(s_1, s_2, s_3) = (0, 2, 6)$, $(b_1, b_2, b_3) = (0.01, 3, 20)$, $(\alpha_1, \alpha_2, \alpha_3) = (0.85, 0.11, 0.04)$, and $\theta = 0.001$. With these values, for $\Delta = 0$, the optimal increment is $\delta^3 = 10$; for $\Delta = 1$, it is $\delta^2 = 2.5$; and for $\Delta = 1.5$, it is $\delta^3 = 0.01$.
\begin{figure}[!ht]
    \includegraphics[width=\textwidth]{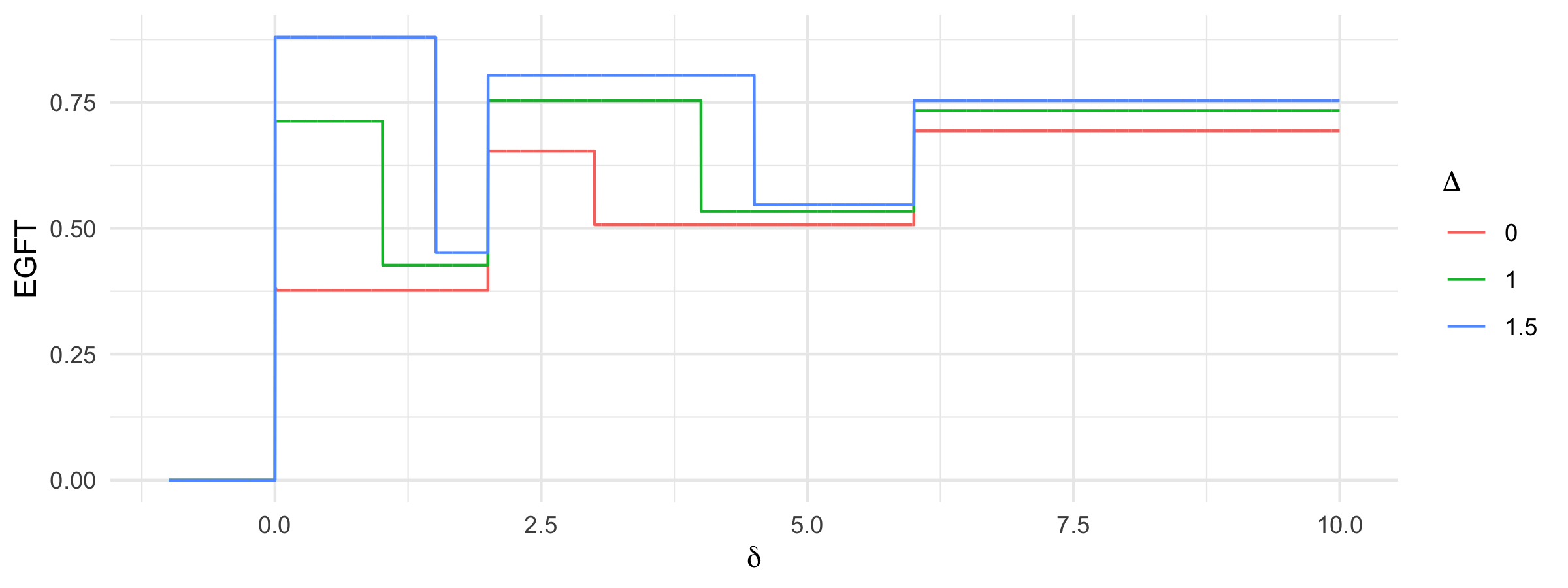}
    \caption{Expected gain from trade for different values of increment $\delta^s$ and difference in average valuations $\Delta$.\label{fig:EFGT}}
\end{figure}

\section{Proofs of \Cref{sec:one bit}}

\profitLemma*

\begin{proof}
To simplify notation, and since we focus on a single round $t\in\cTB$, we omit the explicit dependence on $t$ from $p_t$, $q_t$, $i_t$, $s_t$, and $b_t$.
We consider two cases. First, if $b-s\le \nicefrac2T$, the inequality is immediately satisfied since the lhs is non-negative while the rhs is $\le 0$.
Second, if $b-s>\nicefrac2T$, let $\cE$ be the event in which $p\in[s,(s+b)/2]$. Since $p\sim \cU([0,P])$ we have $\pr(\cE)=(b-s)/2P$. Moreover, we have 
\[
\pr\mleft(i=\mleft\lfloor \log \frac{1}{b-p}\mright\rfloor\mid A\mright)=\frac{1}{\log T},
\]
which is well defined since
\[
p\le \frac{s+b}{2}\le b-\frac{1}{T}.
\]
Let $\cE'$ denote the event in which $i=\lfloor \log (1/(b-p))\rfloor$. Under $\cE$ and $\cE'$ we get
\[
q=p+2^{-i}=p+2^{-\lfloor \log (1/(b-p))\rfloor}\ge p+2^{-(\log (1/(b-p)) +1)}=p+\frac{b-p}{2}=\frac{p+b}{2}.
\]
Therefore, when $b\ge s+\nicefrac{2}{T}$,
\begin{align*}
\E{\prof(p,q)}&=(q-p)\,\frac{b-s}{2P\log T}\\
&\ge \mleft( \frac{p+b}{2}-p \mright)\,\frac{b-s}{2P\log T}\\
& \ge \frac{(b-s)^2}{8P\log T}.
\end{align*}
This concludes the proof.
\end{proof}

\logTgft*
\begin{proof}
    First, we observe that given the sequence $(b_t,s_t)_{t \in \range{T}}$, and a time step $\tau$, by Azuma-Hoeffding inequality we have that, with probability at least $1-\delta$,
    \[  \sum_{t \in \range{\tau}} \prof_t(p_t,q_t) \ge \E{\prof_t(p,q)} - \sqrt{2 \log(1/\delta) \sum_{t \in \range{\tau}} ([b_t-s_t]^+)^2 }. \]
    Then, by applying a union bound, the above inequality holds simultaneously for all possible $\tau$ with probability at least $1-\delta T$. Then, by setting $\delta=T^2$, with probability at least $1-1/T$, it holds: 
\begin{align*}
     & \sum_{t \in \range{\tau}}  \prof_t(p_t,q_t) \\
     & \ge \sum_{t \in \range{\tau}}  \E{\prof_t(p,q)} - \sqrt{4 \log(T) \sum_{t \in \range{\tau}} ([b_t-s_t]^+)^2 }\\
     & \ge \sum_{t \in \range{\tau}} \mleft( \frac{([b_t-s_t]^+)^2}{8P\log T}-\frac{2}{T} \mright) - \sqrt{4 \log(1/\delta) \sum_{t \in \range{\tau}} ([b_t-s_t]^+)^2 } & \textnormal{(by \Cref{lemma:profit})}\\
     & \ge \sum_{t \in \range{\tau}}  \frac{([b_t-s_t]^+)^2}{8P\log T} - \sqrt{4 \log(T) \sum_{t \in \range{\tau}} ([b_t-s_t]^+)^2 } -2\\
     & =\tau \sum_{t \in \range{\tau}}  \frac{1}{\tau}\frac{([b_t-s_t]^+)^2}{8P\log T} - \sqrt{4 \log(T) \sum_{t \in \range{\tau}} ([b_t-s_t]^+)^2 } -2\\
     %& \ge\tau \sum_{t \in \range{\tau}}  \frac{1}{\tau}\frac{([b_t-s_t]^+)^2}{8P\log T} - \sqrt{4 \log(T) \sum_{t \in \range{\tau}} ([b_t-s_t]^+)^2 } -2\\
     & \ge  \frac{\tau}{8P\log T} \mleft(\sum_{t \in \range{\tau}}  \frac{[b_t-s_t]^+}{\tau}\mright)^2 - \sqrt{4 \log(T) \sum_{t \in \range{\tau}} ([b_t-s_t]^+)^2 } - 2& \textnormal{(by Jensen's Inequality)}\\
     %
     %& = \frac{1}{\tau} (\sum_{t \in \range{\tau}}\frac{(b_t-s_t)}{8P\log T})^2 - \sqrt{2 \log(1/\delta) \sum_{t \in \range{\tau}} (b_t-s_t)^2 } -3\\
     &\ge \frac{\alpha}{8 P \log(T)} \sum_{t\in \range{\tau}} [b_t-s_t]^+ - \sqrt{4 \log(T) \sum_{t \in \range{\tau}} ([b_t-s_t]^+)^2 } -2& \textnormal{(by \Cref{assumption:active market})}\\
     &\ge \frac{\alpha}{8 P \log(T)}  \sum_{t\in \range{\tau}} [b_t-s_t]^+ - \sqrt{4 P^2 \log(T)\sum_{t \in \range{\tau}} [b_t-s_t]^+ } -2.
     %&\ge \frac{\alpha}{128 P^2 \log^2(T)}  \sum_{t\in \range{T}} (b_t-s_t)
\end{align*}
This concludes the proof.
%\ac{last two inequalities?}
%where in the last inequality we assume $\sum_{t \in \range{\tau}} (b_t-s_t)\ge \log(T)^2$.
\end{proof}

\oneBitThm*
\begin{proof}
The one-bit algorithm is global budget balanced by construction (see choice of $\budget$).

Then, we condition the high probability regret bound on the following events:
\begin{itemize}
    \item With probability $1-1/T$ the two-bit EOC algorithm guarantees a number of exploration rounds smaller than $|\cte|$ and regret at most $R^{(2)}_T$;
    \item With probability $1-1/T$, it holds the inequality in \Cref{lemma:logT gft}
    \item By Azuma-Hoeffding, with probability at least $1-1/T$ it holds
    \[ \sum_{t=1}^\tau \max_p \egft(x_t,p)\le \sum_{t=1}^\tau [b_t-s_t]^+ + \sqrt{16P^2\tau\log(T)}.\]
\end{itemize}

Then, the regret can be bounded as follows 
\begin{align*}
    R_T&=\sum_{t=1}^T\max_p \egft(x_t,p)-\sum_{t=1}^T\egft(x_t,p_t)\\
    &\le \sum_{t=1}^\tau \max_p \egft(x_t,p) + \sum_{t=\tau+1}^T \mleft(\max_p\egft(x_t,p)-\egft(x_t,p_t)\mright)\\
    &\le \sum_{t=1}^\tau \max_p \egft(x_t,p) + 2|\cte| + R^{(2)}_T\\
    & \le \sum_{t=1}^\tau [b_t-s_t]^+ 
    + \sqrt{16P^2\tau\log(T)} + 2|\cte| + R^{(2)}_T\\
    & \le \sum_{t=1}^\tau [b_t-s_t]^+ 
    + \alpha \tau + 2|\cte| + R^{(2)}_T\\
    & \le 2\sum_{t=1}^\tau [b_t-s_t]^+  + 2|\cte| + R^{(2)}_T,
\end{align*}
where in the second-to-last inequality we use $\tau\ge 1024P^3\alpha^{-2}\log^3T$ (since $\tau\ge \budget/2P$), and the last inequality is by \Cref{assumption:active market}.

Then, by \Cref{lemma:logT gft} and since $\sum_{t=1}^\tau [b_t-s_t]^+\ge \sum_{t=1}^\tau \prof_t(p_t,q_t)\ge 2048P^4\alpha^{-2}\log^3T $, we have
\begin{align*}
\sum_{t \in \range{\tau}} \prof_t(p_t,q_t) &\ge
    \frac{\alpha}{8 P \log(T)}  \sum_{t\in \range{\tau}} [b_t-s_t]^+ - \sqrt{4 P^2 \log(T)\sum_{t \in \range{\tau}} [b_t-s_t]^+ } -2\\
    &\ge \frac{\alpha}{8 P \log(T)}  \sum_{t\in \range{\tau}} [b_t-s_t]^+ - \sqrt{8 P^2 \log(T)\sum_{t \in \range{\tau}} [b_t-s_t]^+ } \\
    &\ge \frac{\alpha}{16 P \log(T)}  \sum_{t\in \range{\tau}} [b_t-s_t]^+.
\end{align*}

Then, since  $\budget\le 2048P^4\alpha^{-2}|\cte|\log^3T$ and $\sum_{t=1}^\tau\prof_t(p_t,q_t)\le \budget+1$, by substituting in the expression above we obtain the desired bound on $\sum_{t\in \range{\tau}} [b_t-s_t]^+$ and hence on the regret. 
\end{proof}

\end{document}